\newcites{apx}{Appendix References}
\newif\ifmapx
\newif\ifshowtodo
\newif\ifshowdetail
\newif\ifvlong
\long\def\vlong#1#2{\ifvlong{#1}\else{#2}\fi}
  \begingroup\color{violet!80!black}
\renewcommand{\vect}[1]{{ #1}}
\newcommand{\lqr}[1]{ \mathrm{LQR} \left(#1 \right) } 
\newcommand{\wmse}[1]{ \mathrm{WMSE} \left(#1 \right) }
\title{Rate-cost tradeoffs in control}
\author{
Victoria Kostina, Babak Hassibi \vspace{-1.5em}
\thanks{
The authors are with California Institute of Technology (e-mail: \href{mailto:vkostina@caltech.edu}{vkostina@caltech.edu}, \href{mailto:hassibi@caltech.edu}{hassibi@caltech.edu}). 

A part of this work was presented at the 54th Annual Allerton Conference on Communication, Control and Computing \cite{kostina2016control}. 

The work of Victoria Kostina was supported in part by the National Science Foundation (NSF)
under grants CCF-1566567 and CCF-1751356. The work of Babak Hassibi was supported in part by the NSF under grants CNS-0932428, CCF-1018927, CCF-1423663 and CCF-1409204, by a grant from Qualcomm Inc., by NASA's Jet Propulsion Laboratory through the President and Director's Fund, and by King Abdullah University of Science and Technology.

}

}
\begin{document}
\maketitle
\begin{abstract}
Consider a control problem with a communication
channel connecting the observer of a linear stochastic system to the controller. The goal of the controller is to minimize a quadratic cost function in the state variables and control signal, known as the linear quadratic regulator (LQR). We study the fundamental tradeoff between the communication rate $r$ bits/sec and the expected cost $b$. We obtain a lower bound on a certain rate-cost function, which quantifies the minimum directed mutual information between the channel input and output that is compatible with a target LQR cost. The rate-cost function has operational significance in multiple scenarios of interest: among others, it allows us to lower-bound the minimum communication rate for fixed and variable length quantization, and for control over noisy channels. We derive an explicit lower bound to the rate-cost function, which applies to the vector, non-Gaussian, and partially observed systems, thereby extending and generalizing an earlier explicit expression for the scalar Gaussian system, due to Tatikonda el al. \cite{tatikonda2004stochastic}.   
The bound applies as long as the differential entropy of the system noise is not $-\infty$. 
It can be closely approached by a simple lattice quantization scheme that only quantizes the innovation, that is, the difference between the controller's belief about the current state and the true state. Via a separation principle between control and communication, similar results hold for causal lossy compression of additive noise Markov sources. Apart from standard dynamic programming arguments, our technical approach leverages the Shannon lower bound, develops new estimates for data compression with coding memory, and uses some recent results on high resolution variable-length vector quantization to prove that the new converse bounds are tight.

\end{abstract}

\begin{IEEEkeywords}
Linear stochastic control, LQR control, remote control, rate-distortion tradeoff, high resolution, causal rate-distortion theory, Gauss-Markov source.
\end{IEEEkeywords}

\section{Introduction}

\subsection{System model}
Consider a discrete-time stochastic linear system: 
 \begin{align}
\vect{X}_{i+1} &=  \mathsf A \vect{X}_i + \mathsf B \vect{U}_i + \vect{V}_i, \label{eq:system1}
\end{align}
where $X_i \in \mathbb R^n$ is the state, $\vect{V}_i \in \mathbb R^n$ is the process noise, $\vect{U}_i \in \mathbb R^m$ is the control action, and $\mathsf A$  and $\mathsf B$ are fixed matrices of dimensions $n \times n$ and $n \times m$, respectively.
 At time $i$, the controller observes output $G_i$ of the channel, and chooses a control action $U_i$ based on the data it has observed up to time $i$.  %
At time $i$, the encoder observes the output of the sensor $Y_i \in \mathbb R^k$:
 \begin{align}
\vect{Y}_i &= \mathsf C X_i + W_i,  \label{eq:system2}
\end{align}
 where $\mathsf C$ is a $k \times n$ deterministic matrix, and $W_i \in \mathbb R^k$ is the observation noise. The encoder forms a codeword $F_i$, which is then passed through the channel. Like the controller, the encoder has access to the entire history of the data it has observed. %
 See Fig. \ref{fig:system}. 
 
We assume that system noises $\vect{V}_1, V_2 \ldots$ are i.i.d. zero-mean, that observation noises $W_1, W_2, \ldots$ are  zero-mean, i.i.d. independent of $\{W_i\}_{i=1}^\infty$, and that $X_1$ is zero-mean and independent of $\{V_i, W_i\}_{i=1}^\infty$. We make the usual assumption that the pair $(\mathsf A, \mathsf B)$ is controllable and that the pair $(\mathsf A, \mathsf C)$ is observable.
 If the encoder observes the full system state, i.e. $Y_i = X_i$ (rather than its noise-corrupted version as in \eqref{eq:system2}), then we say that the system is \emph{fully observed} (rather than \emph{partially observed}).

\vspace{5pt}
\begin{figure}[htp]
\begin{center}
    \epsfig{file=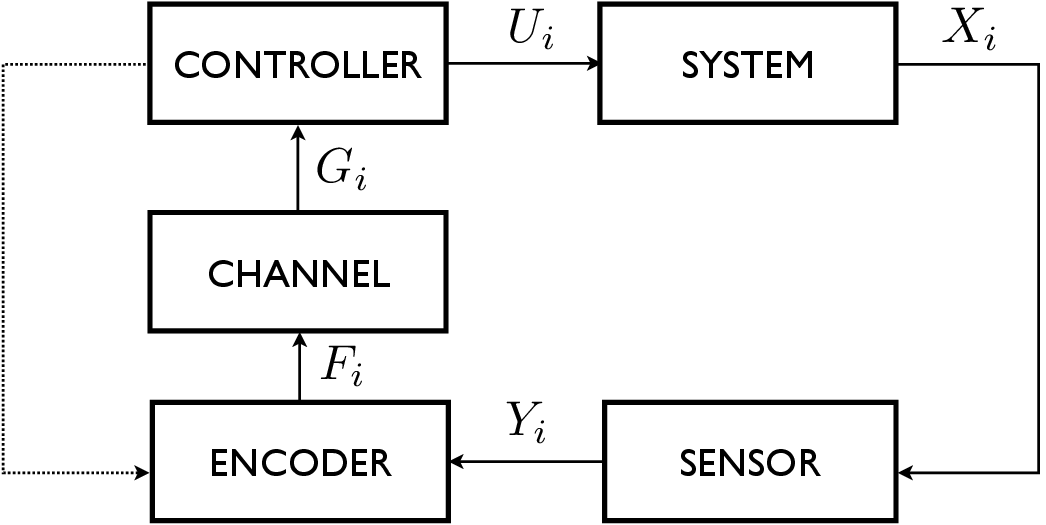,width=.7\linewidth}
\end{center}
 \caption[]{The distributed control system.} \label{fig:system}
\end{figure}

\emph{Notation:} Capital letters $X$, $Y$ denote (spatial) random vectors; $X^t \triangleq (X_1, \ldots, X_{t})$ denotes temporal random vectors, or the history of vector samples up to time $t$; $X_i^t \triangleq (X_i, \ldots, X_t)$ (empty if $t < i$); $X^\infty \triangleq ( X_1, X_2, \ldots )$; for i.i.d. random vectors, $X$ denotes a random vector distributed the same as each of $X_1, X_2, \ldots$; $\mathcal D$ represents a delay by one, i.e. $\mathcal D X^{t} \triangleq (0, X^{t-1})$; $\mathsf \Sigma_X \triangleq \E{(X - \E{X}) (X- \E{X})^T}$ denotes the covariance matrix of random vector $X$; $X \pperp Y$ reads ``$X$ is independent of $Y$''.  Sans-serif capitals $\mat A$, $\mat B$ denote  constant matrices; $\mat A \succeq \mat B$ ($\succ \mat B$) signifies that $\mat A - \mat B$ is positive semidefinite (definite); $\mat I_n$ is the $n \times n$ identity matrix; $\mat 0_{k \times n}$ is the $k \times n$ all-zero matrix. Lowercase letters $a, b, \ldots$ denote known scalars. $\|\cdot \|$ denotes the Euclidean norm, $\triangleq$ reads ``by definition''; $|\cdot|_+ \triangleq \max\{0, \cdot\}$.

\subsection{The rate-cost tradeoff}

The efficiency of a given control law at time $t$ is measured by the linear quadratic regulator (LQR) cost function:\footnote{As common in information theory, here we abuse the notation slightly and  write $\lqr{X^{t}, U^{t-1}}$ to mean that $\lqr{}$ is a function of the joint distribution of $X^{t}, U^{t-1}$.}
\begin{flalign}
& \mathrm{L\hspace{-.5pt}Q\hspace{-.5pt}R}({X^{t} \!, U^{t-1} })  \label{eq:lqr}%
 \triangleq  \E{ \! \sum_{i = 1}^{t-1}\!\! \left( \vect X_{i}^T \mathsf Q X_{i} \!\! +\! \vect U_i^T \mathsf R U_i \right) \!\!+ \!\! \vect X_{t}^T \mathsf S_{t} X_{t}\! } \!\!  ,\hspace{-2em}& 
\end{flalign}
where $\mathsf Q \succeq \mat 0$, $\mathsf R \succeq \mat 0$ and $\mathsf S_{t} \succeq \mat 0$. The LQR cost balances between the deviation of the system from the desired state $\mathbf 0$ and the  control power, which are defined with respect to the norms induced by the matrices $\mathsf Q$ (and $\mathsf S_{t}$) and $\mathsf R$.
In the special case $\mathsf Q = \mathsf I_n$, $\mathsf R = \mathsf 0$ and $\mathsf S_{t + 1} = \mathsf I_n$, the cost function in \eqref{eq:lqr} is the average mean-square deviation of the system from $\mathbf 0$, $\E{\sum_{i = 1}^{t} \|\vect X_{i}\|^2}$. 

Given a joint distribution of random vectors $U^t$ and $Y^t$, the directed mutual information is defined as \cite{massey1990causality}
\begin{equation}
I(Y^t \to U^t) \triangleq \sum_{i = 1}^{t}  I(Y^i; U_i | U^{i-1}). \label{eq:Idir}
\end{equation}
Directed mutual information, which captures the information due to causal dependence of $U^t$ on $Y^t$, and which is less than or equal to the full mutual information $I(Y^t; U^t)$, has proven useful in communication problems where causality and feedback play a role. Given a joint distribution $P_{Y^t U^t}$, it is enlightening to consider \emph{causally conditional} probability kernel \cite{kramer1998PhD}
\vspace{-1em}
\begin{equation}
 P_{U^t || Y^t} \triangleq \prod_{i=1}^{t} P_{U_i | U^{i-1}, Y^i} \label{eq:causalcond}
\end{equation}
Note that $P_{Y^t U^t} = P_{Y^t \| \mathcal D U^{t}} P_{U^t \| Y^t}$.
In \figref{fig:system}, the system dynamics \eqref{eq:system1}, \eqref{eq:system2} fixes the kernels $P_{Y^t \| \mathcal D U^{t}}$, $t = 1, 2, \ldots$, while the causal channels $P_{U^t \| Y^t}$ comprise the encoder, the channel and the controller. 

The following information-theoretic quantity will play a central role in determining the operational fundamental limits of control under communication constraints.  
\begin{defn}[rate-cost function]
The rate-cost function of the dynamical system $\{P_{Y^t \| \mathcal D U^{t}}\}_{t=0}^\infty$ is defined as
\begin{equation}
\mathbb R(b) \triangleq \limsup_{t \to \infty}  \\
 \inf_{\substack{ P_{U^t \| Y^t}  \colon \\
\frac 1 t \lqr{X^{t},\, U^{t-1}} \leq b}} 
 \frac 1 t I(Y^t \to U^t) 
 \label{eq:Rbo}
\end{equation}
\label{defn:rb}
\end{defn}
\vspace{-1em}
 In this paper, we will show a simple lower bound to the rate-cost function \eqref{eq:Rbo} of the stochastic linear system \eqref{eq:system1}, \eqref{eq:system2}. Although $\mathbb R(b)$ does not have a direct operational interpretation unless the channel is probabilistically matched~\cite{gastpar2003tocodeornot} to the system, it is 
linked to the minimum data rate required to keep the system at LQR cost $b$, over both noiseless and noisy channels. Namely, we will show that $\mathbb R(b)$ provides a lower bound on the minimum capacity of the channel necessary to sustain LQR cost $b$, valid for any encoder/controller pair. We will also show that over noiseless channels, $\mathbb R(b)$ can be closely approached by a simple variable-length lattice-based quantization scheme that transmits only the innovation. 

\subsection{Prior art}

The analysis of control under communication constraints has a rich history.
 The first results on the minimum data rate required for stabilizability appeared in \cite{baillieul1999feedback,wong1999systems}.  These works analyze the evolution of a scalar system from a worst-case perspective. In that setting, the initial state $X_1$ is assumed to belong to a bounded set, the process noise $V_1, V_2, \ldots$ is assumed to be bounded, and the system is said to be {\it stabilizable} if there exists a (rate-constrained) control sequence such that the worst-case deviation of the system state from the target state $\mathbf 0$ is bounded: $\limsup_{t \to \infty} \| X_t\| < \infty$. In \cite{baillieul1999feedback,wong1999systems}, it was shown that a fully observed unstable scalar system can be kept bounded by quantized control if and only
if the available data rate exceeds $\log \mat A$ bits per sample. Tatikonda and Mitter \cite{tatikonda2004control} generalized this result to vector systems; namely, they showed that the necessary data rate to stabilize a vector system with bounded noise is at least
\begin{equation}
r > \sum_{i \colon |\lambda_i(\mathsf A)| \geq 1} \log |\lambda_i(\mathsf A)|, \label{eq:tatimi}
\end{equation}
where the sum is over the unstable eigenvalues of $\mathsf A$, i.e. those eigenvalues whose magnitude exceeds $1$. Compellingly, \eqref{eq:tatimi} shows that only the nonstable modes of $\mat A$ matter; the stable modes can be kept bounded at any arbitrarily small quantization rate (and even at zero rate if $V_t \equiv 0$).  
Using a volume-based argument, Nair et al. \cite{nair2007feedback} showed a lower bound to quantization rate in order to attain $\limsup_{t \to \infty} \| X_t\| \leq d$, thereby refining \eqref{eq:tatimi}. Nair et al. \cite{nair2007feedback} also presented an achievability scheme confirming that for scalar systems, that bound is tight.

Nair and Evans \cite{nair2004stabilizability} showed for systems with unbounded process and observation disturbances, Tatikonda and Mitter's condition on the rate \eqref{eq:tatimi} continues to be necessary and sufficient in order to keep the mean-square deviation of the plant state from $\mathbf 0$ bounded, that is, in order to satisfy $\limsup_{t \to \infty} \E{\| X_t\|^2} < \infty$.

Nair and Evans' converse bound \cite{nair2004stabilizability} applies to fixed-rate quantizers, that is, to compressors whose outputs can take one of $2^r$ values. Time-invariant fixed-rate quantizers are unable to attain bounded cost if the noise is unbounded \cite{nair2004stabilizability}, regardless of their rate. The reason is that since the noise is unbounded, over time, a large magnitude noise realization will inevitably be encountered, and the dynamic range of the quantizer will be exceeded by a large margin, not permitting recovery. Adaptive quantization schemes, which ``zoom out'' (i.e. stretch the quantization intervals) when the system is far from the target and  ``zoom in'' when the system is close to the target, are studied in \cite{brockett2000quantized,nair2004stabilizability,yuksel2010fixedrate,yuksel2014lqg}. Structural properties of optimal zero-delay quantizers for the compression of Markov sources were investigated in \cite{witsenhausen1979structure,gaarder1982optimal,walrand1983optimal,borkar2001optimal,teneketzis2006structure,linder2014zerodelay,wood2015walrand,yuksel2008ita}.

In variable-rate (or \emph{length}) quantization, the quantizer can have a countably infinite number of quantization cells.  \emph{Entropy coding} is applied to encode the indices of quantization cells, so that the more likely quantization cells have a shorter description and the less likely ones a longer one. %
Elia and Mitter \cite{elia2001stabilization} considered stabilization of a noiseless linear system controlled with a variable-length scalar quantizer, and showed that for a certain notion of {\it coarseness}, the coarsest quantizer has levels that follow a logarithmic law.   

Beyond worst-case and mean-square stabilizability, Tatikonda et al. \cite{tatikonda2004stochastic} considered a setting known as linear quadratic Gaussian (LQG) control (\eqref{eq:system1}, \eqref{eq:system2} with Gaussian disturbances and LQR cost function in \eqref{eq:lqr}) with communication constraints and tied the minimum attainable LQG cost to the Gaussian {\it causal rate-distortion function}, introduced decades earlier by Gorbunov and Pinsker \cite{gorbunov1974prognostic}, which is equal to the minimal (subject to a distortion constraint) directed mutual information between the stochastic process and its quantized representation \cite{charalambous2014nonanticipative}. Stabilizability of LQG systems under a directed mutual information constraint was studied in \cite{charalambous2008LQGoptimality}. The problem of minimizing an arbitrary cost function in control of a general process under a directed mutual information constraint was formulated in \cite{charalambous2011memoryfeedback}. Control of general Markov processes under a mutual information constraint was studied in \cite{shafieepoorfard2016rationally}. Silva et al. \cite{silva2011framework} elucidated the operational meaning of directed mutual information, by pointing out that it lower-bounds the rate of a quantizer embedded into a feedback loop of a control system, and by showing that the bound is approached to within 1 bit by a {\it dithered} prefix-free quantizer, a compression setting in which both the compressor and the decompressor have access to a common dither - a random signal with special statistical properties. More recently, Silva et al.~\cite{silva2016characterization} computed a lower bound to the minimum quantization rate in scalar Gaussian systems with stationary disturbances and proposed a dithered quantization scheme that performs within 1.254 bits from it. Tanaka et al. \cite{tanaka2016isit} generalized the results of \cite{silva2016characterization} to vector systems. A connection between causal rate-distortion function and Kalman filtering using dithered variable-length quantizers was explored in \cite{stavrou2017ITWKalman}.

Causal rate-distortion function is challenging to evaluate, and beyond the scalar Gauss-Markov source \cite{gorbunov1974prognostic,tatikonda2004stochastic}, no closed-form expression is known for it. For stationary scalar Gaussian processes, Derpich and Ostergaard \cite{derpich2012uppercausal} showed an upper bound and Silva et al.~\cite{silva2016characterization} a lower bound. For vector Gauss-Markov sources, Tanaka et al. developed a semidefinite program to compute exactly the minimum directed mutual information in quantization~\cite{tanaka2017semidefinite} and control~\cite{tanaka2017lqg}.

\subsection{Our contribution}

In this paper, we show a lower bound to $\mathbb R(b)$ of a \emph{fully observed} system. We do not require the noise $V_i$ to be bounded or Gaussian.  We also show that \eqref{eq:tatimi} remains necessary to keep the LQR cost bounded, even if the system noise is non-Gaussian, generalizing previously known results. Although our converse lower bound holds for a general class of codes that can take full advantage of the memory of the data observed so far and that are not constrained to be linear or have any other particular structure, we show that the new bound can be closely approached within a much more narrow class of codes. Namely, a simple variable-length quantization scheme, which uses a lattice covering and which only transmits the difference between the controller's estimate about the current system state and the true state, performs within a fraction of a bit from the lower bound, with a vanishing gap as $b$ approaches its minimum attainable value, $b_{\min}$. The scheme is a variant of a classical differential pulse-code modulation (DPCM) scheme, in which a variable-length lattice code is used to encode the innovation process. Unlike previously proposed quantization schemes with provable performance guarantees, our scheme does not use dither. 

 Our results generalize to \emph{partially observed} systems, where the encoder does not have access to $X_i$ but only to its noise-corrupted version, $Y_i$. For those results to hold, we require the system and observation noises to be Gaussian. 
 
Our approach is based on a new lower bound to causal rate-distortion function, termed the \emph{causal Shannon lower bound} (\thmref{thm:seqslb} in \secref{sec:causalRd} below), which holds for  vector Markov sources with continuous additive disturbances, as in \eqref{eq:system1}.  For the scalar Gauss-Markov source, the bound coincides with a previously known expression \cite{gorbunov1974prognostic}. %

\subsection{Technical approach}

The main idea behind our approach to show a converse (impossibility) result is to recursively lower-bound distortion-rate functions arising at each step. We apply the classical Shannon lower bound  \cite{shannon1959coding}, which bounds the distortion-rate function $X$ in terms of the entropy power of $X$, and we use the entropy power inequality \cite{shannon1948mathematical,stam1959some} to split up the distortion-rate functions of sums of independent random variables. Since Shannon's lower bound applies regardless of the distribution of the source random variable, our technique circumvents a precise characterization of the distribution of the state at each time instant. The technique also does not restrict the system noises to be Gaussian.  

To show that our bound can be approached at high rates, we build on the ideas from high resolution quantization theory. A pioneering result of Gish and Piece \cite{gish1968asymptotically} states that in the limit of high resolution, a uniform scalar quantizer incurs a loss of only about $\frac 1 2 \log_2 \frac{2 \pi e}{12} \approx 0.254$ bits per sample.  Ziv \cite{ziv1985universal} showed that regardless of target distortion, the normalized output entropy of a dithered scalar quantizer exceeds that of the optimal vector quantizer by at most $\frac 1 2 \log \frac {4 \pi e} {12} \approx 0.754$ bits per sample. A lattice quantizer presents a natural extension of a scalar uniform quantizer to multiple dimensions. The advantage of lattice quantizers over uniform scalar quantizers is that the shape of their quantization cells can be made to approach a Euclidean ball in high dimensions \cite{zamir1996onlatticenoise}. %
Furthermore, the entropy rate of dithered lattice quantizers  converges to Shannon's lower bound in the limit of vanishing distortion \cite{gersho1979asymptotically,zamir1992universal,linder1994tessellating}. 

While the presence of a dither signal both at the encoder and the decoder greatly simplifies the analysis and can improve the quantization performance, it also complicates the engineering implementation. In this paper, we do not consider dithered quantization. Neither do we rely directly on the classical heuristic reasoning by Gish and Piece \cite{gish1968asymptotically}. Instead, we use a non-dithered lattice quantizer followed by an entropy coder. To rigorously prove  that its performance approaches our converse bound, we employ a recent upper bound \cite{kostina2016lowd} on the output entropy of lattice quantizers in terms of the differential entropy of the source, the target distortion and a smoothness parameter of the source density.

\subsection{Paper organization}
In \secref{sec:main}, we state and discuss our main results: \secref{sec:fo} focuses on the scenario where the observer sees the system state (fully observed system), \secref{sec:po} discusses a generalization to the scenario where the observer sees a noisy measurement of the system state (partially observed system), and \secref{sec:oper} discusses the operational implications of our bounds in 
the settings of fixed-rate quantization, variable-rate quantization and joint source-channel coding. 
In \secref{sec:causalRd}, we introduce the causal lossy compression problem, and we state the causal Shannon lower bound, together with a matching achievability result. In \secref{sec:sep}, we discuss separation between control, estimation and communication, a structural result that allows us to disentangle the three tasks. 
The proofs of the converse results are given in \secref{sec:c}, and the achievability schemes are presented in \secref{sec:a}.

\section{Main results}
\label{sec:main}

\subsection{Fully observed system}
\label{sec:fo}

In the absence of communication constraints, the minimum LQR cost attainable in the limit of infinite time is:
\begin{equation}
 b_{\min} = %
  \tr(\mathsf \Sigma_V \mathsf S), \label{eq:cmin}
\end{equation}
where $\mathsf \Sigma_V$ is the covariance matrix of each of the $V_1, V_2, \ldots$, and $\mathsf S$ is the solution to the algebraic Riccati equation
\begin{align}
\mathsf S &= \mathsf Q + \mathsf A^T \left( \mathsf S -  \mathsf M  \right) \mathsf A,\label{eq:S}\\
\mathsf M &\triangleq \mat L^T ( \mathsf R + \mathsf B^T \mathsf S \mathsf B) \mat L%
= \mathsf S \mathsf B  ( \mathsf R + \mathsf B^T \mathsf S \mathsf B)^{-1}  \mathsf B^T \mathsf S \label{eq:M}, \\
\mathsf L &\triangleq 
( \mathsf R + \mathsf B^T \mathsf S \mathsf B)^{-1} \mathsf B^T \mathsf S. \label{eq:L}
\end{align}

Our results quantifying the overhead over \eqref{eq:cmin} due to communication constraints are expressed in terms of the entropy power of the system and observation noises.  
The entropy power of an $n$-dimensional random vector $X$ is \footnote{All $\log$'s and $\exp$'s are common arbitrary base specifying the information units.}
\begin{equation}
 N(X) \triangleq \frac 1 {2 \pi e} \exp\left( \frac 2 n h(X) \right), \label{eq:ep}
\end{equation}
where $h(X) = -\int_{\mathbb R^n} f_X(x) \log f_X(x) dx$ is the differential entropy of $X$, and $f_X(\cdot)$ is the density of $X$ with respect to the Lebesgue measure on $\mathbb R^n$. 
The entropy power satisfies the following classical inequalities:  
\begin{align}
 N(X)  &\leq \left( \det \mathsf \Sigma_X \right)^{\frac 1 {n}} \leq \frac 1 {n} \Var{X}. \label{eq:epgauss} 
\end{align}
The first equality in \eqref{eq:epgauss} is attained if and only if  
 $X$ is Gaussian and the second if and only if $X$ is white.

Our first result is a lower bound on the rate-cost function. 
\begin{thm}
\label{thm:main} 
Consider the fully observed linear stochastic system \eqref{eq:system1}. Suppose that $h(V) > -\infty$.  At any LQR cost $b > \tr(\mathsf \Sigma_V \mathsf S)$, the rate-cost function is bounded below as 
\begin{align}
\!\!\!\! \mathbb R(b)  \geq \label{eq:main} %
&~  \log |\det \mathsf A |+ \frac {n} 2 \log \left( 1 + \frac{ N(V) |\det \mathsf M |^{\frac 1 n}}{(b - \tr(\mathsf \Sigma_V \mathsf S))/n }\right)\!\!, \!
\end{align}
where $\mathsf M$ is defined in \eqref{eq:M}. 
\end{thm}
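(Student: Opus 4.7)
My approach is to invoke the separation principle between control and estimation (established later in the paper) to convert the LQR-cost constraint into a weighted mean-square-error constraint on the controller's MMSE estimate, and then to lower-bound the directed information using Shannon's lower bound together with the (conditional) entropy power inequality (EPI).  The separation principle guarantees that for any feasible scheme, the estimate $\hat X_{i|i} = \mathbb{E}[X_i \mid U^i]$ satisfies $\frac{1}{t}\sum_{i=1}^{t-1} d_i \leq b - \tr(\mathsf \Sigma_V \mathsf S)$, where $d_i = \mathbb{E}[(X_i - \hat X_{i|i})^T \mathsf M (X_i - \hat X_{i|i})]$.  Since $Y^i = X^i$ in the fully observed case,
\[
I(Y^t \to U^t) = \sum_{i} I(X^i; U_i \mid U^{i-1}) \geq \sum_{i}\bigl[h(X_i \mid U^{i-1}) - h(X_i \mid U^i)\bigr].
\]

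Two classical tools control the conditional entropies on the right.  On the posterior side, the maximum-entropy principle under the weighted-MSE budget yields $h(X_i\mid U^i) \leq \frac{n}{2}\log(2\pi e\,d_i/n) - \frac{1}{2}\log|\det \mathsf M|$, equivalently $n_i := N(X_i\mid U^i) \leq d_i/(n|\det \mathsf M|^{1/n})$.  On the prior side, $h(X_i \mid U^{i-1}) = h(\mathsf A X_{i-1} + V_{i-1} \mid U^{i-1})$ since $\mathsf B U_{i-1}$ is $U^{i-1}$-measurable, and the conditional EPI, applicable because $V_{i-1}$ has a density and is independent of $(X_{i-1}, U^{i-1})$, gives
\[
N(X_i \mid U^{i-1}) \geq |\det \mathsf A|^{2/n}\, n_{i-1} + N(V).
\]
Combining, $I(X^i; U_i \mid U^{i-1}) \geq \frac{n}{2}\log\bigl((|\det \mathsf A|^{2/n} n_{i-1} + N(V))/n_i\bigr)$.

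Summing over $i$ and using the factorization $|\det \mathsf A|^{2/n} n_{i-1} + N(V) = |\det \mathsf A|^{2/n}\bigl(n_{i-1} + N(V)/|\det \mathsf A|^{2/n}\bigr)$, the telescope extracts $\log|\det \mathsf A|$ per time step and collapses each $\log n_{i-1}$ against $\log n_i$, leaving a time-average of terms of the form $\log\bigl(1 + \text{const}/n_i\bigr)$ along with boundary contributions at $i=0$ and $i=t$ that vanish when divided by $t$.  Substituting the upper bound $n_i \leq d_i/(n|\det \mathsf M|^{1/n})$ and applying Jensen's inequality to the convex function $x\mapsto \log(1 + c/x)$ then lets me replace each $d_i$ by its time-average $\bar d \leq b - \tr(\mathsf \Sigma_V \mathsf S)$, producing the claimed bound in the limit $t \to \infty$.

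The hardest step will be managing the interplay between the telescope and Jensen's inequality: the coupling via the EPI makes the prior entropy at step $i$ a function of the posterior entropy at step $i-1$, whereas the cost constraint bounds only the posteriors, so one must thread the inequalities carefully and apply Jensen in the right direction (using convexity of $\log(1+c/x)$, not concavity of $\log$) in order to aggregate the per-step bounds into the claimed time-averaged quantity.  One must also verify that the boundary telescoping terms wash out as $t \to \infty$, and note that the non-Gaussian generality of $V$ comes for free from both the EPI and the weighted max-entropy bound, each of which requires only that the noise admit a density.
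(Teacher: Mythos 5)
Your overall route is the same as the paper's: the separation principle (\thmref{thm:sep}, \corref{cor:link}) reduces the LQR constraint to a weighted mean-square estimation constraint on the controller's posterior, and the rate bound then follows from the max-entropy/Shannon-lower-bound step on the posterior, the conditional entropy power inequality applied to the dynamics, a telescoping sum, and Jensen's inequality for the convex decreasing map $x\mapsto\log(1+c/x)$. The paper packages the recursion as a causal Shannon lower bound for the uncontrolled Markov source (\thmref{thm:seqslb}, proved via the recursion \eqref{eq:step2a0}--\eqref{eq:recur}), but the content is the same as your conditional-entropy-power recursion, so no new idea is missing.

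There is, however, one concrete error that prevents your chain of inequalities from reaching the claimed constant: the weight matrix in the separated distortion is $\mathsf A^T\mathsf M\mathsf A$, not $\mathsf M$ (cf.\ \eqref{eq:Wimat}). Completing the square gives the per-stage excess cost $\E{(U_i-U_i^\star)^T(\mathsf R+\mathsf B^T\mathsf S\mathsf B)(U_i-U_i^\star)}$ with $U_i^\star=-\mathsf L\mathsf A X_i$, so the quantity the controller must estimate at time $i$ is $\mathsf L\mathsf A X_i$, and the induced weight on $X_i-\E{X_i\mid U^i}$ is $(\mathsf L\mathsf A)^T(\mathsf R+\mathsf B^T\mathsf S\mathsf B)(\mathsf L\mathsf A)=\mathsf A^T\mathsf M\mathsf A$. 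With your weight $\mathsf M$ the posterior bound is $n_i\le d_i/(n|\det\mathsf M|^{1/n})$, and the telescope, whose per-step term works out to $\frac n2\log\bigl(1+N(V)/(|\det\mathsf A|^{2/n}n_{i-1})\bigr)$, then yields only
\begin{equation*}
\log|\det\mathsf A|+\frac n2\log\left(1+\frac{N(V)\,|\det\mathsf M|^{1/n}}{|\det\mathsf A|^{2/n}\,(b-\tr(\mathsf\Sigma_V\mathsf S))/n}\right),
\end{equation*}
which is strictly weaker than \eqref{eq:main} whenever $|\det\mathsf A|>1$, i.e.\ precisely in the interesting regime. Using the correct weight multiplies $(\det\mathsf W)^{1/n}$ in the posterior bound by $|\det\mathsf A|^{2/n}$, which exactly cancels the $|\det\mathsf A|^{2/n}$ produced by the EPI; with that correction your argument goes through and recovers the theorem.
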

The bound in \thmref{thm:main} is nontrivial if $\mathsf M \succ \mathsf 0$, which happens if $\rank \mathsf B = n$ and either $\mathsf Q \succ \mathsf 0$ or $\mathsf R \succ \mathsf 0$.
The bound in Theorem \ref{thm:main} continues to hold whether or not at time $i$ the encoder observes the previous control inputs $U_1, U_2, \ldots, U_{i-1}$. 

The right-hand side of \eqref{eq:main} is a decreasing function of $b$, which means that the controller needs to know more information about the state of the system to attain a smaller target cost.  
As an important special case, consider the rate-cost tradeoff where the goal is to minimize the mean-square deviation from the desired state $\mathbf 0$. Then, $\mathsf Q = \mathsf I_{n}$, $\mathsf R = \mathsf 0$, $\mathsf S = \mathsf M = \mathsf I_n$, 
$ b_{\min} = \Var{V}$, 
and \eqref{eq:main} particularizes as
\begin{align}
\!\!\! \mathbb R(b)  &\geq \log |\det \mathsf A | + \frac {n} 2 \log \left( 1 + \frac{N(V)}{(b - \Var{V})/n }\right) 
\label{eq:maincor}.
\end{align}
In another important special case, namely Gaussian $V$, \eqref{eq:main} particularizes as
\begin{align}
\!\!\! \mathbb R(b)  \geq  %
&~  \log |\det \mathsf A |+ \frac {n} 2 \log \left( 1 + \frac{  |\det \mathsf \Sigma_{V} \mathsf M |^{\frac 1 n}}{(b - \tr(\mathsf \Sigma_V \mathsf S))/n }\right)\!\!. \!\! \label{eq:maing}
\end{align}
For the scalar Gaussian system, \eqref{eq:maing} holds with equality. This is a consequence of known analyses \cite{gorbunov1974prognostic},  \cite{tatikonda2004stochastic}, \cite[Th.~3]{derpich2012uppercausal} (see also Remarks \ref{rem:gm} and \ref{rem:steady} in \secref{sec:causalRd} below). %

A typical behavior of \eqref{eq:maincor} is plotted in \vlong{\figref{fig:rdg} and }{}\figref{fig:rdl} as a function of target cost $b$. As $b \downarrow b_{\min}$, the required rate $\mathbb R(b) \uparrow \infty$. Conversely, as $b \uparrow \infty$, the rate  monotonically decreases and approaches $\log |\det \mathsf A|$. 
The rate-cost tradeoff provided by Theorem \ref{thm:main} can serve as a gauge for choosing an appropriate communication rate in order to meet the control objective. For example, in the setting of Fig. \ref{fig:rdl}, decreasing the data rate below 1 nat per sample incurs a massive penalty in cost, because the bound is almost flat in that regime. On the other hand, increasing the rate from 1 to 3 nats per sample brings a lot of improvement in cost, while further increasing it beyond $3$ nats results in virtually no improvement. 

Also plotted in \vlong{\figref{fig:rdg} and }{}\figref{fig:rdl} is the output entropy of a variable-rate uniform scalar quantizer that takes advantage of the memory of the past only through the innovation, i.e. the difference between the controller's prediction of the state at time $i$ given the information the controller had at time $i-1$ and the true state (see Section \ref{sec:a} for a precise description of the quantizer).    Its performance is strikingly close to the lower bound, being within $0.5$ nat even at large $b$, despite the fact that quantizers in this class cannot attain the optimal cost  exactly \cite{fu2012lack}. The gap further vanishes as $b$ decreases. The gap can be further decreased for multidimensional systems by taking advantage of lattice quantization. These effects are formally captured by the achievability result we are about to present, Theorem \ref{thm:maina}.

\vlong{
\begin{figure}[htp]
\begin{center}
    \epsfig{file=rdGaussian.eps,width=1\linewidth}
\end{center}
 \caption[]{The minimum quantizer entropy compatible with cost $b$ in a fully observed LQG system \eqref{eq:system1} with parameters $n = 1$, $\mathsf A = 2$, $\mathsf B = \mathsf Q = \mathsf R = 1$, $V \sim \mathcal N(0, 1)$.} \label{fig:rdg}
\end{figure}
}{}
\vspace{10pt}
\begin{figure}[htp]
\begin{center}
    \epsfig{file=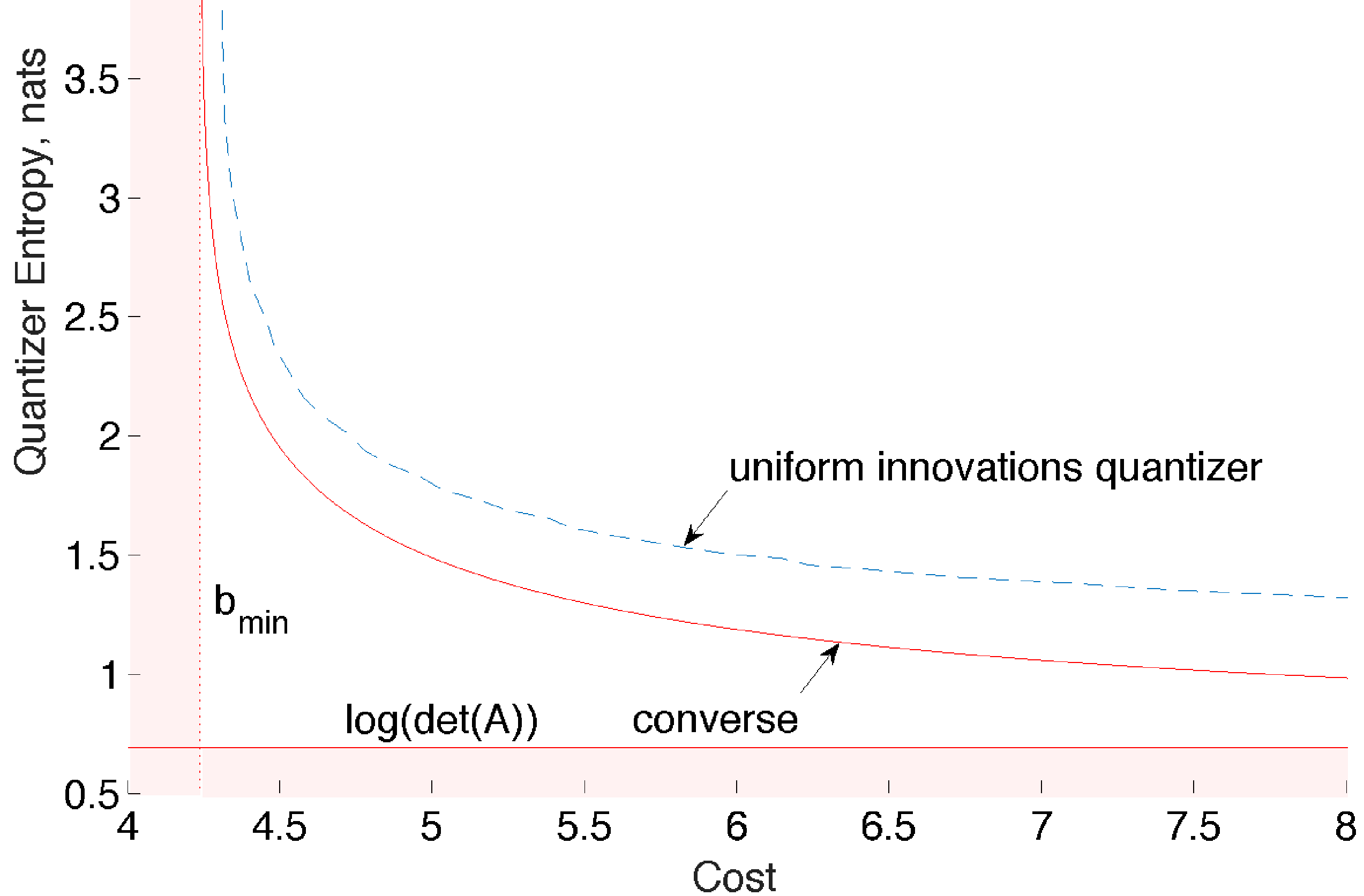,width=1\linewidth}
\end{center}
 \caption[]{The minimum quantizer entropy compatible with cost $b$ in a fully observed system \eqref{eq:system1} with parameters $n = 1$, $\mathsf A = 2$, $\mathsf B = \mathsf Q = \mathsf R = 1$, $V$ has Laplace distribution with variance 1.} \label{fig:rdl}
\end{figure}

Replacing in \defnref{defn:rb} the directed mutual information by the entropy of a causal quantizer, we introduce 
\begin{defn}[entropy-cost function]
\label{defn:hb}
The entropy-cost function of the dynamical system $\{P_{Y^t \| \mathcal D U^{t}}\}_{t=0}^\infty$ is defined as
\begin{equation}
\mathbb H(b) \triangleq \limsup_{t \to \infty}  \\
\inf_{\substack{ P_{U^t \| Y^t}  \colon \\
\frac 1 t \lqr{X^{t},\, U^{t-1}} \leq b}} 
H(U^t ). 
 \label{eq:hb}
\end{equation}
\end{defn}
Since $I(Y^i; U_i | U^{i-1})  \leq H(U_i | U^{i-1})$, we have
\begin{equation}
\mathbb H(b ) \geq \mathbb R(b). \label{eq:hlr}
\end{equation}
On the other hand, there exists a  variable-length quantizer that keeps the system at cost $b$  and whose average encoded length does not exceed $ \mathbb H(b)$ (see Section \ref{sec:oper} for details). Thus, unlike $\mathbb R(b)$, the function $\mathbb H(b)$ has a direct operational interpretation. %

Theorem \ref{thm:maina}, presented next, holds under the assumption that the density of the noise is sufficiently smooth. Specifically, we adopt the following notion of a regular density.
\begin{defn}[Regular density, \cite{polyanskiy2015wasserstein}]
\label{defn:pwreg}
Let $c_0 \geq 0$, $c_1 \geq 0$.
Differentiable probability density function $f_X$ of a random vector $X \in \mathbb R^n$ is called $(c_0, c_1)$-regular if \footnote{As usual, $\nabla$ denotes the gradient.}
\begin{equation}
\| \nabla f_X(x)\| \leq (c_1 \|x\| + c_0) f_X(x),  \qquad \forall x \in \mathbb R^n. %
\label{eq:pwreg}
\end{equation}
\end{defn}
A wide class of densities satisfying smoothness condition \eqref{eq:pwreg} is identified in \cite{polyanskiy2015wasserstein}. Gaussian, exponential, uniform, Gamma, Cauchy, Pareto distributions are all regular. Convolution with Gaussians produces a regular density: more precisely, the density of $B + Z$, with $B \pperp Z$ and $Z \sim \mathcal N(0, \sigma^2\, \mathsf I)$, is $(\frac {4}{\sigma^2} \E{\|B\|}, \frac {3}{\sigma^2})$-regular. Likewise, if the density of $Z$ is $(c_0, ~c_1)$-regular, then that of $B + Z$, where $\|B\| \leq b$ a.s., $B \pperp Z$ is $(c_0 + c_1 b,~ c_1)$-regular.

\begin{thm}
\label{thm:maina} 
Consider the fully observed linear stochastic system \eqref{eq:system1}, $Y_i = X_i$. Suppose that $\mathsf M \succ 0$ and that $V$ has a regular density.  
Then, at any LQR cost $b > \tr(\mathsf \Sigma_V \mathsf S))$, the entropy-cost function is bounded by
\begin{align}
\mathbb H(b)  \leq 
&~  \log |\det \mathsf A |+ \frac {n} 2 \log \left( 1 + \frac{ N(V) |\det \mathsf M|^{\frac 1 n}}{(b - \tr(\mathsf \Sigma_V \mathsf S) )/n }\right) \notag\\
+&~ O_1(\log n) + O_2\left( \left( b - \tr(\mathsf \Sigma_V \mathsf S) \right)^{\frac 1 2} \right),  \label{eq:maina} 
\end{align}
where $O_1(\log n) \leq C_1 \log n$ and $O_2(\xi) \leq C_2 \min \left\{ \xi, c_2 \right\}$ for some nonnegative constants $C_1,\ C_2$ and $c_2$. %
\end{thm}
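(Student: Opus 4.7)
The plan is to prove \eqref{eq:maina} by constructing an explicit DPCM scheme that lattice-quantizes the innovation process and entropy-codes the resulting index. The analysis combines three ingredients: a reduction of the LQR cost constraint to an estimation-error covariance constraint via the separation principle, an $n$-dimensional lattice $\Lambda$ whose normalized second moment approaches the ball limit $\frac{1}{2\pi \mathrm e}$ (which is what makes the $N(V)$ term in \eqref{eq:maina} attainable rather than $(\det \mathsf \Sigma_V)^{1/n}$), and a recent bound of \cite{kostina2016lowd} on the output entropy of a lattice quantizer applied to a regular source density.

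First I would invoke the separation of estimation and control developed in Section \ref{sec:sep}: the optimal control under any communication policy is the certainty-equivalent feedback $U_i = -\mathsf L \tilde X_i$, where $\tilde X_i$ is the controller's post-communication estimate of $X_i$ and $\mathsf L$ is given by \eqref{eq:L}. Substituting into \eqref{eq:lqr} and telescoping the Riccati identity yields
$$\lim_{t \to \infty}\frac{1}{t}\lqr{X^t, U^{t-1}} = \tr(\mathsf \Sigma_V \mathsf S) + \lim_{t \to \infty}\frac{1}{t}\sum_{i=1}^{t-1}\tr(\mathsf M \mathsf \Sigma_{Z_i}),$$
where $Z_i \triangleq \tilde X_i - X_i$. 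It then suffices to design a scheme whose steady-state estimation-error covariance satisfies $\tr(\mathsf M \mathsf \Sigma_Z) \leq b - \tr(\mathsf \Sigma_V \mathsf S)$.

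In the DPCM scheme, both encoder and controller maintain the predictor $\hat X_i \triangleq \mathsf A \tilde X_{i-1} + \mathsf B U_{i-1}$, so the innovation satisfies $e_i \triangleq X_i - \hat X_i = -\mathsf A Z_{i-1} + V_{i-1}$. After a linear pre-whitening that aligns the quantization distortion with the cost geometry $\mathsf M$, the encoder quantizes the resulting vector to the nearest point of $\Lambda$, whose normalized second moment $G(\Lambda)$ is within $O(\log n / n)$ of $\frac{1}{2\pi \mathrm e}$ by Rogers' covering bound \cite{rogers1964packing,zamir1996onlatticenoise}, and whose scale is tuned so that in steady state $\tr(\mathsf M \mathsf \Sigma_Z) = b - \tr(\mathsf \Sigma_V \mathsf S)$. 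Invoking the entropy estimate of \cite{kostina2016lowd} for a $(c_0,c_1)$-regular density yields an upper bound of the form
$$H(Q_\Lambda(e_i)) \leq h(e_i) - \frac{n}{2}\log(2\pi \mathrm e\, G(\Lambda)\, d) + O_2(\sqrt{d}),$$
where $d$ is the per-coordinate second moment of the quantization error; bounding $h(e_i)$ via the entropy power inequality applied to $e_i = -\mathsf A Z_{i-1} + V_{i-1}$, which crucially preserves $N(V)$ rather than $(\det \mathsf \Sigma_V)^{1/n}$, and evaluating at steady state reproduces the leading term of \eqref{eq:maina}, with $O_1(\log n)$ absorbing the $G(\Lambda)$ gap and $O_2$ absorbing the regularity correction.

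The main obstacle will be propagating the regularity of the density $f_{e_i}$ through the DPCM recursion with constants that remain uniform in $i$, and establishing existence of a steady-state distribution for $e_i$. Since $Z_{i-1}$ is supported inside the bounded Voronoi cell of the scaled lattice, the convolution-with-bounded-perturbation rule stated after Definition \ref{defn:pwreg} upgrades regularity of $V$ to regularity of $e_i$, but the regularity constants inflate additively at each recursion step; a contraction argument, or a direct invariant-measure construction for the joint $(Z_i, e_i)$ process, is required so that the coefficient in $O_2(\sqrt{d})$ does not blow up as $t \to \infty$. A secondary subtlety is that no dither is used, so $Z_i$ and $e_i$ are not independent and the clean additive-noise analysis of dithered lattice quantization does not directly apply; one must instead treat $Z_i$ as a bounded error term whose second moment is controlled by $G(\Lambda)\, d$, and verify that the resulting temporal rate and distortion allocations approach their steady-state values, yielding the ``asymptotically uniform'' property asserted in the statement.
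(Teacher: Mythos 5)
Your architecture matches the paper's: $\mathbb H(b)$ is reduced via the separation result (Corollary \ref{cor:link} and \eqref{eq:controlledhcost}) to the causal entropy-distortion function of the uncontrolled process \eqref{eq:St}, which is then bounded by a DPCM scheme that lattice-quantizes the innovation and invokes the entropy bound of \cite{kostina2016lowd}. However, there is a genuine gap at the step where you control $h(e_i)$. You propose to bound $h(e_i)$ for $e_i = -\mathsf A Z_{i-1} + V_{i-1}$ ``via the entropy power inequality, which crucially preserves $N(V)$.'' The EPI gives $N(-\mathsf A Z_{i-1} + V_{i-1}) \geq N(-\mathsf A Z_{i-1}) + N(V_{i-1})$, i.e.\ a \emph{lower} bound $h(e_i) \geq h(V)$; what the achievability argument needs is the \emph{upper} bound $h(e_i) \leq h(V) + O(\sqrt d)$, and the EPI cannot deliver it. The paper gets this upper bound from Proposition \ref{prop:pwu} (Wasserstein continuity of differential entropy for regular densities): since the lattice is chosen with covering radius at most $\sqrt d$, the perturbation $-\mathsf A Z_{i-1}$ is bounded \emph{almost surely}, so $W(e_i, V_{i-1}) = O(\sqrt d)$ and the regularity of $f_V$ converts this into $h(e_i) - h(V_{i-1}) = O(\sqrt d)$, which is exactly the $O_2$ term in \eqref{eq:maina}. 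This is also why the paper selects the lattice for covering efficiency (Rogers' bound, worst-case distortion $\leq d$) rather than for normalized second moment: the a.s.\ bound on $Z_{i-1}$ is load-bearing in both the Wasserstein step and the regularity step.

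The obstacle you flag as the main difficulty --- recursive inflation of the regularity constants and the need for an invariant measure --- does not actually arise. Regularity of $e_i$ is obtained in a single step from Proposition \ref{prop:reg}: $e_i$ is the sum of the \emph{fresh} noise $V_{i-1}$, whose density is $(c_0,c_1)$-regular, and the a.s.-bounded perturbation $-\mathsf A Z_{i-1}$ with $\|\mathsf A Z_{i-1}\| \leq (a_i d)^{1/2}$, yielding a $(c_0 + c_1 (a_i d)^{1/2},\, c_1)$-regular density with constants uniform in $i$; nothing is propagated from $e_{i-1}$. Likewise no steady-state distribution is needed: the per-step distortion is bounded by $d$ deterministically through the covering radius, the per-step entropy bound depends on $i$ only through $w_i = (\det \mathsf W_i)^{1/n}$, $a_i$ and $v_i = \tr(\mathsf \Sigma_V \mathsf W_i)$, and these converge by \eqref{eq:liminfass}, so one simply takes $i \to \infty$ of the bounds. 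Your remark that the absence of dither prevents treating the quantization error as independent additive noise is correct and is precisely why the paper routes the argument through worst-case (covering-radius) bounds rather than the standard dithered-lattice analysis.
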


The first two terms in \eqref{eq:maina} match the first two terms in \eqref{eq:main}. The $O_1(\log n)$ term is the penalty due to the shape of lattice quantizer cells not being exactly spherical, i.e. it is the penalty due to the space-filling loss of the quantizer at finite $n$. In \secref{sec:a}, we provide a precise expression for that term for $n = 1, 2, \ldots$.  The $O_2\left( \left( b - \tr(\mathsf \Sigma_V \mathsf S) \right)^{\frac 1 2} \right)$ is the penalty due to the distribution of the innovation not being uniform. It becomes negligible for small $b - \tr(\mathsf \Sigma_V \mathsf S) $, and the speed of that convergence depends on the smoothness parameters of the noise density.

\thmref{thm:maina} implies that if the channel $F_i \to G_i$ is noiseless, then there exists a quantizer with output entropy given by the right side of \eqref{eq:maina} that attains LQR cost $b> \tr(\mathsf \Sigma_V \mathsf S))$, when coupled with an appropriate controller. In fact, the bound in \eqref{eq:maina} is attainable by a simple lattice quantization scheme that only transmits the innovation of the state (a DPCM scheme). The controller computes the control action based on the quantized data as if it was the true state (the so-called \emph{certainty equivalence} control). 

 Theorem \ref{thm:main} gives a lower (converse) bound on the output entropy of quantizers that achieve the target cost $b$, without making any assumptions on the quantizer structure and permitting the use of the entire history of observation data. Theorem \ref{thm:maina} proves that the converse can be approached by a strikingly simple quantizer coupled with a standard controller, without common randomness (dither) at the encoder and the decoder.  Furthermore, although nonuniform rate allocation across time is allowed by \defnref{defn:hb}, such freedom is not needed to achieve \eqref{eq:maina}; the scheme that achieves \eqref{eq:maina} satisfies $H(U_i |U^{i-1} ) \to r$ in the limit of large~$i$.

Although the bound in \thmref{thm:main} is tight at low $b$ (as demonstrated by \thmref{thm:maina}), if $\mathsf A$ has both stable and unstable eigenvalues and $b$ is large, it is possible to improve the bound in \thmref{thm:main} by projecting out the stable modes of the system. Towards this end, consider a Jordan decomposition of $\mathsf A$:\footnote{A Jordan decomposition of $\mat A$ has been previously applied in the context of control under communication constraints in e.g. \cite{tatikonda2004control,nair2004stabilizability}.} 
\begin{equation}
 \mathsf A = \mathsf J {\mathsf A^\prime}  \mathsf J^{-1}, \label{eq:Aprime}
\end{equation}
where ${\mathsf A}^\prime$ is the Jordan form of $\mat A$, and $\mat J$ is invertible. 
Without loss of generality, assume the eigenvalues of $\mat A^\prime$ are ordered in decreasing magnitude order. Write $\mat A^\prime$ as a direct sum of its Jordan blocks, $\mat A^\prime = \mat A_{1} \oplus \ldots \oplus \mat A_{\bar s}$. For some $s$ such that $1 \leq s \leq \bar s \leq n$, let $\ell = \dim(\mat A_1) + \ldots + \dim(\mat A_s)$ be the dimension of the column space of the first $s$ Jordan blocks. Consider the  orthogonal projection matrix onto that space, $\mat \Pi_\ell \mat \Pi_\ell^T$, where $\mat \Pi_{\ell}$ is a 0-1-valued $n \times \ell$ matrix given by:
\begin{equation}
 \mat \Pi_{\ell} \triangleq 
\begin{bmatrix}
\mat I_{\ell} \\
\mat 0
 \end{bmatrix}
 \label{eq:pil}
\end{equation}

The improvement of \thmref{thm:main} can now be formulated. 
\begin{thm}
\label{thm:mainu}
Consider the fully observed linear stochastic system \eqref{eq:system1}, $Y_i = X_i$. Suppose that $h(V) > -\infty$.  Let $\mat \Lambda$ be a diagonal matrix such that $\mat M^\prime \triangleq \mathsf J^{T} \mathsf M \mathsf J \succeq \mat \Lambda$, where $\mat J$ is defined in \eqref{eq:Aprime}. At any LQR cost $b >  \tr(\mathsf \Sigma_V \mathsf S)$, the rate-cost function is bounded below as
\begin{align}
 \mathbb R(b) \geq \ell \log a^\prime + \frac \ell 2 \log \left( 1 + \frac{\mu^\prime N(\mat \Pi_{\ell}^T \mat J^{-1} V)}{(b - \tr(\mathsf \Sigma_V \mathsf S))/{\ell} }\right).\label{eq:mainpi} 
\end{align}
where $\mat \Pi_{\ell}$ is defined in \eqref{eq:pil}, and
\begin{align}
a^\prime &\triangleq \left|  \det \left( \mathsf \Pi_{\ell}^T \mathsf J^{-1} {\mathsf A}  \mathsf J \mat \Pi_{\ell} \right) \right|^\frac{1}{\ell}, \label{eq:aprime} \\
\mu^\prime &\triangleq \left( \det \mat \Lambda \right)^{\frac 1 {\ell}}. \label{eq:muprime}
\end{align}
\end{thm}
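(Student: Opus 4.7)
The plan is to reduce to the setting of Theorem \ref{thm:main} applied to an $\ell$-dimensional subsystem obtained by first transforming to Jordan coordinates and then projecting onto the subspace spanned by the first $s$ Jordan blocks. The payoff is that $\log|\det \mathsf A|$ gets replaced by $\ell \log a'$ and only the entropy power of the noise projected onto the unstable modes appears, so the bound dominates \eqref{eq:main} whenever the stable modes of $\mathsf A$ absorb part of the total rate budget.

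First I would change coordinates via $X'_i \triangleq \mat J^{-1} X_i$, $V'_i \triangleq \mat J^{-1} V_i$. The encoder's observations transform bijectively, so the directed mutual information $I(Y^t \to U^t)$ is unaffected, and the dynamics become $X'_{i+1} = \mat A' X'_i + \mat J^{-1} \mathsf B\, U_i + V'_i$ with $\mat A'$ in Jordan form. Since $\mat A'$ is block-diagonal and the first $s$ blocks together occupy the leading $\ell\times\ell$ position, the projection $\tilde X_i \triangleq \mat \Pi_\ell^T X'_i$ satisfies an autonomous $\ell$-dimensional recursion
\[
\tilde X_{i+1} = \tilde{\mat A}\, \tilde X_i + \mat \Pi_\ell^T \mat J^{-1} \mathsf B\, U_i + \tilde V_i,
\]
where $\tilde{\mat A} \triangleq \mat \Pi_\ell^T \mat A' \mat \Pi_\ell$ (so $|\det \tilde{\mat A}| = (a')^\ell$) and $\tilde V_i \triangleq \mat \Pi_\ell^T \mat J^{-1} V_i$.

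Next I would transfer the LQR cost bound. As in the proof of Theorem \ref{thm:main}, the Riccati identity produces the decomposition
\[
\lqr{X^t, U^{t-1}} = t\, \tr(\mathsf \Sigma_V \mathsf S) + \sum_{i} \E{(X_i - \hat X_i)^T \mathsf M (X_i - \hat X_i)},
\]
where $\hat X_i$ is the controller's MMSE estimate of $X_i$ given past decoder outputs. In Jordan coordinates the cross term equals $(X'_i - \hat X'_i)^T \mat M' (X'_i - \hat X'_i)$, and invoking $\mat M' \geq \mat \Lambda$ with $\mat \Lambda$ diagonal (hence block-diagonal) it dominates $(\tilde X_i - \tilde{\hat X}_i)^T \tilde{\mat \Lambda} (\tilde X_i - \tilde{\hat X}_i)$, where $\tilde{\mat \Lambda}$ is the relevant leading $\ell\times\ell$ block of $\mat \Lambda$ with $(\det \tilde{\mat \Lambda})^{1/\ell} = \mu'$. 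Hence $b - \tr(\mathsf \Sigma_V \mathsf S)$ upper bounds the time-averaged $\tilde{\mat \Lambda}$-weighted estimation error on $\tilde X_i$. With this reduction, I would rerun the three-ingredient argument of Theorem \ref{thm:main} verbatim on the projected subsystem: a dynamic-programming/data-processing step decomposing $I(Y^t \to U^t)$ into per-step rates, the Shannon lower bound in $\ell$ dimensions with weight $\tilde{\mat \Lambda}$ translating each per-step distortion into $\ell\, \mu'\, \exp(2R_i/\ell)\, N(\tilde X_i \,|\, \mathrm{past})$, and the entropy power inequality giving the recursion $N(\tilde X_{i+1} \,|\, \mathrm{past}) \geq (a')^2\, N(\tilde X_i \,|\, \mathrm{past}) + N(\tilde V)$. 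Passing to the limit $t \to \infty$ yields \eqref{eq:mainpi}.

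The main obstacle is the cost-transfer step: one must verify that the $\mathsf M$-weighted estimation-error decomposition underlying Theorem \ref{thm:main} remains intact when the encoder's policy uses the full observation history and when $U_i$ simultaneously drives the stable and unstable subspaces, and the normalization of $\mat \Lambda$ must be chosen carefully so that $(\det \tilde{\mat \Lambda})^{1/\ell}$ exactly matches $\mu'$ in the Shannon-lower-bound coefficient on the projected subsystem. Once this reduction is in hand, the remaining Shannon/EPI chain is a direct transcription of the argument for Theorem \ref{thm:main}, specialized from dimension $n$ to dimension $\ell$.
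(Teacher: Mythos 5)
Your overall strategy---pass to Jordan coordinates, project onto the first $s$ Jordan blocks, and rerun the Shannon-lower-bound/EPI recursion on the resulting $\ell$-dimensional subsystem---is exactly the paper's route (the paper packages the projected recursion as \thmref{thm:seqslb+} and the cost transfer as \corref{cor:link}). However, your cost-transfer step contains an error that propagates into a strictly weaker final bound. The completion-of-squares decomposition (\thmref{thm:sep} together with \eqref{eq:Wimat}) weights the controller's state-estimation error by $\mat A^T \mat M \mat A$, not by $\mat M$: the per-step excess cost is
$\E{(\mat L \mat A X_i + U_i)^T(\mat R + \mat B^T \mat S \mat B)(\mat L \mat A X_i + U_i)} \geq \E{(X_i - \check X_i)^T \mat A^T \mat M \mat A\, (X_i - \check X_i)}$,
where $\check X_i$ is the controller's estimate. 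In Jordan coordinates this weight dominates $\mat A^{\prime\, T}\mat \Lambda \mat A^{\prime}$, and after projection the Shannon-lower-bound coefficient is $\left(\det\left(\tilde{\mat A}^T \tilde{\mat \Lambda}\tilde{\mat A}\right)\right)^{1/\ell} = (a^{\prime})^2 \mu^{\prime}$, not $\mu^{\prime}$. Running your chain with weight $\tilde{\mat\Lambda}$ alone yields
$\frac{\ell}{2}\log\left((a^{\prime})^2 + \frac{\mu^{\prime} N(\mat \Pi_\ell^T \mat J^{-1}V)}{d/\ell}\right)$,
whereas \eqref{eq:mainpi} equals $\frac{\ell}{2}\log\left((a^{\prime})^2 + \frac{(a^{\prime})^2\mu^{\prime} N(\mat \Pi_\ell^T \mat J^{-1}V)}{d/\ell}\right)$; for $a^{\prime}>1$ the latter is strictly larger, so your argument as written does not establish the theorem. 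The missing $\mat A$-sandwich is precisely what produces the leading $\ell\log a^{\prime}$ term.

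Two further points. First, the passage from the weighted error to its projection needs the observation that $\mat \Pi_\ell \mat \Pi_\ell^T$ commutes with $\mat \Lambda^{1/2}\mat A^{\prime}$ (both respect the Jordan block structure), so that $\mat \Pi_\ell \mat \Pi_\ell^T \mat A^{\prime\, T}\mat\Lambda\mat A^{\prime}\mat \Pi_\ell \mat \Pi_\ell^T \leq \mat A^{\prime\, T}\mat\Lambda\mat A^{\prime}$; your ``hence block-diagonal, it dominates'' glosses over this, and it is the hypothesis under which \thmref{thm:seqslb+} is stated. Second, you do not treat the case $\rank \mat B < n$, in which $\mat M$ is singular, every admissible $\mat \Lambda \geq 0$ has $\det\mat\Lambda = 0$, and the weighted Shannon lower bound degenerates so the recursion gives nothing. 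The paper handles this by augmenting $\mat B$ to a full-rank $\mat B_\epsilon$, lower-bounding the cost by that of the augmented system, and letting $\epsilon \to 0$; without some such device you do not recover the stabilizability bound \eqref{eq:mainu}, which is the main payoff of this theorem.
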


If $\ell = n$, then $a^\prime = \left|\det \mat A\right|$, and the bound in \eqref{eq:mainpi} reduces to \eqref{eq:main}. 
On the other hand, taking $\ell$ to be the number of unstable eigenvalues in \eqref{eq:mainpi}, one can conclude
\begin{equation}
\mathbb R(b) \geq \sum_{i \colon |\lambda_i(\mathsf A)| \geq 1} \log |\lambda_i(\mathsf A)|. \label{eq:mainu} 
\end{equation}

If  the dimensionality of control is less than that of the system, $m < n$, the bounds in \thmref{thm:main} and \thmref{thm:mainu} reduce to $\log |\det \mat A|$ and $\ell \log a^\prime$, respectively, losing the dependence on $b$. The bound in \thmref{thm:mainm} below is a decreasing function of $b$, even if $m < n$.

\begin{thm}
\label{thm:mainm}
Consider the fully observed linear stochastic system \eqref{eq:system1}, $Y_i = X_i$. Suppose that $h(V) > -\infty$.  At any LQR cost $b >  \tr(\mathsf \Sigma_V \mathsf S)$, the rate-cost function satisfies
\begin{flalign}
&\mathbb R(b) \!\geq\! \log |\det \mat A| \!+\! \frac m 2 \log \left( \frac{a^2}{|\det \mat A|^2} \!+\! \frac{\mu N(V)^{\frac n m} m}{b \!-\! \tr(\mathsf \Sigma_V \mathsf S) }\right)\!, \hspace{-5em} && \label{eq:maingen} 
\end{flalign}
\begin{align}
\hspace{-2em} \text{where} \hspace{3em} a &\triangleq \inf_{i \geq 1}  \left(\frac{ \det \left( \mathsf L \mathsf A^{i} \mathsf \Sigma_{V} \mathsf A^{i\, T} \mathsf L^{T} \right)}{\det \mathsf \Sigma_{V}  \det \left(\mat L \mat L^T\right)} \right)^{\frac 1 {2 i m}},\\
\mu &\triangleq \left( \det\left(\mat R + \mat B^T \mat S \mat B \right) \det \left(\mat L  \mat L^T\right)\right)^\frac{1}{m}, \label{eq:mugen}
\end{align}
and $\mat L$ is defined in \eqref{eq:L}.
\end{thm}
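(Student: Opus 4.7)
The plan is to refine the argument of Theorem~\ref{thm:main} by applying the Shannon lower bound in the $m$-dimensional subspace picked out by the control rather than in the full $n$-dimensional state space. This is essential when $m<n$, for then $\mathsf M=\mathsf L^T(\mathsf R+\mathsf B^T\mathsf S\mathsf B)\mathsf L$ is rank-deficient, $|\det\mathsf M|^{1/n}=0$, and the informative term in Theorem~\ref{thm:main} collapses. Writing $\mathsf W\triangleq(\mathsf R+\mathsf B^T\mathsf S\mathsf B)^{1/2}\mathsf L$, the quantity actually controlled by the LQR cost is the $m$-vector $\mathsf W(X_i-\hat X_i)$, and it is to this $m$-dimensional projection that the Shannon lower bound should be applied.

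The steps, in order, are as follows. First, invoke the separation principle from Section~\ref{sec:sep} to rewrite the LQR constraint in \eqref{eq:Rbo} as the distortion constraint $\frac{1}{t-1}\sum_{i=1}^{t-1}\E{\|\mathsf W(X_i-\hat X_i)\|^2}\leq b-\tr(\mathsf\Sigma_V\mathsf S)+o(1)$ with $\hat X_i=\E{X_i\mid U^{i-1}}$. Second, for any window length $i_0\geq 1$, unroll the dynamics so that $\mathsf W X_{j+i_0}$ is written as $\mathsf W\mathsf A^{i_0}X_j$ plus linear combinations of the controls $U_{j},\ldots,U_{j+i_0-1}$ and the noises $V_{j},\ldots,V_{j+i_0-1}$, and lower bound $I(Y^t\to U^t)$ by the conditional mutual informations between these $\mathsf W X_{j+i_0}$ and the available control information. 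Third, apply the $m$-dimensional Shannon lower bound together with the conditional entropy power inequality, iterating along the dynamics: each EPI step picks up a factor $|\det\mathsf A|^{2/m}$, and the aggregated noise $\mathsf W\mathsf A^{i_0-1}V$ contributes an entropy-power term that, via the change-of-variables formula for differential entropy under a full-row-rank linear map, yields the determinant $\det(\mathsf L\mathsf A^{i_0}\mathsf\Sigma_V\mathsf A^{i_0 T}\mathsf L^T)^{1/(i_0 m)}$ appearing in the definition of $a$. Finally, use the inequalities \eqref{eq:epgauss} to pass between $N(V)$ and $\det\mathsf\Sigma_V$, producing the exponent $n/m$ on $N(V)$; the residual determinants of $\mathsf R+\mathsf B^T\mathsf S\mathsf B$ and $\mathsf L\mathsf L^T$ combine into $\mu$ as in \eqref{eq:mugen}, and taking the infimum over $i_0\geq 1$ makes the bound independent of the (otherwise arbitrary) window length.

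The main obstacle is that the projected process $\mathsf W X_i$ is not Markov in $m$ dimensions, so the one-step EPI recursion that closed the proof of Theorem~\ref{thm:main} does not work directly. Looking $i_0$ steps ahead sidesteps this, since over $i_0$ steps the $n$-dimensional noise $V$ is funneled through $\mathsf W\mathsf A^{i_0-1}$ into the $m$-dimensional control subspace, and its contribution is quantified by the determinant in the definition of $a$, normalized by $\det\mathsf\Sigma_V\det(\mathsf L\mathsf L^T)$ to account for the different entropy-power normalizations in $n$ and $m$ dimensions. The delicate bookkeeping is in tracking how determinants of $\mathsf R+\mathsf B^T\mathsf S\mathsf B$, $\mathsf L\mathsf L^T$ and $\mathsf\Sigma_V$ recombine across the Shannon lower bound, the EPI, and the passage from $N(V)$ to $N(V)^{n/m}$, and in verifying that the result simplifies to the clean form of $\mu$ given in \eqref{eq:mugen}.
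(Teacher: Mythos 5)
You have correctly identified why \thmref{thm:main} degenerates when $m<n$, and several of the right ingredients are present in your outline: the reduction via \corref{cor:link} to a causal rate-distortion problem, the need to work in the $m$-dimensional image of $\mathsf L$, and the role of a \propref{prop:hlx}-type entropy-power scaling for full-row-rank linear maps in producing both the exponent $n/m$ on $N(V)$ and the determinant ratios in $a$. However, the mechanism you propose for circumventing the non-Markovianity of the projected process does not work as described, and the step that actually carries the paper's proof is missing.

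The difficulty is this: you plan to look $i_0$ steps ahead and run a conditional-EPI recursion in $m$ dimensions, with ``each EPI step picking up a factor $|\det\mathsf A|^{2/m}$.'' But when $m<n$ there is no $m\times m$ matrix $\mathsf T$ with $\mathsf W\mathsf A=\mathsf T\mathsf W$, so no per-step determinant gain of $|\det\mathsf A|^{2/m}$ is available in the projected space; and the same obstruction reappears at the window level, since $\mathsf W\mathsf A^{i_0}X_j$ is not determined by (or linearly related to) the $m$-dimensional quantity $\mathsf W X_j$ about which the accumulated rate constrains the decoder's knowledge. Widening the window does not funnel the problem back into $m$ dimensions. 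What the paper does instead (\thmref{thm:seqslb++}, invoked after \propref{prop:DrM} converts the weights $\mat A^T\mat M_i\mat A$ on $\{S_i\}$ into weights $\mat M_i$ on $\{\mat A S_i\}$) is to abandon any recursion on the state: it unrolls $S_i=\sum_{j}\mat A^{i-j-1}V_j$ all the way back to time zero, forms the innovations $\tilde V_j=V_j-\E{V_j\mid\hat S^i}$, proves these are mutually uncorrelated, and thereby decomposes the time-$i$ weighted error into a sum over $j$ of terms, each lower-bounded by the single-shot distortion-rate function of $\mat L\mat A^{i-j}V_{j-1}$ at the cumulative rate $\sum_{\ell=j}^i r_\ell$ (the chain rule together with the independence of $V_{j-1}$ from $\hat S^{j-1}$ gives $I(V_{j-1};\hat S_j^i)\le\sum_{\ell=j}^i r_\ell$). \propref{prop:hlx} is applied to each lag separately, the infimum over $i$ in the definition of $a$ bounds every lag simultaneously, and only then does a recursion emerge --- among the resulting aggregate lower bounds $\ushort d_i^{\prime\prime}$, not among projected states. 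Without this orthogonal-innovations decomposition, or an equivalent device, your outline cannot be completed.
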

If $m = n$, \thmref{thm:mainm} reduces to \thmref{thm:main}.

We conclude \secref{sec:fo} with a few technical remarks.

\begin{remark}
Since the running mean-square cost is bounded above as
$
 \frac 1 {t}\E{\sum_{i = 1}^{t} \|\vect X_{i}\|^2} \leq \max_{1 \leq i \leq t} \E{\|X_i\|^2}
$,
our \thmref{thm:mainu} implies (via \eqref{eq:mainu}) the weaker result of Nair and Evans \cite{nair2004stabilizability}, who showed  the necessity of \eqref{eq:mainu} to keep $\sup_t \E{\|X_t\|^2}$ bounded. 
Note also that the approach of Nair and Evans \cite{nair2004stabilizability} applies only to fixed-rate quantization, while our approach encompasses both fixed- and variable-rate quantization, as well as control over noisy channels and non-Gaussian system disturbances.%
\end{remark}

\begin{remark}
Tatikonda et al. \cite[(17)]{tatikonda2004stochastic} proposed to apply the classical reverse waterfilling, known to achieve the noncausal Gaussian rate-distortion function, at each step $i$, to compute the causal Gaussian rate-distortion function for the sequence of vectors $\{X_i\}_{i=1}^{\infty}$.
Unfortunately,  reverse waterfilling is only suboptimal, as can be verified numerically by comparing \cite[(17)]{tatikonda2004stochastic} to the semidefinite program of Tanaka et al. \cite{tanaka2017semidefinite} (\figref{fig:rdl}). The reason reverse waterfilling does not apply is that $\lambda_i(t)$ in \cite[(15)]{tatikonda2004stochastic} depend on the distortion threshold, an effect not present in the computation of the classical non-causal rate-distortion function. %
\label{rem:tsm}
\end{remark}

\begin{remark}
The semidefinite program (SDP) of Tanaka et al. \cite{tanaka2017semidefinite} provides an exact numerical solution to the Gaussian causal rate-distortion-function, while our results in Theorems \ref{thm:main},~\ref{thm:mainu},~\ref{thm:mainm} provide analytical lower bounds, which hold beyond Gaussian noise. \figref{fig:rdl} presents a numerical comparison between our lower bound in \thmref{thm:mainu} and the exact calculation of $\mathbb R(d)$, for a randomly generated 3-dimensional system. For the example in \figref{fig:rdl}, our lower bound is within $0.14$ bits from the optimum. While always tight in low-cost regime, in medium-cost regime it will become looser if the spread of the eigenvalues of $\mat A$ is large. 
\end{remark}

\psfrag{R, bits}{\scriptsize{$\mathbb R(b)$, bits}}
\psfrag{SDP}{\scriptsize{SDP \cite{tanaka2017lqg}}}
\psfrag{Reverse waterfilling}{\scriptsize{Reverse waterfilling \cite{tatikonda2004stochastic}}}
\psfrag{lower bound}{\scriptsize{Lower bound (\thmref{thm:mainu})}}
\psfrag{d}{\scriptsize{$d$}}
\psfrag{Rlim}{\scriptsize{$\dfrac{C}{R(d)}$}}
\psfrag{Vjscc}{\scriptsize{$ \mathscr V(d, \alpha)$}}
\vspace{-.5em}
\begin{figure}[htp]
\begin{center}
    \epsfig{file=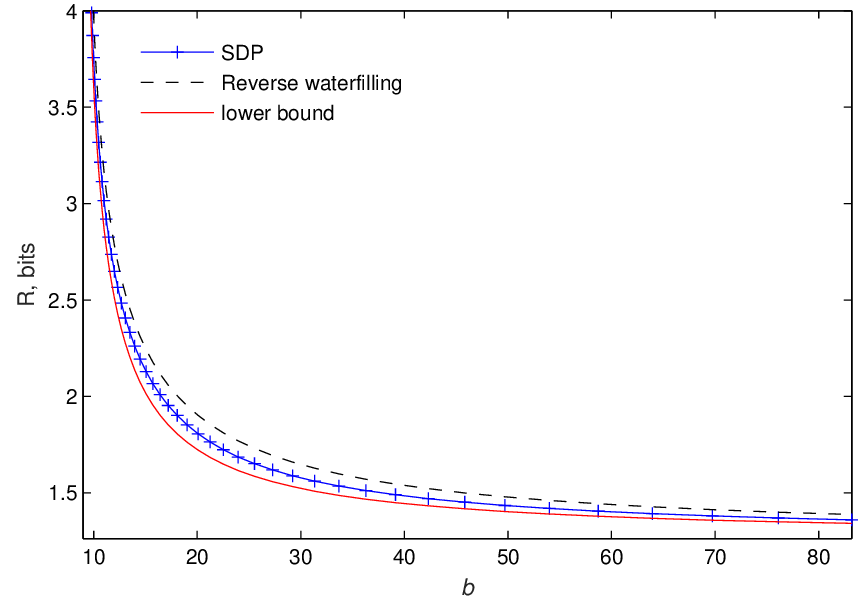,width=1\linewidth}
\end{center}
 \caption{Our lower bound in \thmref{thm:mainu}, the exact rate-cost function computed using the SDP method \cite{tanaka2017lqg}, and the reverse watefilling solution \cite{tatikonda2004stochastic}, computed for a Gaussian $V$,  $\mathsf A = \diag(2, 1.2, .3)$, $\mathsf Q = \mathsf R = \mathsf I$; $\mathsf B$ and $\mathsf \Sigma_{V}$ were generated randomly. The reverse watefilling solution is evidently strictly suboptimal.} 
 \label{fig:rdl}
\end{figure}

\subsection{Partially observed system}
\label{sec:po}
Consider now the scenario in which the encoder sees a noisy observation of the system state and forms a codeword to transmit to the controller using its present and past noisy observations. If the system and observation noises are jointly Gaussian, our results in \secref{sec:fo} generalize readily. 

In the absence of communication constraints, the minimum cost decomposes into two terms, the cost due to the noise in the system in \eqref{eq:cmin}, and the cost due to the noise in the observation of the state:
 \begin{equation}
 b_{\min} = \tr(\mathsf \Sigma_V \mathsf S) +  \tr\left( \mathsf \Sigma \mathsf A^T \mathsf M \mathsf A\right), \label{eq:bdecomp}
\end{equation}
where $\mat \Sigma$ is the covariance matrix of the estimation error $X_t - \E{X_t | Y^t, U^{t-1}}$ in the limit $t \to \infty$. The celebrated {\it separation principle} of estimation and control states that the minimum cost in \eqref{eq:bdecomp} can be attained by separately estimating the value of $X_i$ using the noisy observations  $Y^i $, and by applying the optimal control to the estimate as if it was the true state of the system. %
If system and observation noises are Gaussian, the optimal estimator admits a particularly elegant implementation via the Kalman filter. Then, $\mat \Sigma$ is given by
\begin{equation}
\mat \Sigma = %
\mat P - \mathsf K \left( \mathsf C \mat P \mathsf C^T + \mathsf \Sigma_W \right) \mathsf K^T,
\end{equation}
where $\mathsf \Sigma_W$ is the covariance matrix of each $W_i$, $\mat P$ is the solution to the algebraic Riccati equation
\begin{align}
\mat P &= %
\mathsf A \mat P  \mathsf A^T - \mathsf A \mathsf K \left( \mathsf C \mat P \mathsf C^T + \mathsf \Sigma_W \right) \mathsf K^T \mathsf A^T + \mathsf \Sigma_V, \label{eq:Priccati}
\end{align}
and $\mathsf K$ is the steady state Kalman filter gain:
\begin{equation}
 \mathsf K \triangleq \mat P \mathsf C^T \left( \mathsf C \mat P \mathsf C^T + \mathsf \Sigma_W \right)^{-1}. \label{eq:kalmangain}
\end{equation}

As Theorem \ref{thm:mainpo} below shows, the rate-cost function is bounded in terms of the steady state covariance matrix of the innovation in encoder's state estimate: 
\begin{align}
\mat N &\triangleq  \mathsf K \left( \mathsf C  \mat P  \mathsf C^T + \mathsf \Sigma_W \right) \mat K^T%
= \mathsf A \mat \Sigma \mathsf A^T - \mat \Sigma + \mathsf \Sigma_V.
\end{align}

\begin{thm}
 \label{thm:mainpo}
Consider the partially observed linear stochastic system \eqref{eq:system1}, \eqref{eq:system2}. Suppose further that $X_1, V$ and $W$ are all Gaussian.  At any target LQR cost $b > b_{\min}$, the rate-cost function is bounded below as
\begin{align}
\mathbb R(b) \geq \log |\det \mathsf A | + \frac n 2 \log \left( 1 + \frac{ \left( \det \mat N \mat M \right)^{\frac 1 n}} {(b - b_{\min})/n }\right). \label{eq:mainpog}
\end{align}
\end{thm}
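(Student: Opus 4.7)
The plan is to reduce the partially observed problem to a fully observed surrogate by passing to the Kalman filter estimate $\hat X_t \triangleq \E{X_t | Y^t, U^{t-1}}$ and then invoking \thmref{thm:main}. Under the Gaussian assumptions, the steady-state Kalman recursion yields
\begin{equation}
\hat X_{t+1} = \mathsf A \hat X_t + \mathsf B U_t + \tilde V_{t+1},
\end{equation}
where $\tilde V_{t+1} \triangleq \mathsf K(Y_{t+1} - \mathsf C(\mathsf A \hat X_t + \mathsf B U_t))$ is an i.i.d.\ Gaussian sequence of covariance $\mat N$, so that $N(\tilde V) = (\det \mat N)^{1/n}$. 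Thus $\{\hat X_t\}$ evolves as a \emph{fully observed} linear Gaussian system driven by the same $\mathsf A, \mathsf B$, whose associated Riccati solution $\mathsf S$ and feedback matrix $\mathsf M$ coincide with those of the original problem.

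Writing $X_t = \hat X_t + E_t$ with $E_t$ a zero-mean Gaussian vector of covariance $\mat \Sigma$, orthogonality of the conditional expectation gives $\E{\hat X_t^T \mathsf Q E_t} = 0$ and $\E{E_t^T \mathsf Q E_t} = \tr(\mathsf Q \mat \Sigma)$, which yields the LQR decomposition
\begin{equation}
\frac{1}{t}\lqr{X^t, U^{t-1}} = \frac{1}{t}\lqr{\hat X^t, U^{t-1}} + \tr(\mat \Sigma \mathsf Q) + o(1).
\end{equation}
At the same time, each $\hat X^i$ is a deterministic function of $(Y^i, U^{i-1})$, so by data processing $I(Y^i; U_i | U^{i-1}) \geq I(\hat X^i; U_i | U^{i-1})$; summing over $i$ yields $I(Y^t \to U^t) \geq I(\hat X^t \to U^t)$. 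Consequently, every policy admissible for the original problem at cost $b$ induces a policy for the surrogate at cost $\hat b \triangleq b - \tr(\mat \Sigma \mathsf Q) + o(1)$, with no larger directed mutual information rate.

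Applying \thmref{thm:main} to the surrogate, and noting that $N(\tilde V) |\det \mathsf M|^{1/n} = (\det \mat N \mathsf M)^{1/n}$, I would obtain
\begin{equation}
\mathbb R(b) \geq \log|\det \mathsf A| + \frac{n}{2}\log\left( 1 + \frac{(\det \mat N \mathsf M)^{1/n}}{(\hat b - \tr(\mat N \mathsf S))/n}\right).
\end{equation}
To conclude, I would verify the algebraic identity $\tr(\mat \Sigma \mathsf Q) + \tr(\mat N \mathsf S) = b_{\min}$, so that $\hat b - \tr(\mat N \mathsf S) = b - b_{\min}$. Substituting $\mat N = \mathsf A \mat \Sigma \mathsf A^T - \mat \Sigma + \mathsf \Sigma_V$ and using the Riccati equation \eqref{eq:S} in the form $\mathsf A^T \mathsf M \mathsf A = \mathsf Q + \mathsf A^T \mathsf S \mathsf A - \mathsf S$ reduces the left-hand side to $\tr(\mathsf \Sigma_V \mathsf S) + \tr(\mat \Sigma \mathsf A^T \mathsf M \mathsf A)$, matching \eqref{eq:bdecomp}. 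The main subtlety is the Kalman transient: $\tilde V_t$ is only asymptotically i.i.d.\ and $\mat \Sigma_t \to \mat \Sigma$ only as $t \to \infty$, but these effects are absorbed by the $\liminf_{t \to \infty}$ in \defnref{defn:rb}.
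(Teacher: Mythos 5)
Your proposal is correct and follows essentially the same route as the paper: both reduce the problem to the fully observed Gauss--Markov surrogate formed by the Kalman estimates $\hat X_t$, whose innovation noise $\mathsf K \tilde Y_t$ has covariance $\mat N$, and then invoke the fully observed lower-bound machinery on that surrogate. The only difference is bookkeeping: the paper obtains the cost offset $b_{\min}$ directly from the separation identity of \thmref{thm:sep} via \corref{cor:link}, whereas you re-derive it from the orthogonality decomposition $\lqr{X^t, U^{t-1}} = \lqr{\hat X^t, U^{t-1}} + t\,\tr(\mat \Sigma \mathsf Q) + o(t)$ combined with the Riccati identity $\mathsf Q + \mathsf A^T \mathsf S \mathsf A - \mathsf S = \mathsf A^T \mathsf M \mathsf A$, which checks out.
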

The bound in \thmref{thm:mainpo} is nontrivial if both $\mat M \succ \mat 0$ and $\mat N \succ \mat 0$. The necessary condition for that is $\rank \mathsf B = \rank \mat C = n$. 
In the fully observed system, $\mathsf C = \mathsf K = \mathsf I$, $\mat P = \mathsf N = \mathsf \Sigma_{ V}$, $\mathsf \Sigma_{ W} = \mathsf 0$, and \eqref{eq:mainpog} reduces to \eqref{eq:maing}. %
As we will see in \secref{sec:c}, the key to proving \thmref{thm:mainpo} is to represent the evolution of the current encoder's estimate of the state in terms of its previous best estimate and an independent Gaussian, as carried out by the Kalman filter recursion. It is for this technique to apply that we require the noises to be Gaussian.\footnote{Recall that Theorems \ref{thm:main}, \ref{thm:maina}, \ref{thm:mainu} make no such restrictions.} \vlong{\figref{fig:rdpo} displays the lower bound in \thmref{thm:mainpo} as a function of the target cost $b$, together with the entropy of a simple uniform variable-rate quantizer.}{} As for the achievable scheme, as in the fully observed case, the observer quantizes the innovation, only now the innovation is the difference between the observer's estimate and the controller's estimate. The scheme operates as follows. The observer sees $Y_i$, recursively computes its estimate of the state $X_i$, computes the innovation, quantizes the innovation. The controller receives the quantized value of the innovation, recursively computes its own estimate of the state $X_i$, and forms the control action that is a function of controller's state estimate only. Accordingly, the tasks of state estimation, quantization and control are separated. The momentous insight afforded by Theorem \ref{thm:mainpoa} below is that in the high rate regime, this simple scheme performs provably close to $\mathbb R(b)$, the best rate-cost tradeoff theoretically attainable. 

\vlong{
\begin{figure}[htp]
\begin{center}
    \epsfig{file=rdpo.eps,width=1\linewidth}
\end{center}
 \caption[]{The minimum quantizer entropy compatible with cost $b$ in a partially observed LQG system \eqref{eq:system1}, \eqref{eq:system2} with parameters $n = 1$, $\mathsf A = 2$, $\mathsf B = \mathsf Q = \mathsf R = 1$, $V \sim \mathcal N(0, 1)$, $W \sim \mathcal N(0, 1)$.} \label{fig:rdpo}
\end{figure}
}{}

\begin{thm}
\label{thm:mainpoa}
Consider the partially observed linear stochastic system \eqref{eq:system1}, \eqref{eq:system2}.
Suppose that $\mat M \succ 0$, $\mat N \succ 0$, and that $X_1, V$ and $W$ are all Gaussian.
At any LQR cost $b > b_{\min}$, the entropy-cost function is bounded above as
\begin{align}
\mathbb H(b) &~\leq \log |\det \mathsf A | + \frac n 2 \log \left( 1 + \frac{ \left|\det \mathsf N \mathsf M \right|^{\frac 1 n}} {(b - b_{\min})/n }\right) \notag\\
&~+ O_1(\log n) + O_2 \left( \left(b - b_{\min}\right)^{\frac 1 2}\right),\label{eq:mainapog}
\end{align}
where $O_1(\log n) \leq C_1 \log n$ and $O_2(\xi) \leq C_2 \min \left\{ \xi, c_2 \right\}$ for some nonnegative constants $C_1,\ C_2$ and $c_2$. %
\end{thm}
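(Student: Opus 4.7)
The plan is to reduce the partially observed Gaussian problem to a fully observed one driven by the Kalman innovation, and then directly invoke \thmref{thm:maina}. Using the separation of estimation, control and communication discussed in \secref{sec:sep}, I place a steady-state Kalman filter at the encoder to produce $\hat X_{i|i} \triangleq \E{X_i \mid Y^i, U^{i-1}}$. Since $X_1$, $V$, and $W$ are jointly Gaussian and each $U_i$ is a causal function of $Y^i$, the Kalman recursion gives $\hat X_{i+1|i+1} = \mat A \hat X_{i|i} + \mat B U_i + Z_{i+1}$, where the innovation $Z_{i+1}$ is zero-mean Gaussian with covariance $\mat N$, i.i.d. across time, and independent of $(\hat X_{i|i}, U_i)$; the estimation error $X_i - \hat X_{i|i}$ is Gaussian with steady-state covariance $\mat \Sigma$ and is orthogonal to $\hat X_{i|i}$.

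Next, view $\hat X_{i|i}$ as the state of an auxiliary fully observed linear system with matrices $(\mat A, \mat B, \mat Q, \mat R)$ and driving noise $Z \sim \mathcal N(0, \mat N)$. Expanding $\E{X_i^T \mat Q X_i} = \E{\hat X_{i|i}^T \mat Q \hat X_{i|i}} + \tr(\mat \Sigma_{e,i} \mat Q)$, with $\mat \Sigma_{e,i}\to\mat\Sigma$ geometrically fast, and time-averaging yields the cost decomposition
\begin{equation*}
\tfrac 1 t \lqr{X^t, U^{t-1}} = \tfrac 1 t \lqr{\hat X^t, U^{t-1}} + \tr(\mat \Sigma \mat Q) + o(1).
\end{equation*}
A direct manipulation of \eqref{eq:S} combined with $\mat N = \mat A \mat \Sigma \mat A^T - \mat \Sigma + \mat \Sigma_V$ gives $\tr(\mat N \mat S) + \tr(\mat \Sigma \mat Q) = b_{\min}$; hence the LQR floor of the auxiliary system is $\tr(\mat N \mat S)$, and achieving cost $b$ in the original system is equivalent to achieving excess cost exactly $b - b_{\min}$ above that floor.

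Now apply the fully observed achievability result \thmref{thm:maina} to the auxiliary system. The Gaussian $Z$ has a regular density in the sense of \defnref{defn:pwreg}, and $N(Z) = (\det \mat N)^{1/n}$; substituting into \eqref{eq:maina} produces exactly the right-hand side of \eqref{eq:mainapog}, with the $O_1(\log n)$ term accounting for the cell-shape gap between the best $n$-dimensional lattice and the Euclidean ball and the $O_2((b - b_{\min})^{1/2})$ term accounting for the non-uniformity of the DPCM prediction residual. The achieving scheme is the natural one: the observer runs the Kalman filter, lattice-quantizes the one-step prediction residual $\hat X_{i|i} - (\mat A \hat X_{i-1|i-1} + \mat B U_{i-1})$, entropy-codes its index and transmits it; the controller reconstructs $\hat X_{i|i}$ and applies the certainty-equivalence gain $-\mat L$.

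The main obstacle, I expect, is carrying out the cost decomposition rigorously inside the communication-constrained feedback loop, where the encoder's estimate $\hat X_{i|i}$ is itself re-encoded and only its quantized version reaches the controller. One must verify that because each $U_i$ remains a causal function of $Y^i$, the Kalman innovation $Z_i$ stays exogenous, i.i.d. Gaussian, and independent of the estimation error throughout operation, and that the DPCM prediction residual in the auxiliary system reaches its stationary distribution fast enough for the high-resolution lattice bounds underlying \thmref{thm:maina} to apply uniformly in $t$, as required by the $\liminf_{t \to \infty}$ in \defnref{defn:hb}.
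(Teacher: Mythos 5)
Your proposal is correct and follows essentially the same route as the paper: the paper's proof reduces, via the separation result (Corollary \ref{cor:link} and \eqref{eq:controlledhcost}), to the Kalman-filter estimate process---a fully observed Gauss--Markov system driven by the i.i.d.\ Gaussian innovation of covariance $\mat N$---and then invokes the fully observed achievability of Theorem \ref{thm:maina}, exactly as you do. Your explicit cost accounting (the identity $\tr(\mat N\mat S)+\tr(\mat\Sigma\mat Q)=b_{\min}$ and the orthogonality-based decomposition of the LQR cost) is a correct unpacking of what the paper delegates to Corollary \ref{cor:link} and Theorem \ref{thm:sep}, and the "no dual effect" point you flag is precisely what Section \ref{sec:sep} establishes.
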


Theorems \ref{thm:mainu} and \ref{thm:mainm} generalize as follows. 

\begin{thm}
\label{thm:mainpou}
Consider the partially observed linear stochastic system \eqref{eq:system1}, \eqref{eq:system2}. Suppose that $(\mathsf A, \mathsf B)$ is controllable and $(\mathsf A, \mathsf C)$ is observable.  Suppose further that $X_1, V$ and $W$ are all Gaussian. At any LQR cost $b >  b_{\min}$, the rate-cost function is bounded from below as
\begin{align}
 \mathbb R(b) \geq \log a^\prime + \frac 1 2 \log \left( 1 + \frac{ \eta^\prime \mu^\prime}{(b - b_{\min})/{\ell} }\right).\label{eq:mainpogen} 
\end{align}
where $a^\prime$, $\mu^\prime$ are defined in \eqref{eq:aprime}-\eqref{eq:muprime}, and 
\begin{align}
\eta^\prime \triangleq  \left( \det \left(\mat \Pi_\ell^T \mat J^{-1} \mat N \mat J^{-1\, T} \mat \Pi_\ell\right) \right)^{\frac 1 \ell},
\end{align}
where $\mat J$ and $\mat \Pi_{\ell}$ are defined in \eqref{eq:Aprime}, \eqref{eq:pil}, respectively.
\end{thm}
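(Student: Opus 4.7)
The plan is to reduce the partially observed problem to an effective fully observed one via the separation principle and the Kalman filter innovation representation, and then to apply Theorem~\ref{thm:mainu} to the reduced system, just as Theorem~\ref{thm:mainpo} generalizes Theorem~\ref{thm:main}.

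First, I would exploit the Kalman recursion. Let $\hat X_{i|i} \triangleq \E{X_i \mid Y^i, U^{i-1}}$. In steady state, standard Kalman arguments give the innovation representation
\begin{align}
\hat X_{i+1|i+1} = \mathsf A \hat X_{i|i} + \mathsf B U_i + \tilde V_{i+1}, \notag
\end{align}
where $\{\tilde V_i\}$ are i.i.d.\ zero-mean Gaussian with covariance $\mat N$, and each $\tilde V_{i+1}$ is independent of $(\hat X^i_{|}, U^{i-1})$. Consequently, the process $\hat X_{i|i}$ behaves as the state of a \emph{fully observed} Gaussian linear system with system matrix $\mathsf A$, input matrix $\mathsf B$ and driving noise $\tilde V$. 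Moreover the orthogonality of the estimation error $X_i - \hat X_{i|i}$ to the controller's $\sigma$-algebra yields the familiar decomposition
\begin{align}
\tfrac{1}{t}\lqr{X^t, U^{t-1}} = \tfrac{1}{t}\lqr{\hat X^t_{|}, U^{t-1}} + \tr(\mat \Sigma \mathsf A^T \mathsf M \mathsf A) + o(1), \notag
\end{align}
so an LQR budget of $b$ on $X^t$ translates into a budget of $b - \tr(\mat \Sigma \mathsf A^T \mathsf M \mathsf A)$ on $\hat X^t_{|}$, and hence into an overshoot of $b - b_{\min}$ above the noiseless-observation minimum $\tr(\mathsf \Sigma_{\tilde V}\mathsf S) = \tr(\mat N \mathsf S)$ on the fully observed side (since $b_{\min} = \tr(\mathsf \Sigma_V \mathsf S) + \tr(\mat \Sigma \mathsf A^T \mathsf M \mathsf A)$ and $\tr(\mat N \mathsf S) = \tr(\mathsf \Sigma_V \mathsf S) + \tr((\mathsf A \mat \Sigma \mathsf A^T - \mat \Sigma)\mathsf S)$; after a short calculation using the Riccati identity one shifts the reference so that only $b - b_{\min}$ remains in the denominator).

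Next, I would push the directed-information constraint through this reduction. Because the innovations $\tilde V^t$ are obtained from $(Y^t, U^{t-1})$ by a causal invertible map, and because the controller is a function of its received sequence $G^t$ only, a data-processing step gives $I(\tilde V^t \to U^t) \le I(Y^t \to U^t)$, so it suffices to lower bound the former. Applying Theorem~\ref{thm:mainu} to the fully observed system with state $\hat X_{i|i}$ and Gaussian noise $\tilde V$, and using the identity $N(\mat \Pi_\ell^T \mat J^{-1} \tilde V) = (\det(\mat \Pi_\ell^T \mat J^{-1} \mat N \mat J^{-1\, T}\mat \Pi_\ell))^{1/\ell} = \eta^\prime$ valid because $\tilde V$ is Gaussian, yields precisely the claimed bound \eqref{eq:mainpogen}, with $a^\prime$ and $\mu^\prime$ inherited unchanged and $N(V)$ replaced by $\eta^\prime$.

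The main obstacle I anticipate is the rigorous bookkeeping between the two directed mutual informations and the two LQR costs. Specifically, one needs to check that restricting the encoder to transmit functions of $\hat X_{i|i}$ rather than of $Y^i$ loses nothing in \eqref{eq:Rbo}, which follows from the fact that $\hat X_{i|i}$ is a sufficient statistic of $(Y^i, U^{i-1})$ for $X_i$ under the Gaussian assumption, and that the innovation $\tilde V_{i+1}$ is independent of $(\hat X^i_{|}, U^i)$ regardless of the (possibly randomized, history-dependent) coding policy. Once this Gaussian sufficiency argument is made precise, the desired bound is a direct corollary of Theorem~\ref{thm:mainu}.
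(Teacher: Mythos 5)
Your proposal follows essentially the same route as the paper: the separation result (\corref{cor:link}) reduces $\mathbb R(b)$ to the causal rate-distortion function of the Kalman estimate process $\hat X$, which is Gauss--Markov with innovation covariance $\mat N$, and the projected causal Shannon lower bound (\thmref{thm:seqslb+}, the engine inside \thmref{thm:mainu}) then yields \eqref{eq:mainpogen} with $N(\mat \Pi_\ell^T \mat J^{-1}\tilde V)=\eta^\prime$ exactly as you compute. The one blemish is the constant in your intermediate cost decomposition---the orthogonality argument gives a per-stage offset of $\tr(\mat \Sigma\, \mathsf Q)$, not $\tr(\mat \Sigma\, \mathsf A^T \mathsf M \mathsf A)$, and only with the former does the reference shift come out to exactly $b-b_{\min}$---but since you land on the correct excess distortion $b-b_{\min}$ (consistent with \corref{cor:link}), this is immaterial to the argument.
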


\begin{thm}
\label{thm:mainpom}
Consider the partially observed linear stochastic system \eqref{eq:system1}, \eqref{eq:system2}. Suppose that $(\mathsf A, \mathsf B)$ is controllable, $(\mathsf A, \mathsf C)$ is observable, and that $m \leq k \leq n$. 
Suppose further that $X_1, V$ and $W$ are all Gaussian. 
 At any LQR cost $b >  b_{\min}$, the rate-cost function is bounded from below as
\begin{flalign}
&\mathbb R(b) \geq \log |\det \mat A| \!+\! \frac m 2 \log \left( \frac{a^2}{|\det \mat A|^2}\! +\! \frac{ \eta \mu}{( b \!-\!  b_{\min} )/m}\right),\hspace{-4em}  &\label{eq:rbpom}
\end{flalign}
where $\mu$ is defined in \eqref{eq:mugen}, 
\begin{align}
a &\triangleq \inf_{i \geq 1} \left(\frac{ \det \left( \mathsf L \mathsf A^{i} \mathsf K \left( \mathsf C  \mat P  \mathsf C^T + \mathsf \Sigma_W \right) \mat K^T \mathsf A^{i\, T} \mathsf L^T \right)}{\det \left( \mathsf C  \mat P  \mathsf C^T + \mathsf \Sigma_W \right)  \det \left(\mat L \mat L^T\right)  \det \left(\mat K^T \mat K \right)  } \right)^{\frac 1 {2 i m}},\notag\\
\eta &\triangleq  \left(\det \left( \mathsf C  \mat P  \mathsf C^T + \mathsf \Sigma_W \right) \det\left(  \mat K^T \mathsf K \right)\right)^{\frac 1 m},
\end{align}
and $\mat L$, $\mat P$, $\mat K$ are defined in \eqref{eq:L}, \eqref{eq:Priccati}, \eqref{eq:kalmangain}, respectively.
\end{thm}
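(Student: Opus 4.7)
The plan is to reduce the partially observed Gaussian problem to an effective fully observed problem via the Kalman filter and then run the argument that proves Theorem \ref{thm:mainm}. Under the Gaussian assumption, the separation of estimation and control discussed in Section \ref{sec:sep} guarantees that the controller's sufficient statistic is the MMSE estimate $\hat X_i \triangleq \E{X_i \mid Y^i, U^{i-1}}$, generated by the Kalman filter via
\begin{equation}
\hat X_{i+1} = \mathsf A \hat X_i + \mathsf B U_i + \tilde V_i,
\end{equation}
where $\tilde V_i$ is zero-mean Gaussian with covariance $\mathsf N = \mathsf K (\mathsf C \mathsf P \mathsf C^T + \mathsf \Sigma_W) \mathsf K^T$, i.i.d.\ across $i$ and independent of $(\hat X^i, U^{i-1})$. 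The orthogonality of the MMSE together with the Riccati identity \eqref{eq:bdecomp} decomposes the LQR cost into a rate-independent estimation part $\tr(\mathsf \Sigma \mathsf A^T \mathsf M \mathsf A)$ plus an effective LQR cost evaluated on the Kalman trajectory $\hat X^t$, while the data processing inequality yields $I(Y^t \to U^t) \geq I(\hat X^t \to U^t)$. It is therefore enough to lower-bound the rate-cost function of the fully observed auxiliary system $(\mathsf A, \mathsf B, \mathsf Q, \mathsf R)$ driven by the Gaussian innovation noise $\tilde V$ at effective cost $b - b_{\min} + \tr(\mathsf \Sigma_V \mathsf S)$.

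On this auxiliary system I rerun the argument of Theorem \ref{thm:mainm} essentially verbatim: project the state via $\mathsf L$ onto the $m$-dimensional control-relevant subspace, apply the Shannon lower bound to the projected conditional distortion-rate function, and iterate across an $i$-step horizon using the entropy power inequality to propagate covariance estimates of the accumulated Gaussian noise $\mathsf L \mathsf A^i \tilde V$. Optimizing over $i \geq 1$ and substituting the Gaussian closed form for the entropy power produces the functional form of \eqref{eq:maingen} with $\mathsf \Sigma_V$ replaced by $\mathsf N$ throughout the dynamics. After factoring out the determinants of the innovation covariance and of $\mathsf L \mathsf L^T$, $\mathsf K^T \mathsf K$, the normalized ratio emerging from $\det(\mathsf L \mathsf A^i \mathsf N \mathsf A^{iT} \mathsf L^T)^{1/(2im)}$ is exactly $a$, and the entropy-power factor $N(V)^{n/m}$ of Theorem \ref{thm:mainm} is replaced by $\eta$, read in the $k$-dimensional innovation space rather than in $\mathbb R^n$.

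The main obstacle is that, viewed as an $n$-vector, $\tilde V = \mathsf K \tilde W$ has singular covariance of rank at most $k$, so whenever $k < n$ its ambient entropy power is $-\infty$ and the Shannon lower bound cannot be applied to $\tilde V$ directly. The hypothesis $m \leq k \leq n$ is precisely what avoids this pitfall: one applies the Shannon lower bound only after projecting by the $m \times k$ matrix $\mathsf L \mathsf A^i \mathsf K$, which is of full row rank $m$ for sufficiently many $i$ thanks to $(\mathsf A, \mathsf B)$-controllability and $(\mathsf A, \mathsf C)$-observability. The denominator $\det(\mathsf C \mathsf P \mathsf C^T + \mathsf \Sigma_W)\det(\mathsf L \mathsf L^T)\det(\mathsf K^T \mathsf K)$ in the definition of $a$ is the resulting ``Jacobian'' bookkeeping between the $n$-dimensional state space and the $k$-dimensional innovation space. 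A secondary technicality is absorbing the Kalman-filter transient into the $\liminf_{t \to \infty}$ in Definition \ref{defn:rb}, which is immediate from the exponential convergence of the filter to its steady state under observability.
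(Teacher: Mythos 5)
Your proposal matches the paper's proof: Corollary~\ref{cor:link} reduces the problem to the causal rate-distortion function of the Gauss--Markov Kalman-estimate process whose innovation noise is $\mathsf K$ times a $k$-dimensional Gaussian, and the bound then follows exactly as in Theorem~\ref{thm:mainm} by invoking Theorem~\ref{thm:seqslb++} (whose hypotheses $V_i = \mathsf K_i V_i'$ and $m \le k$ are precisely the projection and rank bookkeeping you describe). The only blemish is immaterial: the effective cost of your auxiliary fully observed system should be $b - b_{\min} + \tr(\mathsf N \mathsf S)$ rather than $b - b_{\min} + \tr(\mathsf \Sigma_V \mathsf S)$, but the quantity that actually enters the bound, the excess distortion $b - b_{\min}$, is correct.
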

If $k = m = n$,  \thmref{thm:mainpom} reduces to \thmref{thm:mainpo}.

\subsection{Operational implications}
\label{sec:oper}
So far we have formally defined the rate-cost / entropy-cost functions as the limiting solutions of a minimal directed mutual information / entropy subject to an LQR cost constraint (\defnref{defn:rb} / \defnref{defn:hb}) and presented lower and upper bounds to those functions. In this section we discuss the operational implications of the results in \secref{sec:fo} and \secref{sec:po} in several communication scenarios.

\subsubsection{Control over a noisy channel}
Consider first a general scenario in which the channel in Fig. \ref{fig:system} is a dynamical channel defined by causal kernels $\{P_{G^t \| F^t}\}_{t = 1}^\infty$. %
For the class of  \emph{directed information stable} channels, the feedback capacity of the channel is \cite{kim2008coding,tatikonda2009feedbackcapacity}
\begin{equation}
C = \liminf_{t \to \infty}  \sup_{P_{F^t \| \mathcal D G^t}} \frac 1 tI(F^t \to G^t),  
\end{equation}
If past channel outputs are not available at the encoder, then the $\sup$ is over all $P_{F^t}$, and $I(F^t \to G^t) = I(F^t; G^t)$.

The following result, the proof of which is deferred until \secref{sec:sep}, implies that the converse results in Theorems~\ref{thm:main}, \ref{thm:mainu}, \ref{thm:mainpo} and \ref{thm:mainpou} present lower bounds on the capacity of the channel necessary to attain $b$. 
\begin{prop}
\label{prop:dpconverse}
A necessary condition for stabilizing the  system in \eqref{eq:system1}, \eqref{eq:system2} at LQR cost $b$ is, 
\begin{equation}
\mathbb R(b) \leq C \label{eq:ClRb}.
\end{equation}
\end{prop}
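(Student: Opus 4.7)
\textbf{Proof plan for Proposition \ref{prop:dpconverse}.} The plan is to show that any joint coding/control scheme achieving $\frac{1}{t}\lqr{X^{t}, U^{t-1}} \leq b$ induces a causal kernel $P_{U^t \| Y^t}$ for which the directed information $\frac{1}{t} I(Y^t \to U^t)$ is upper bounded by the normalized directed information across the channel, $\frac{1}{t} I(F^t \to G^t)$, which in turn cannot exceed $C$. Combining this chain of inequalities with Definition \ref{defn:rb} and passing to the $\liminf$ in $t$ yields \eqref{eq:ClRb}.

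The first step is a data-processing inequality for directed information tailored to the feedback loop of Fig.~\ref{fig:system}. In the loop, the encoder produces $F_i$ as a (possibly random) causal function of $Y^i$ (and, if feedback of past channel outputs or past controls is present, of $\mathcal D G^i$ and $\mathcal D U^i$), the channel acts according to the causally-conditional kernel $P_{G^t \| F^t}$, and the controller forms $U_i$ as a causal function of $G^i$. These conditional independencies give the causal Markov structure $Y^t \to F^t \to G^t \to U^t$, captured formally by a factorization of $P_{Y^t F^t G^t U^t}$ via causally conditioned kernels as in \eqref{eq:causalcond}. I would then invoke the directed-information data-processing inequality (Kramer \cite{kramer1998PhD}, Massey) in two steps: first, since $U^t$ is a causal function of $G^t$ alone, $I(F^t \to U^t) \leq I(F^t \to G^t)$; second, since $F^t$ is obtained causally from $Y^t$ (with any feedback coming only from $\mathcal D G^t$, which is itself downstream of past $F$'s), $I(Y^t \to U^t) \leq I(F^t \to U^t)$. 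Chaining these yields $I(Y^t \to U^t) \leq I(F^t \to G^t)$.

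The second step is to bound $I(F^t \to G^t)$ by the capacity. The kernel $P_{F^t \| \mathcal D G^t}$ induced by the encoder is a particular admissible input distribution for the channel $P_{G^t \| F^t}$ (it is causally conditioned on past outputs, which is exactly what the capacity formula with feedback allows). For directed-information stable channels, $C = \lim_{t \to \infty} \frac{1}{t} \sup_{P_{F^t \| \mathcal D G^t}} I(F^t \to G^t)$, so $\frac{1}{t} I(F^t \to G^t) \leq C + o(1)$. For channels without feedback, past outputs drop out of the supremum and the same bound holds with $I(F^t; G^t)$ in place of $I(F^t \to G^t)$, and $I(F^t \to G^t) \leq I(F^t; G^t)$ in either case.

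Putting the two steps together, for any admissible scheme $\frac{1}{t} I(Y^t \to U^t) \leq \frac{1}{t} I(F^t \to G^t) \leq C + o(1)$; taking the $\liminf$ in $t$ and then the infimum over all schemes satisfying $\frac{1}{t}\lqr{X^{t},U^{t-1}} \leq b$ gives $\mathbb R(b) \leq C$. The main obstacle is bookkeeping: one must be careful that the causally conditioned kernels $P_{F^t \| \mathcal D G^t}$ and $P_{U^t \| G^t}$ extracted from the composite loop are bona fide causal kernels despite the presence of feedback, so that the data-processing inequality for directed information applies literally as stated in \cite{kramer1998PhD}. Once this factorization is checked, the rest is a direct application of the definitions.
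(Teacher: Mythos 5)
Your route is genuinely different from the paper's. The paper does not attempt a data-processing inequality directly on the closed loop. Instead it first proves the separation result (Theorem~\ref{thm:sep} and Corollary~\ref{cor:link}), which identifies $\mathbb R(b)$ with the causal rate-distortion function $\mathbb R_{\hat X^\infty \| \mathcal D U^\infty}(b-b_{\min})$, and then uses \eqref{eq:controlledratecost} to replace the controlled process by the \emph{uncontrolled} Markov process $S^\infty$. Only at that point does it apply data processing, in the open-loop scenario \eqref{eq:causaljscc} where the source is exogenous (Proposition~\ref{prop:dpdir}): there the chain $I(S^t \to \hat S^t) \leq I(S^t;\hat S^t) \leq I(S^t;G^t) \leq I(F^t \to G^t)$ is a clean combination of ordinary DPI and Massey's theorems, precisely because $S^t$ does not depend on past channel outputs. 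What the paper's detour buys is that the loop is cut before any information-theoretic inequality is invoked; what your approach buys is self-containedness, since you never need Theorem~\ref{thm:sep}.

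The caveat is that the step you dismiss as ``bookkeeping'' is actually the crux, and the specific two-step chaining $I(Y^t \to U^t) \leq I(F^t \to U^t) \leq I(F^t \to G^t)$ is not a literal application of a Markov-chain DPI. In the closed loop $Y^t$ is \emph{not} upstream of $G^t$ in a chain: $Y_{i}$ depends on $U^{i-1}$, hence on $G^{i-1}$, so the per-letter Markov conditions one would need for each of your two inequalities (e.g.\ $Y^i - (F^i, U^{i-1}) - U_i$) do not follow from the factorization of $P_{Y^tF^tG^tU^t}$ without additional argument. The correct closed-loop statement $I(Y^t \to U^t) \leq I(F^t \to G^t)$ is true and known (it is essentially the lemma of Silva et al.~\cite{silva2011framework}, which the paper cites), but its proof is a dedicated chain-rule computation exploiting that $G_i$ depends on the rest of the system only through $(F^i, G^{i-1})$ and that $U_i$ depends only on $G^i$; it is not two invocations of a generic directed-information DPI. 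If you supply that lemma (or cite it), the remainder of your argument --- bounding $\tfrac1t I(F^t\to G^t)$ by $C+o(1)$ via the capacity formula and passing to the $\liminf$ --- is fine and matches the paper's second step.
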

One remarkable special case when equality in \eqref{eq:ClRb} is attained is control of a scalar Gaussian system over a scalar memoryless AWGN channel \cite{bansal1989simultaneous,tatikonda2004stochastic,freudenberg2010stabilization,khina2016multi}. In that case, the channel is probabilistically matched to the data to be transmitted \cite{gastpar2003tocodeornot}, no coding beyond simple amplification is needed, and linearly transmitting the innovation is optimal \cite{bansal1989simultaneous}. In practice, such matching rarely occurs, and intelligent joint source-channel coding techniques can lead to a significant performance improvement.  One such technique that approaches \eqref{eq:ClRb} in a particular scenario is discussed in \cite{khina2016multi}. In general, how closely the bound in \eqref{eq:ClRb} can be approached over noisy channels remains an open problem.

\subsubsection{Control under fixed-rate quantization}
\label{sec:fixed}
If the channel connecting the encoder to the controller is a noiseless bit pipe that accepts a fixed number of $r$ bits per channel use (so that $C = r$), both $\mathbb R(b)$ and $\mathbb H(b)$  are lower bounds on the minimum quantization rate $r$ required to attain cost $b$:
\begin{equation}
 \mathbb R(b) \leq \mathbb H(b) \leq r. 
\end{equation}
Therefore, the converse results in Theorems \ref{thm:main}, \ref{thm:mainu}, \ref{thm:mainm}, \ref{thm:mainpo}, \ref{thm:mainpou} and \ref{thm:mainpom} give sharp lower bounds on the minimum size of a fixed-rate quantizer compatible with LQR cost $b$. The achievability results in Theorems \ref{thm:maina} and \ref{thm:mainpoa} are insufficient to establish the existence of a fixed-rate quantizer of a rate approaching $\mathbb R(b)$. While attempting to find a time-invariant fixed-rate quantizer operating at any finite cost is futile \cite{nair2004stabilizability}, determining whether there exists an adaptive fixed-rate quantization scheme approaching the converses in Theorems~\ref{thm:main} and \ref{thm:mainpo} remains an intriguing open question.

\subsubsection{Control under variable-length quantization}
\label{sec:avrate}
Assume now that the channel connecting the encoder to the controller is a noiseless channel that accepts an average of $r$ bits per sample. That is, the channel input alphabet is  the set of all binary strings, 
 $\{\emptyset, 0, 1, 00, 01, \ldots\}$,
and the encoding function $P_{F^t \| Y^t}$ must be such that 
\begin{equation}
\frac 1 t \sum_{i = 1}^{t} \E{\ell(F_i)} \leq r, \label{eq:varq}
\end{equation}
where $\ell(F_i)$ denotes the length of the binary string $F_i$.

The minimum encoded average length $L^\star(X)$ in lossless compression of object $X$ is bounded as \cite{wyner1972upper,alon1994lower}
\begin{align}
L^\star(X) &\leq H(X) \label{eq:wyner}\\
&\leq		L^\star(X) 
		+ \log_2 (L^\star(X)+1) + \log_2 e  \label{eq:alon2}.
\end{align}	
Note that \eqref{eq:wyner} states that the optimum compressed length is below the entropy. This is a consequence of lifting the prefix condition: without prefix constraints one can compress at an average rate slightly below the entropy \cite{wyner1972upper,szpankowski2011minimum}. 

The operational rate-cost function for control under variable-length quantization, $R_{\mathrm{var}}(b)$, is defined as the limsup of the infimum of $r$'s such that LQR cost $b$ and \eqref{eq:varq} are achievable in the limit of large $t$. It follows from \eqref{eq:wyner} that
\begin{align}
R_{\mathrm{var}}(b) &\leq \mathbb H(b) \label{eq:Rvara},%
\end{align}
which implies the existence of a variable-length quantizer whose average rate does not exceed the expressions in Theorems \ref{thm:maina} and \ref{thm:mainpoa}. Likewise, by \eqref{eq:alon2} and Jensen's inequality,
\begin{align}
\mathbb H(b) %
&\leq  R_{\mathrm{var}}(b) + \log_2 ( R_{\mathrm{var}}(b)  + 1) + \log_2 e, \label{eq:Rvarc}
\end{align}
which leads via \eqref{eq:hlr} to the lower bound $R_{\mathrm{var}}(b) \geq \psi^{-1}(\mathbb R(b))$, where $\psi^{-1}(\cdot)$ is the functional inverse of $\psi(x) = x + \log_2 (x + 1) + \log_2 e$. Thus, Theorems \ref{thm:main}, \ref{thm:mainu}, \ref{thm:mainm}, \ref{thm:mainpo}, \ref{thm:mainpou} and \ref{thm:mainpom} provide lower bounds to $R_{\mathrm{var}}(b)$. Consequently, our converse and achievability results in \secref{sec:fo} and \secref{sec:po} characterize the operational rate-cost tradeoff for control with variable-length quantizers.

\begin{remark}
The minimum average compressed length among all prefix-free lossless compressors, $L_{\text{p}}^\star(X)$, satisfies  
$
 H(X)  \leq L_{\mathrm{p}}^\star(X) \leq H(X) + 1.
$
Therefore, the minimum average rate of a prefix-free quantizer compatible with cost $b$ is bounded as
$
 \mathbb R(b) \leq R_{\mathrm{var, p}}(b) \leq \mathbb H(b) + 1.
$
Accordingly, the theorems in \secref{sec:fo} and \secref{sec:po} also characterize the operational rate-cost tradeoff for control with variable-length prefix-free quantizers. 
\end{remark}

\vspace{-.4em}
\section{Causal Shannon lower bound}
\label{sec:causalRd}

In this section, we consider causal compression of a discrete-time random process, $S^\infty = (S_1, S_2, \ldots)$, under the weighted mean-square error (MSE): 
\begin{flalign}
 &\wmse{S^{t}, \hat S^{t}}  \label{eq:wmse}
\triangleq  \sum_{i = 1}^{t} \E{(S_i - \hat S_i)^T \mat W_i (S_i - \hat S_i) }, \hspace{-1em}&
\end{flalign}
where $\mat W_i \geq 0$.
Understanding causal compression bears great independent interest and, due to separation between quantization and control, it is also vital to understanding quantized control, as explained in \secref{sec:sep} below.

Causally conditioned directed information and causally conditioned entropy are defined as:
\begin{align}
I(S^t \to \hat S^t \| Z^t) &\triangleq \sum_{i = 1}^{t} I(S^i; \hat S_i | \hat S^{i - 1}, Z^i ) \label{eq:Idircond},\\
H(\hat S^t \| Z^t) &\triangleq \sum_{i= 1}^{t} H(\hat S_i | \hat S^{i-1}, Z^i). \label{eq:Hdircond}
\end{align}

\begin{defn}[causal rate- and entropy-distortion functions]
\label{defn:Rc}
The causal rate- and entropy-distortion functions under the weighted MSE in \eqref{eq:wmse} with side information $Z^\infty$ causally available at both encoder and decoder  are defined as, 
\begin{flalign}
&\mathbb R_{S^\infty \| Z^\infty} (d) \triangleq \limsup_{t \to \infty} \!\!\!\!\!\! \inf_{\substack{ P_{\hat S^t \| S^t, Z^{t}} \colon \\
\frac 1 t \wmse{S^{t},\, \hat S^{t}} \leq d}} \!\!\!
  \frac 1 t I(S^t \!\! \to \! \hat S^t \| Z^t), \hspace{-2em} &&   \label{eq:causalRd}\\
&\mathbb H_{S^\infty \| Z^\infty} (d) \triangleq \limsup_{t \to \infty}  \inf_{\substack{ P_{\hat S^t \| S^t, Z^{t}} \colon \\
\frac 1 t \wmse{S^{t},\, \hat S^{t}} \leq d}}
  \frac 1 t H(\hat S^t \| Z^t). \hspace{-2em} && 
\end{flalign}
\end{defn}
In the absence of side information, we write $\mathbb R_{S^\infty}(d)$ / $\mathbb H_{S^\infty}(d)$ for the causal rate- / entropy-distortion functions.

{\it DPCM encoder}, upon observing the current source sample $S_i$, computes the state innovation $\tilde S_i$ recursively using 
\begin{equation}
\tilde S_i \triangleq S_i - \hat S_{i | i-1}, \label{eq:innoDPCM}
\end{equation}
where $\hat S_{i | i-1} \triangleq \E{S_i | \hat S^{i - 1}}$ is the a priori (predicted) state estimate at the decoder given previous decoder's outputs $\hat S^{i-1}$. The DPCM encoder sends quantized innovation $\hat{\tilde S}_{i}$.

{\it DPCM decoder}, having recovered $\hat{\tilde S}_{i}$, forms its estimate:
\begin{align}
 \hat S_{i} &= \hat S_{i | i-1} + \hat{\tilde S}_{i}. 
\end{align}

\propref{prop:dpcm} below implies that the causal rate- and entropy-distortion functions are attained in the class of DPCM (let $T_i = - \hat S_{i | i-1}$ in \propref{prop:dpcm}).
No independence among samples of either the innovation process $\{\tilde S_i\}$ or its encoded version $\{\hat{\tilde S}_{i}\}$ is required for this to hold.

\begin{prop}
Let the stochastic process $\{T_i\}$ be adapted to the filtration generated by $\{Z^i\}$. Then, 
\begin{align}
 \mathbb R_{S^\infty \| Z^\infty} (d) &=  \mathbb R_{ S^\infty + T^\infty \| Z^\infty} (d), \label{eq:Rcausalsideinfoadd}\\
  \mathbb H_{S^\infty \| Z^\infty} (d) &=  \mathbb H_{S^\infty + T^\infty \| Z^\infty} (d). \label{eq:Hcausalsideinfoadd}
\end{align} 
\label{prop:dpcm}
\end{prop}
\begin{proof}
Since mutual information, entropy and distortion measure \eqref{eq:wmse} are all invariant to shifts and $T_i$ is a common knowledge to both encoder and decoder at time $i$, $I(S^t \to \hat S^t \| Z^t) = I(S^t + T^t \to \hat S^t + T^t \| Z^t)$, $H(\hat S^t \| Z^t) = H(\hat S^t + T^t \| Z^t)$, $\wmse{S^{t}, \hat S^{t}} = \wmse{S^{t} + T^t, \hat S^{t} + T^t}$, and \eqref{eq:Rcausalsideinfoadd}, \eqref{eq:Hcausalsideinfoadd} follow. 
\end{proof}

\propref{prop:dpcm} implies further that the rate-distortion tradeoffs for controlled and uncontrolled processes are the same. Indeed, 
consider the Markov process 
\begin{equation}
S_{i + 1} = \mat A S_i  + V_i,  \label{eq:St}
\end{equation}
obtained by letting $U_i \equiv 0$, $i = 1, 2, \ldots$ in \eqref{eq:system1}. The uncontrolled process \eqref{eq:St}  and the controlled one \eqref{eq:system1} are related through $ X_{i} = S_{i} + \sum_{j = 1}^{i-1} \mat A^{i - 1 - j} \mat B U_{j}$.
Provided that the encoder and the decoder both have access to past controls,  \propref{prop:dpcm} with $T_i = \sum_{j = 1}^{i-1} \mat A^{i - 1 - j} \mat B U_{j}$ yields
\begin{align}
 \mathbb R_{S^\infty} (d) &=  \mathbb R_{X^\infty \| \mathcal D U^\infty} (d), \label{eq:controlledratecost}\\
   \mathbb H_{S^\infty } (d) &=  \mathbb H_{X^\infty \| \mathcal D U^\infty} (d),\label{eq:controlledhcost}
\end{align}
where $\mathcal D U^{\infty}$ signifies causal availability of past controls. The choice of these controls does not affect the rate-distortion tradeoffs in \eqref{eq:controlledratecost} and \eqref{eq:controlledhcost}. 
Furthermore, using $\hat S_{i | i-1} = \mat A \hat S_{i - 1}$, $\hat X_{i | i-1} = \mat A \hat X_{i-1} + \mat B U_{i-1}$,  it is easy to show that both processes \eqref{eq:St} and \eqref{eq:system1} have the same innovation process: $\tilde X_i = \tilde S_i$. Thus the same DPCM scheme can be used to encode both.

The following bound is a major result of this paper.

\begin{thm}
\label{thm:seqslb}
For the Markov process \eqref{eq:St} and a sequence of weight matrices such that
\begin{equation}
\lim_{i \to \infty}\left( \det \mat W_i \right)^{\frac 1 n} = w > 0,  \label{eq:liminfass}
\end{equation}
causal rate-distortion function is bounded below as 
\begin{flalign}
&& \mathbb R_{S^\infty}(d)  &\geq \frac n 2 \log \left( a^2 + \frac {w N(V)}{d/n}\right), \label{eq:slbseq}\\
 &\text{where}  & a &\triangleq  |\det \mat A|^{\frac 1 n}.
\end{flalign}
\end{thm}
The proof of \thmref{thm:seqslb} uses the classical Shannon lower bound, together with a dynamic programming argument. \thmref{thm:seqslb} can thereby be viewed as an extension of Shannon's lower bound to the causal compression setting.

\begin{remark}
For scalar Gauss-Markov sources, equality holds in \eqref{eq:slbseq} as long as the expression under the $\log$ is $\geq 1$, recovering the known result \cite{gorbunov1974prognostic,tatikonda2004stochastic}, \cite[Th. 3]{derpich2012uppercausal}
\begin{equation}
 \mathbb R_{S^\infty}(d)  = \frac 1 2 \left|\log \left( a^2 + \frac {w \Var{V}}{d}\right)\right|_+. \label{eq:Rdgauss}
\end{equation}
\label{rem:gm}
\end{remark}

 \begin{remark}
In lieu of the long-term average (Ces\`aro mean) constraint in \eqref{eq:wmse}, \cite{gorbunov1974prognostic,tatikonda2004stochastic} considered a more stringent constraint in which the average distortion is bounded at each time instant $i$. In the infinite time horizon, for Gauss-Markov sources, both formulations are equivalent, as optimal rates and distortions settle down to their steady states \cite[Th. 3]{derpich2012uppercausal}, \cite{tanaka2015stationary}. For the scalar case this equivalence also follows by comparing \eqref{eq:slbseq} (obtained with a Ces\`aro mean constraint) and \eqref{eq:Rdgauss} (obtained in \cite{gorbunov1974prognostic,tatikonda2004stochastic} with a pointwise constaint).
\label{rem:steady}
\end{remark}

\thmref{thm:slba}, stated next, shows that the converse in \thmref{thm:seqslb} can be approached by a DPCM quantization scheme with uniform rate and distortion allocations (in the limit of infinite time horizon). This implies that  nonuniform rate and distortion allocations permitted by \eqref{eq:wmse} and \eqref{eq:Idircond}--\eqref{eq:Hdircond} cannot yield significant performance gains.  
\begin{thm}
\label{thm:slba} 
For the Markov process \eqref{eq:St} such that $V$ has a regular density and a sequence of weight matrices such that \eqref{eq:liminfass} holds,  
the causal entropy function is bounded by
\begin{flalign}
& \mathbb H_{S^\infty}(d) \! \leq \!\!
~ \frac {n} 2 \log \! \left(\! a^2 \!+\! \frac{w  N(V)}{d/n }\! \right) %
\!+\!%
O_1\!(\log n) \!+ \! O_2\! \left( \! d^{\frac 1 2} \!\right)\!\!, \hspace{-5em}&   \label{eq:slba} 
\end{flalign}
where $O_1(\log n) \leq C_1 \log n$ and $O_2(\xi) \leq C_2 \min \left\{ \xi, c_2 \right\}$ for some nonnegative constants $C_1,\ C_2$ and $c_2$. 
Furthermore, \eqref{eq:slba} is attained by a DPCM quantizer with output ${\hat S_i}$ such that $\lim_{i \to \infty} \E{(S_i - \hat S_i)^T \mat W_i (S_i - \hat S_i) } \leq d$ and $\lim_{t \to \infty} H(\hat S_i | \hat S^{i-1}) \leq $ the right side of \eqref{eq:slba}. 
\end{thm}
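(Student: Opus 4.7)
My plan is to construct an explicit DPCM scheme with lattice quantization of the innovation and show that it attains \eqref{eq:slba} with asymptotically uniform temporal rate and distortion allocations. Since the causal $d$-entropy function is attained within the DPCM class (by the proposition preceding \eqref{eq:St}), and \eqref{eq:controlledhcost} reduces the controlled problem to the Markov source problem, it suffices to analyze the following scheme. At step $i$ the encoder maintains the decoder's a-priori estimate $\hat S_{i|i-1} = \mat A \hat S_{i-1}$, computes the innovation $\tilde S_i = S_i - \hat S_{i|i-1}$, quantizes it with a good $n$-dimensional non-dithered lattice $\Lambda$ whose second moment is calibrated to a per-step target distortion $d$, and entropy-codes the resulting cell index. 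The decoder reconstructs $\hat S_i = \hat S_{i|i-1} + q_\Lambda(\tilde S_i)$. The identity $\hat S_i - S_i = q_\Lambda(\tilde S_i) - \tilde S_i$ turns the time-averaged weighted MSE constraint into a per-step lattice-quantization constraint on the innovation; the weight normalization \eqref{eq:liminfass} is absorbed via a linear change of coordinates that converts $w$ into the determinantal prefactor of the final bound.

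I would bound the rate of each quantized innovation using the lattice entropy estimate of Kostina \cite{kostina2016lowd}: for a random vector $X$ in $\mathbb{R}^n$ with $(c_0, c_1)$-regular density and MSE budget $d$, a good lattice followed by entropy coding yields
\[
H(q_\Lambda(X)) \le \tfrac{n}{2}\log\!\left(\tfrac{n\, N(X)}{d}\right) + C_1 \log n + C_2 \sqrt{d},
\]
where $C_1$ absorbs the Zamir--Feder/Rogers gap between the Voronoi cell and the Euclidean ball, and $C_2 \sqrt{d}$ controls the deviation of the quantization-noise differential entropy from the Gaussian one, with constants depending only on $(c_0,c_1)$. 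I would then verify that the innovation density inherits regularity uniformly in $i$ by iterating the convolution-smoothing rules of \cite{polyanskiy2015wasserstein}: $\tilde S_i = \mat A e_{i-1} + V_{i-1}$ is a convolution of a bounded-second-moment variable with the regular density of $V_{i-1}$, so regularity is preserved along the recursion with constants independent of $i$.

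The technical core is the steady-state inequality
\[
N(\tilde S_{i+1}) \le a^2 \cdot \frac{d}{n} + N(V) + O(\sqrt{d}),
\]
which, substituted into the per-step rate estimate and time-averaged, produces exactly \eqref{eq:slba} with uniform per-step rate in the limit $i \to \infty$. Since a good lattice produces reconstruction error with nearly isotropic covariance $\Sigma_{e_i} \preceq (d/n)\mat I$ up to $O(\log n / n)$ corrections, the max-entropy bound immediately yields $N(\mat A e_i) \le a^2\, d/n$. The main obstacle is retaining $N(V)$ rather than the strictly larger $\det \Sigma_V^{1/n}$ on the right-hand side: a plain max-entropy argument on the independent sum $\mat A e_i + V_i$ would forfeit the non-Gaussian advantage of $V$. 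My remedy is Wasserstein continuity of differential entropy against a regular-density reference \cite{polyanskiy2015wasserstein}, which, combined with the trivial Wasserstein bound $W_2(V+\mat A e_i, V) \le \sqrt{\E{\|\mat A e_i\|^2}}$, gives $h(V + \mat A e_i) \le h(V) + O(\sqrt{d})$ in the small-$d$ regime and translates to the sharp $N(V)$ term in the entropy-power estimate. For moderate $d$ this sharpening is too weak to improve on the plain max-entropy bound, so the final correction saturates at a constant, which is precisely the two-regime cap $O_2(\xi) \le C_2 \min\{\xi, c_2\}$ advertised in the statement; the $O_1(\log n)$ term is the Zamir--Feder/Rogers penalty already noted, and the per-step rate's convergence to a constant in $i$ delivers the asymptotic uniformity of the allocations.
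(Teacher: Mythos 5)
Your proposal matches the paper's proof essentially step for step: the same innovation-quantizing DPCM scheme with a covering-efficient lattice, the same output-entropy bound from \cite{kostina2016lowd}, and the same use of the regularity-preservation and Wasserstein-continuity results of \cite{polyanskiy2015wasserstein} to retain $N(V)$ (rather than $(\det \mathsf \Sigma_V)^{1/n}$) in the leading term and to produce the $O_2(d^{1/2})$ correction with its saturating cap. The only point to tighten is that the regularity-preservation step requires the previous step's reconstruction error to be bounded \emph{almost surely}, not merely in second moment, which is exactly what the paper secures by calibrating the lattice's covering radius (rather than its second moment) to $d$.
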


We conclude \secref{sec:causalRd} with two extensions of \thmref{thm:seqslb}.
If the dynamic range of the eigenvalues of $\mat A$ is large, the bound in \thmref{thm:seqslb+} below, obtained by projecting out a subset of smaller eigenvalues of $\mat A$, can provide an improvement over \thmref{thm:seqslb} at medium to large $d$. 
Recall the definition of matrices $\mat J$, $\mat \Pi_{\ell}$ in \eqref{eq:Aprime}, \eqref{eq:pil}, respectively. 
\begin{thm}
\label{thm:seqslb+}
Consider the uncontrolled process \eqref{eq:St} and the distortion in \eqref{eq:wmse}. 
Consider matrices $\mat V_i$ such that $0 \preceq \mat V_i^{T} \mat V_i \preceq  \mat J^{T}  \mat W_i \mat J $, and $\mat V_i$ commutes with $\mat \Pi_{\ell} \mat \Pi_{\ell}^T$. Assume that
\begin{equation}
 w^\prime \triangleq \liminf_{i \to \infty} \left( \det \left(\mathsf \Pi_{\ell}^T \mat V_i^{T} \mat V_i \mat \Pi_{\ell} \right)\right)^{\frac 1 {m}} > 0.  \label{eq:liminfass+}
\end{equation}
 The causal rate-distortion function is bounded below as,
\begin{equation}
\mathbb R_{S^\infty}(d)  \geq \frac \ell 2 \log \left( a^{\prime\,2} + \frac {w^\prime N(\mat \Pi_{\ell}^T \mat J^{-1} V)}{d/\ell}\right), \label{eq:slbseq+}
\end{equation}
where $a^\prime$ is defined in \eqref{eq:aprime}.
\end{thm}

\thmref{thm:seqslb+} implies that $\mathbb R_{S^\infty}(d)$ is bounded as in \eqref{eq:mainu}.
If  $\mat W_i$ is singular, or if $V_i$ does not have a density with respect to the Lebesgue measure on $\mathbb R^n$, the bounds in \thmref{thm:seqslb} and \ref{thm:seqslb+} reduce to $n  \log a$ and $n \log a^\prime$, respectively, losing the dependence on $d$.  The bound in \thmref{thm:seqslb++} below is a decreasing function of $d$, even if $\mat W_i$ is singular, or if $V_i$ is supported on a subspace of $\mathbb R^n$. 
\begin{thm}
\label{thm:seqslb++}
Consider the uncontrolled process \eqref{eq:St} and the distortion in \eqref{eq:wmse}. Assume that the weight matrices $\mat W_i$ satisfy 
$\lim_{i \to \infty} \mat W_i = \mat L^T \mat L$,
where $\mat L$ is an $m \times n$ matrix, $m \leq n$. Suppose further that $V_i = \mat K_i V^\prime_i$, where $V^\prime_i$ is a $k$-dimensional random vector with covariance matrix $\mat \Sigma_{V^\prime}$, where $k \geq m$, and $\mat K_i$ are $n \times k$ matrices such that $\lim_{i \to \infty} \mat K_i = \mat K$.  The causal rate-distortion function is bounded below as
\begin{equation}
\mathbb R_{S^\infty}(d)  \geq \frac m 2 \log \left( a^{2} + \frac {w N(V^\prime)^{\frac k m}}{d/m}\right), \label{eq:slbseq+}
\end{equation}
\begin{flalign}
&\text{where } & a &=  \inf_{i \geq 1}  \left(\frac{ \det \left( \mathsf L \mathsf A^{i} \mat K \mathsf \Sigma_{V^\prime} \mat K^T \mathsf A^{i\, T} \mathsf L^{T} \right)}{ \det \mathsf \Sigma_{V^\prime} \det \left(\mat L \mat L^T\right) \det \left(\mat K^T \mat K \right) } \right)^{\frac 1 {2 i m}}\hspace{-1.5em},\\
&& w &= \left( \det \left(\mat L  \mat L^T\right) \det \left(\mat K^T  \mat K \right) \right)^\frac{1}{m}.
\end{flalign}
\end{thm}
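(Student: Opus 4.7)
The approach is to follow the same two-ingredient recipe as the proof of Theorem \ref{thm:seqslb} --- Shannon's lower bound applied to the conditional rate-distortion function at each step, combined with the entropy power inequality (EPI) to peel off the additive innovation --- but applied in a lower-dimensional projected space and with $i$ consecutive time steps grouped together. The grouping is precisely what the infimum over $i$ in the definition of $a$ tracks, and it is what allows the bound to remain nontrivial when $\mat L$ and $\mat K$ are rank-deficient.

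First, I would reduce to an $m$-dimensional problem by passing to the projected process $\tilde S_t \triangleq \mat L S_t$ with estimate $\tilde{\hat S}_t \triangleq \mat L \hat S_t$. Since $\lim_{i \to \infty} \mat W_i = \mat L^T \mat L$, the running cost in \eqref{eq:wmse} reduces in the long-time limit to $\E{\|\tilde S_i - \tilde{\hat S}_i\|^2}$, and the data-processing inequality for directed information together with causality give $I(S^t \to \hat S^t) \geq I(\tilde S^t \to \tilde{\hat S}^t)$. A standard splitting yields $I(\tilde S^t \to \tilde{\hat S}^t) \geq \sum_{j} I(\tilde S_{ji}; \tilde{\hat S}_{ji} \mid \tilde{\hat S}^{ji-1})$ along a subsequence of multiples of $i$, and the sum can be lower-bounded term-by-term by the Shannon lower bound in dimension $m$.

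Second, the main recursion is set up on blocks of $i$ time steps. Iterating the dynamics gives
\begin{equation}
 \tilde S_{t+i} = \mat L \mat A^i S_t + \mat L \sum_{\ell=0}^{i-1} \mat A^{i-1-\ell} \mat K_{t+\ell} V^\prime_{t+\ell}, \notag
\end{equation}
and, since the effective noise $\tilde V^{(i)}_t \triangleq \mat L\sum_{\ell=0}^{i-1} \mat A^{i-1-\ell} \mat K_{t+\ell} V^\prime_{t+\ell}$ is independent of the past decoder outputs, the conditional EPI in $\mathbb R^m$ yields
\begin{equation}
N(\tilde S_{t+i}\mid \tilde{\hat S}^{t+i-1}) \geq N(\mat L \mat A^i S_t \mid \tilde{\hat S}^{t+i-1}) + N(\tilde V^{(i)}_t). \notag
\end{equation}
The first term is in turn bounded below by the previous block's conditional entropy power scaled by the appropriate matrix-volume factor that replaces $|\det \mat A|^{2/n}$ with $(\det \mat \Pi_\ell^T \mat J^{-1} \mat A \mat J \mat \Pi_\ell)^{2/\ell}$-style quantities associated with the projected dynamics, and then combined with the Shannon lower bound $I \geq \tfrac{m}{2}\log\bigl(N(\cdot)/(d_{\text{block}}/m)\bigr)$ and a temporal averaging argument à la Theorem \ref{thm:seqslb}, it produces the first term $a^2$ inside the logarithm of \eqref{eq:slbseq+}.

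Third, for the second term, the EPI applied to the $i$ independent summands in $\tilde V^{(i)}_t$ produces an entropy power of the form $(\det(\mat L \mat A^i \mat K \mat \Sigma_{V^\prime} \mat K^T \mat A^{iT} \mat L^T))^{1/m}$ times the intrinsic entropy power of $V^\prime$ per time step. To isolate $N(V^\prime)^{k/m}$ rather than $(\det \mat \Sigma_{V^\prime})^{1/m}$, I would factor the transformation $V^\prime \mapsto \mat L \mat A^j \mat K V^\prime$ through a square invertible piece (for which $h(\cdot)$ changes by a logarithmic Jacobian) and normalize by $\det(\mat L \mat L^T)$ and $\det(\mat K^T \mat K)$; this is exactly the normalization appearing in the denominator of the definition of $a$ and in $w$. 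Solving the per-block recursion, substituting, and taking the infimum over $i$ on both sides gives \eqref{eq:slbseq+}.

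The main obstacle will be the clean handling of the non-square matrices $\mat L$ and $\mat K$ in the EPI step: for rectangular $\mat G$ one does not have a direct identity $h(\mat G X) = h(X) + \log|\det \mat G|$, so isolating $N(V^\prime)^{k/m}$ and getting the right $(\det(\mat L \mat L^T)\det(\mat K^T \mat K))^{1/m}$ prefactor requires carefully tracking the loss in applying the EPI to a sum of images of independent vectors under different linear maps. The freedom to take $\inf_{i \ge 1}$ is precisely what absorbs any slack in these estimates and delivers a valid lower bound uniform in the choice of block length.
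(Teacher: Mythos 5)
Your proposal does not go through as written, and the failure point is exactly the one you flag at the end but do not resolve. The block recursion on the projected process does not close: after $i$ steps you need to relate $N(\mat L \mat A^i S_t \mid \cdot)$ to the previous block's quantity $N(\mat L S_t \mid \cdot)$, but these are entropy powers of two \emph{different} $m$-dimensional projections of the $n$-dimensional state, and unless the row space of $\mat L$ is $\mat A^T$-invariant there is no function carrying one to the other and no determinant identity relating them. Similarly, applying the EPI in $\mathbb R^m$ to the block noise requires each summand $\mat L \mat A^{j}\mat K V'$ to have a density of full rank $m$ and still leaves you with a sum of entropy powers from which the geometric factor $a^{2(i-j)}$ and the normalization $(\det(\mat L\mat L^T)\det(\mat K^T\mat K))^{1/m}$ do not fall out; the $\inf_{i\geq 1}$ in the theorem is not a block length that "absorbs slack" but ranges over the propagation age of each noise term. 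Finally, restricting the directed information to the subsequence of multiples of $i$ loses control of the distortion constraint, which only bounds the average over \emph{all} time steps, not over the subsequence.

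The paper's proof avoids the EPI entirely for this theorem, precisely because the noise $\mat K V'_i$ and the weight $\mat L^T\mat L$ may be degenerate in $\mathbb R^n$. Instead it writes $S_i = \sum_{j=0}^{i-1}\mat A^{i-j-1}V_j$, forms the estimation errors $\tilde V_j = V_j - \E{V_j\mid\hat S^i}$, and proves they are mutually \emph{uncorrelated}; this orthogonality turns the weighted MSE at time $i$ into an exact sum of per-noise contributions $\E{\tilde V_{j-1}^T\mat A^{(i-j)T}\mat L^T\mat L\mat A^{i-j}\tilde V_{j-1}}$. Each contribution is then lower-bounded by the single-shot distortion-rate function of $\mat L\mat A^{i-j}\mat K V'_{j-1}$ at the \emph{cumulative} rate $\sum_{\ell=j}^i r_\ell$ (justified by a chain-rule argument), and Proposition \ref{prop:hlx} handles the non-square maps, producing exactly the normalization in $a$ and $w$. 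The resulting quantity $\ushort d_i'' = N(V')^{k/m}\sum_j 2^{-2\sum_{\ell=j}^i r_\ell}a^{2(i-j)}$ obeys a one-step recursion of the same form as in Theorem \ref{thm:seqslb}, and the proof concludes by the same convexity and averaging argument. The innovations-orthogonality decomposition is the missing idea you would need to supply.
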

If $k = m = n$, \thmref{thm:seqslb++} reduces to \thmref{thm:seqslb}.

\section{C\hspace{-.6pt}o\hspace{-.6pt}n\hspace{-.6pt}t\hspace{-.6pt}r\hspace{-.6pt}o\hspace{-.6pt}l, 
e\hspace{-.6pt}s\hspace{-.6pt}t\hspace{-.6pt}i\hspace{-.6pt}m\hspace{-.6pt}a\hspace{-.6pt}t\hspace{-.6pt}i\hspace{-.6pt}o\hspace{-.6pt}n
a\hspace{-.6pt}n\hspace{-.6pt}d
c\hspace{-.6pt}o\hspace{-.6pt}m\hspace{-.6pt}m\hspace{-.6pt}u\hspace{-.6pt}n\hspace{-.6pt}i\hspace{-.6pt}c\hspace{-.6pt}a\hspace{-.6pt}t\hspace{-.6pt}i\hspace{-.6pt}o\hspace{-.6pt}n 
s\hspace{-.6pt}e\hspace{-.6pt}p\hspace{-.6pt}a\hspace{-.6pt}r\hspace{-.6pt}a\hspace{-.6pt}t\hspace{-.6pt}e\hspace{-.6pt}d}
\label{sec:sep}

An early quantization-control separation result for Gaussian systems was proposed by Fischer~\cite{fischer1982optimalquantizedcontrol}. Tatikonda~et~al.~\cite{tatikonda2004stochastic} considered control of fully observed system over a noisy channel and showed that certainty equivalence control is optimal if and only if control has no \emph{dual effect}, that is to say the present control cannot affect the future state uncertainty. Here, we observe that as long as both the encoder and the controller have access to past controls, separated design for control over noisy channels is optimal, both for fully observed systems and for Gaussian partially observed systems.

Recall the last-step weight matrix $\mathsf S_{t}$ in \eqref{eq:lqr}, and let
$\mathsf S_i$, $1 \leq i \leq t-1$ be the solution to the Riccati recursion, 
\begin{align}
\mathsf S_i = \mathsf Q + \mathsf A^T \left( \mathsf S_{i+1} -  \mathsf M_i \right) \mathsf A.
\end{align} 
\begin{flalign}
&\text{where } & \mathsf M_i &\triangleq  \mathsf S_{i+1} \mathsf B  ( \mathsf R + \mathsf B^T \mathsf S_{i+1} \mathsf B)^{-1}  \mathsf B^T \mathsf S_{i+1}\\
&& &= \mat L_i^T ( \mathsf R + \mathsf B^T \mathsf S_{i+1} \mathsf B) \mat L_i,\\
&& \mathsf L_i &\triangleq 
( \mathsf R + \mathsf B^T \mathsf S_{i+1} \mathsf B)^{-1} \mathsf B^T \mathsf S_{i + 1}.
\end{flalign}

The optimal control of partially observed system in \eqref{eq:system1}, \eqref{eq:system2} in the scenario of \figref{fig:system}, for a fixed dynamical channel $\{P_{G_i | G^{i-1}, F^i}\}_{i = 1}^\infty$ can be obtained as follows: 
\begin{enumerate}[(i)]
\item \label{i} The encoder computes the optimal state estimate
\begin{equation}
\hat X_i \triangleq \E{X_i | Y^i, U^{i-1}}. \label{eq:estt}
\end{equation}
\item The encoder maps the optimal control signal $U_i^\star = - \mat L_i \mat A \hat X_i$ to channel codeword $F_i$; 
\item Having received the channel output $G_i$, the decoder computes $U_i$ and applies it to the system. 
\label{iii}
\end{enumerate}
The encoder $\{P_{F_i | Y^i, U^{i-1}}\}$ and the controller $\{P_{U_i | G^{i}, U^{i-1}}\}$ are designed to minimize the LQR cost, written as follows. 
\begin{thm}
\label{thm:sep}
The LQR cost in the scenario of \figref{fig:system} for the partially observed system \eqref{eq:system1}, \eqref{eq:system2} separates as, 
\begin{equation}
\lqr{X^{t}, U^{t-1}} = \sum_{i = 0}^{t-1} c_{i} +  \sum_{i = 1}^{t-1} e_{i} +  \sum_{i = 1}^{t-1} d_{i}, \label{eq:sep}
\end{equation}
where the control, estimation and communication costs are respectively given by 
\begin{align}
c_i &\triangleq %
\tr\left(\mathsf \Sigma_V \mathsf S_{i+1} \right), \quad c_0 \triangleq \tr\left(\mathsf \Sigma_{X_1} \mathsf S_{1} \right) \label{eq:ci}\\
e_i &\triangleq \E{
 \left( X_i - \hat X_i  \right)^T  
 \mat A^T  \mat M_i \mat A
 \left( X_i - \hat X_i  \right) },\\
 d_i &\triangleq \E{(U_i - U_i^\star)^T ( \mathsf R + \mathsf B^T \mathsf S_{i+1} \mathsf B) (U_i - U_i^\star)}. \label{eq:di}
 \end{align} 
\end{thm}

\begin{proof}
Denote for $1 \leq i \leq t$
\begin{align}
\!\!\! b_{i} &\triangleq 
\E{ \sum_{j = 1}^{i-1} \left(\vect X_{j}^T \mathsf Q X_{j}  + \vect U_j^T \mathsf R U_j \right)} +  \E{X_{i}^T \mathsf S_{i} X_{i}} \label{eq:bi}.
\end{align}
Eliminating $X_{i}$ from  \eqref{eq:bi} by substituting $X_{i}= \mat A X_{i-1} + \mat B U_{i-1} + V_{i-1}$ into \eqref{eq:bi}, completing the squares and using that $V_{i-1}$ is zero-mean and independent of $X_{i-1}, U_{i-1}$, we re-write \eqref{eq:bi} equivalently as
\begin{align}
b_{i} &=  b_{i-1} + \E{V_{i-1}^T \mathsf S_{i} V_{i-1}} + q_{i-1} \label{eq:bdrecursive}, 
\end{align}
where
\begin{align}
\!\!\!\!\! q_i &\triangleq  \E{
 \left( \mathsf L_i \mat A X_i + U_i \right)^T\!\!
( \mathsf R + \mathsf B^T \mathsf S_{i+1} \mathsf B)\!
 \left( \mathsf L_i \mat A X_i + U_i \right)}\!\!. \hspace{-10pt} \label{eq:qi}
\end{align}
Applying \eqref{eq:bdrecursive} repeatedly, we obtain
\begin{align}
 b_{t} &= b_1 + \sum_{i = 1}^{t-1} \E{V_i^T \mathsf S_{i+1} V_i} + \sum_{i = 1}^{t-1} q_i, \label{eq:bsum}
\end{align}
which is equivalent to \eqref{eq:sep} in the fully observed scenario, i.e. when $\hat X_i = X_i$.

To show the more general case, observe first that for random vectors $X$, $Y$ and $\hat X$ forming a Markov chain $X - Y - \hat X $, 
it holds with $\hat X^\star \triangleq \E{X|Y}$ that,
\begin{align}
\!\!\! \E{ \| X - \hat X\|^2} &= \E{ \| X - \hat X^\star \|^2}%
 + \E{\|\hat X^\star - \hat X \|^2 }\!\!.\!\!\!  \label{eq:msedecomp}
\end{align}
Applying \eqref{eq:msedecomp} to $\mat L_i \mat A X_i - (Y^i, U^{i-1}) - U_i$, we separate $q_i$ into  
$
q_i =   e_i + d_i
$,
and thereby rewrite \eqref{eq:bsum} as \eqref{eq:sep}. 
\end{proof}
Using \thmref{thm:sep}, we write the minimum achievable LQR cost as
\begin{align}
 &~\textstyle{\sum_{i = 0}^{t-1}} c_{i} +  \inf \left\{ \textstyle{\sum_{i = 1}^{t-1}} e_{i} +   \textstyle{\sum_{i = 1}^{t-1}} d_{i} \right\} \notag\\
\geq&~  \textstyle{\sum_{i = 0}^{t-1}} c_{i} + \inf \left\{ \textstyle{\sum_{i = 1}^{t-1}} e_{i} \right\} + \inf \left\{\textstyle{\sum_{i = 1}^{t-1}} d_{i}\right\}, \label{minLQRbound} 
\end{align}
where the infimum is over all admissible control sequences. Equality in \eqref{minLQRbound} is not attained in general; two important scenarios when is it achieved are fully observed systems and for Gaussian partially observed systems. 
In the fully observed case, the estimation terms $e_i$ disappear, and in the Gaussian partially observed case, it is well known that those terms do not depend on the choice of controls. 
Indeed, if $V_i$ and $W_i$ are both Gaussian, then the optimal estimator \eqref{eq:estt} can be implemented via a linear recursion given by the Kalman filter.  At time $i$, having observed $Y_i$, the Kalman filter forms an estimate of the system state using $Y_i$ and the prior estimate $\hat X_{i-1}$ as follows  (e.g. \cite{bertsekas1995dynamic}): 
\begin{flalign}
& & \hat X_{i} &= \mathsf A \hat X_{i-1} + \mathsf B U_{i-1} + \mathsf K_{i} \tilde Y_i \label{eq:kalman},\\
&\text{where } & \tilde Y_i &\triangleq   Y_i - \mathsf C \mathsf A \hat X_{i-1} - \mathsf C \mathsf B U_{i-1} &&
\end{flalign} y
is the {\it innovation}, Gaussian and  independent of $\hat X_{i-1}$, $U_{i-1}$ and $\tilde Y^{i-1}$, 
and the Kalman filter gain is found from the Riccati recursion:
\begin{align}
\!\!\!\!\mathsf K_i &\triangleq \mathsf P_{i | i-1} \mathsf C^T \left( \mathsf C \mathsf P_{i|i-1} \mathsf C^T + \mathsf \Sigma_W \right)^{-1}, \\
\!\!\!\! \mathsf P_{i+1 | i} &= \mathsf A \left( \mathsf I - \mathsf K_i \mathsf C \right)  \mathsf P_{i | i-1}  \mathsf A^T + \mathsf \Sigma_V,
~~ \mathsf P_{1|0} \triangleq \mathsf \Sigma_{X_1}. \label{eq:Pt+1|t}
\end{align}
The covariances of the innovation and the estimation error are given by, respectively, 
\begin{align}
\Cov (\tilde Y_i) &= \mathsf C \mathsf P_{i | i-1} \mathsf C^T + \mathsf \Sigma_{W}, \label{eq:covinno}\\
\Cov( X_i - \hat X_i ) &=  (\mathsf I - \mathsf K_i \mathsf C) \mathsf P_{i|i-1}  \label{eq:coverrx},
\end{align}
and the estimation error is thus given by
\begin{equation}
e_i = \tr\left( \left(\mathsf I - \mathsf K_i \mathsf C) \mathsf P_{i|i-1}\right) \mathsf A^T \mathsf M_i \mathsf A\right). 
\end{equation}

An immediate corollary to \thmref{thm:sep} is the following. 
\begin{cor}
\label{cor:link}
In both fully observed systems and Gaussian partially observed systems, the rate-cost and the entropy-cost functions and independent of the control sequence $U^\infty$ and are given by, respectively,
\begin{align}
\mathbb R(b) &= \mathbb R_{\hat X^\infty \| \mathcal D U^{\infty}} (b - b_{\min}), \label{eq:rcausalrcp}\\
\mathbb H(b) &= \mathbb H_{\hat X^\infty \| \mathcal D U^{\infty}} (b - b_{\min}), \label{eq:rcausalrh}
\end{align}
 where $b_{\min}$ is the minimum cost attainable without communication constraints in \eqref{eq:bdecomp}, and causal rate- and  entropy-distortion functions are evaluated with weight matrices 
\begin{equation}
\mat W_i = \mat A^T \mat M_i \mat A. \label{eq:Wimat}
\end{equation}
\end{cor}
\begin{proof}
 Since equality in \eqref{minLQRbound} holds, we need to minimize $\sum_{i = 1}^{t-1} d_{i}$ subject to either directed information or entropy constraint. Once we argue that the minimal achievable distortions can be equivalently written as
\begin{equation}
 d_i = \E{( \hat X_i - \doublehat {X_i})^T \mat A^T  \mat M_i \mat A ( \hat X_i - \doublehat {X_i} )}, \label{eq:dialt}
\end{equation}
where $\doublehat {X_i}$ is the controller's estimate of $\hat X_i$, and $U_i = - \mat L_i \mat A \doublehat {X_i}$, we will immediately obtain \eqref{eq:rcausalrcp}, \eqref{eq:rcausalrh}.  But this follows via the same arguments as in the proof of \propref{prop:DrM} below, using data processing for directed information \cite[Lemma 4.8.1]{tatikonda2000thesis} in lieu of that for mutual information. 
\end{proof}

Via \corref{cor:link}, we can show the converse for control over noisy channels in \propref{prop:dpconverse} using a converse for tracking over noisy channels. 
Tracking $S_1, S_2, \ldots$ over a causal feedback channel $P_{G^t \| F^t}$ gives rise to a joint distribution of the form $P_{S^t} P_{F^t \| S^t, \mathcal D G^t} P_{G^t \| F^t} P_{\hat S^t \| G^t}$, where $P_{F^t \| S^t, \mathcal D G^t}$ and $P_{\hat S^t \| G^t}$ represent encoder and decoder mappings, and the goal is to minimize the distortion between $S^t$ and $\hat S^t$.  A necessary condition for the existence of an encoder/decoder pair achieving distortion $d$ in the limit of infinite time horizon is \cite[Th. 5.3.2]{tatikonda2000thesis}, 
\begin{equation}
\mathbb R_{S^\infty}(d) \leq C \label{eq:ClRddir}.
\end{equation}

\propref{prop:dpconverse} follows by plugging \eqref{eq:controlledratecost} and \eqref{eq:rcausalrcp} in \eqref{eq:ClRddir}.

\section{Converse theorems: tools and proofs}
\label{sec:c}

We start by introducing a few definitions and tools, some classical, some novel, that form the basis of our technique.
 
Conditional entropy power is defined as
\begin{equation}
 N(X|U) \triangleq \frac 1 {2 \pi e} \exp\left( \frac 2 n h(X|U) \right),
\end{equation}
where $h(X|U) = -\E{ \int_{\mathbb R^n} f_{X|U}(x| U) \log f_{X|U}(x | U) dx}$ is the conditional differential entropy of $X$. 

\begin{prop}
For $X \in \mathbb R^n$, 
 \begin{align}
 n N(X|U)  &\leq \Var{X|U}, \label{eq:epgauss} 
\end{align}
with equality if and only if  $X = U + S$, where $S$ is Gaussian.  
 \label{prop:gauss}
\end{prop}
\begin{proof}
The unconditional case is a well-known maximum entropy result (e.g. \cite[Example 12.2.8]{cover2012elements}). This implies that for each realization of $u$, 
$ n N(X|U = u)  \leq \Var{X|U = u}$. Taking expectation with respect to $U$ of both sides and using strict convexity of $x \mapsto \exp(x)$, we obtain \eqref{eq:epgauss} together with condition for equality. 
\end{proof}

An essential component of our analysis, the conditional entropy power inequality (EPI), follows from the unconditional EPI \cite{shannon1948mathematical,stam1959some}  using convexity of the function $(x, y) \mapsto \log \left(\exp(x) + \exp(y) \right)$ . 
\begin{thm}[Conditional EPI]
If $X \pperp Y$ given $U$, then
 \begin{align}
N(X + Y | U)\geq N(X|U) + N(Y|U) \label{eq:epi}.
\end{align}
\label{thm:epi}
\end{thm}
\vspace{-1.5em}
In causal data compression, the quantized data at current step creates the side information for the data to be compressed at the next step. The following bound to the conditional entropy power minimized over side information will be vital in proving our converse theorems.  

\begin{prop}
For $X \in \mathbb R^n$,
 \begin{align}
\inf_{P_{U|X} \colon I(X; U) \leq r} N(X | U) &\geq N(X) \exp\left( - 2r / n \right). \label{eq:unconditioning}
\end{align} 
\label{prop:unconditioning}
\end{prop}

\begin{proof}
Observe that
\begin{align}
\inf_{P_{U|X} \colon I(X; U) \leq r} h(X|U) &= \inf_{P_{U|X} \colon h(X) - h(X|U) \leq r} h(X|U) \notag\\
&\geq h(X) - r, 
\end{align}
which is equivalent to \eqref{eq:unconditioning}.  
\end{proof}

If $\mathsf L$ is square, the entropy power scales as
\begin{align}
N(\mathsf L X  | U ) = |\det \mathsf L|^\frac{2}{n} N(X | U ). \label{eq:epscale}
\end{align}
The next proposition generalizes the scaling property \eqref{eq:epscale} to the case where the multiplying matrix is not square. \footnote{\propref{prop:hlx} is stated for the unconditional case for simplicity only; naturally, its conditional version also holds. }
\begin{prop}
\label{prop:hlx}
 Let $X \in \mathbb R^n$ be a random vector with covariance $\mathsf \Sigma_X \succ 0$, let $m \leq n$, and let $\mathsf L$ be an $m \times n$ matrix with rank $m$. Then, 
\begin{align}
N( \mathsf L X) \geq  \left(\frac{ \det \left( \mathsf L \mathsf \Sigma_X \mathsf L^T \right)}{\det \mathsf \Sigma_X } \right)^{\frac 1 m} \left( N(X) \right)^{\frac n m}
 \label{eq:hlx}
\end{align}
Equality holds in \eqref{eq:hlx} if $m = n$ or if $X$ is Gaussian. 
\end{prop}

\begin{proof}
Without loss of generality, assume $\E{X} = 0$.  Express $h(X)$ through the relative entropy $D(\cdot \| \cdot )$ as
\begin{equation}
\! h(X)= \frac 1 2 \log \left( (2 \pi e)^n \det \mathsf \Sigma_X \right) - D \left( P_X \| \mathcal N(\mathbf 0, \mathsf \Sigma_X)\right)\!, \!\!\!\! \label{eq:hlx0}
\end{equation}
By the data processing inequality of relative entropy, 
\begin{align}
 &~ 
 D \left( P_X \| \mathcal N(\mathbf 0, \mathsf \Sigma_X)\right)%
 \geq 
 D \left( P_{\mathsf LX} \| \mathcal N(\mathbf 0, \mathsf L \mathsf \Sigma_X \mathsf L^T)\right) 
 \\
=&~ \frac 1 2 \log \left( (2 \pi e)^m \det \left(\mathsf L \mathsf \Sigma_X \mathsf L^T \right) \right) - h(\mathsf L X), 
 \label{eq:hlx1}
\end{align}
and \eqref{eq:hlx} follows by substituting \eqref{eq:hlx1} into \eqref{eq:hlx0} and applying \eqref{eq:ep}. 
\end{proof}

The (single-shot) distortion-rate function with respect to the weighted mean-square distortion is defined as follows. 
\begin{defn}[conditional distortion-rate function]
\label{defn:drM}
Let $X \in \mathbb R^n$ be a random vector, and $\mathsf M \succeq 0$ be an $n \times n$ matrix. The distortion-rate function under the weighted MSE with side information $U$ at both the encoder and the decoder  is 
\begin{equation}
{\mathbb D}_{r, \mathsf M}(X | U) \triangleq \!\!\!\! \inf_{\substack{ %
P_{\hat X | XU} \colon \\
I(X; \hat X|U) \leq r}} \!\!
\E{ (X - \hat X)^T \mathsf M (X -  \hat X)}\!\!.  \!\!\!\!\!\!  \label{eq:drMdefcond}
\end{equation}
\end{defn}
If no side information is available, i.e. $U \equiv 0$, we denote the corresponding unconditional distortion-rate function by ${\mathbb D}_{r, \mathsf M}(X)$.  The distortion-rate function under MSE distortion corresponds to $\mathsf M = \mathsf I$, and we simply denote 
\begin{align}
 {\mathbb D}_{r}(X | U) &\triangleq {\mathbb D}_{r, \mathsf I}(X | U).
\end{align}

The next proposition equates the distortion-rate functions under weighted and non-weighed  MSE.   
\begin{prop}
\label{prop:DrM}
 Let $X \in \mathbb R^n$ be a random vector, and let $\mat L$ be an $m \times n$ matrix. The following equality holds. 
\begin{align}
 {\mathbb D}_{r}(\mat L X | U) &=  {\mathbb D}_{r,\, \mat L^T \mat L}(X | U).  \label{eq:Ddimcond}
\end{align}
\end{prop}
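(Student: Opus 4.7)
\medskip

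My plan is to prove the two inequalities $\leq$ and $\geq$ separately.

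\textbf{Direction} $\mathbb{D}_r(\mathsf{L}X | U) \leq \mathbb{D}_{r,\,\mathsf{L}^T\mathsf{L}}(X | U)$: Given any $P_{\hat X | X U}$ feasible for the right-hand side, I set $\hat Y \triangleq \mathsf{L}\hat X$. Then, since $\mathsf{L}X$ is a deterministic function of $X$ and $\hat Y$ a deterministic function of $\hat X$, the Markov chain $\mathsf{L}X - X - \hat X - \hat Y$ (conditional on $U$) together with the data-processing inequality gives $I(\mathsf{L}X; \hat Y | U) \leq I(X; \hat X | U) \leq r$. The distortion matches exactly:
\begin{equation}
\mathbb{E}\|\mathsf{L}X - \hat Y\|^2 = \mathbb{E}\bigl[(X-\hat X)^T \mathsf{L}^T\mathsf{L}(X-\hat X)\bigr],
\end{equation}
so $(\mathsf{L}X,\hat Y)$ is feasible for the left-hand side and achieves the same value.

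\textbf{Direction} $\mathbb{D}_r(\mathsf{L}X | U) \geq \mathbb{D}_{r,\,\mathsf{L}^T\mathsf{L}}(X | U)$: Let $P_{\hat Y | (\mathsf{L}X), U}$ be feasible for the left-hand side. Define $\hat X \triangleq \mathsf{L}^{+}\hat Y$, where $\mathsf{L}^{+}$ is the Moore--Penrose pseudoinverse of $\mathsf{L}$, so that $\mathsf{L}\mathsf{L}^{+}$ is the orthogonal projector onto the column space of $\mathsf{L}$. Two facts drive the proof. First, because the kernel $P_{\hat Y | X, U}$ depends on $X$ only through $\mathsf{L}X$, we have the conditional Markov chain $X - \mathsf{L}X - \hat Y$ given $U$, hence $I(X;\hat Y | \mathsf{L}X,U)=0$, and the chain rule applied two ways to $I(X,\mathsf{L}X;\hat Y | U)$ yields $I(X;\hat Y | U) = I(\mathsf{L}X; \hat Y | U) \leq r$. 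Since $\hat X$ is a deterministic function of $\hat Y$, data processing gives $I(X;\hat X | U) \leq I(X;\hat Y | U) \leq r$, so $(X,\hat X)$ is feasible for the right-hand side.

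Second, the distortion does not increase: $\mathsf{L}X$ lies in the column space of $\mathsf{L}$, and $\mathsf{L}\hat X = \mathsf{L}\mathsf{L}^{+}\hat Y$ is the orthogonal projection of $\hat Y$ onto that subspace, so Pythagoras gives
\begin{equation}
\|\mathsf{L}X - \hat Y\|^2 = \|\mathsf{L}X - \mathsf{L}\hat X\|^2 + \|\hat Y - \mathsf{L}\mathsf{L}^{+}\hat Y\|^2 \geq (X-\hat X)^T \mathsf{L}^T\mathsf{L}(X-\hat X).
\end{equation}
Taking expectations and infimizing over feasible $\hat Y$ yields the desired inequality.

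\textbf{Anticipated obstacle.} The only subtlety is direction $(\geq)$, where one must justify that the reduction $\hat X = \mathsf{L}^{+}\hat Y$ is compatible with both the rate constraint and the distortion constraint; the crucial input is the conditional Markov chain $X - \mathsf{L}X - \hat Y$ induced by the definition of $\mathbb{D}_r(\mathsf{L}X | U)$ (which restricts the kernel to depend on $X$ only via $\mathsf{L}X$), and this is what makes the pseudoinverse construction work regardless of the rank of $\mathsf{L}$.
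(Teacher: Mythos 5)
Your proof is correct and uses essentially the same ingredients as the paper's: the forward direction pushes the reproduction through $\mathsf{L}$, and the reverse direction projects the reproduction onto the column space of $\mathsf{L}$ (your $\mathsf{L}\mathsf{L}^{+}$ is exactly the paper's orthogonal projector $\mathsf{\Pi}$) and exploits the Markov chain $X - \mathsf{L}X - \hat Y$ together with data processing. The only difference is organizational: the paper routes both quantities through the intermediate object $\min_{\hat X\colon I(\mathsf{L}X;\mathsf{L}\hat X)\le r}\mathbb{E}\bigl[(X-\hat X)^T\mathsf{L}^T\mathsf{L}(X-\hat X)\bigr]$, whereas you prove the two inequalities directly; the mathematical content is the same.
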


\begin{proof}
We show the unconditional version of \eqref{eq:Ddimcond}; the conditional one is analogous. 
We will prove 
\begin{align}
\hspace*{-.8em} {\mathbb D}_{r}(\mat L X) %
 &\triangleq  \inf_{\hat {X}  \colon I(\mat L X; \hat {X}) \leq r}  \E{ ( \mat L X - \hat X)^T  (  \mat L X - \hat X) } \label{eq:Drldef}\\
&=\hspace*{-.3em} \inf_{\hat {X}  \colon I(\mat L X; \mat L \hat {X}) \leq r}  \hspace*{-.3em}  \E{ ( X - \hat X)^T \mat L^T \mat L  (  X - \hat X)} \label{eq:urestr}\\
&= \inf_{\hat {X}  \colon I(X; \hat X ) \leq r}  \E{ ( X - \hat X)^T \mat L^T \mat L (  X - \hat X) } \label{eq:Ddima}\\
 &\triangleq {\mathbb D}_{r, \mat L^T \mat L}(X).
 \end{align}
To show $\geq$ in \eqref{eq:urestr}, let $\mathsf \Pi$ be the orthogonal projection matrix onto the column space of $\mat L$. We use  $\| \mathsf \Pi x\| \leq \| x \|$  and $\mathsf \Pi \mat L  x  = \mat L x$ to claim $\E{ ( \mat L X - \hat X)^T  (  \mat L X - \hat X) } \geq \E{ ( \mat L X - \mathsf \Pi \hat X)^T (  \mat L X - \mathsf \Pi \hat X) }$, and data processing for mutual information to claim $I(\mat L X; \hat {X}) \geq I(\mat L X; \mat \Pi \hat {X})$. Likewise, $\leq$ holds in \eqref{eq:Ddima} by data processing. To show that $\leq$ holds in \eqref{eq:urestr}, we note that the optimization problem in  \eqref{eq:urestr} is obtained by restricting the domain of minimization in \eqref{eq:Drldef} to $\hat X \in \mathrm{Im}(\mat L)$\footnote{The image of a linear transformation described by matrix $\mat L$ is the span of its column vectors.}. To show that $\geq$ holds in \eqref{eq:Ddima}, we note that the optimization problem in \eqref{eq:urestr} is obtained by restricting the domain of minimization in \eqref{eq:Ddima} to $\hat X \in \mathrm{Im}(\mat L)$ satisfying the Markov chain condition $X - \mat L X - \hat X$, since for such $\hat X$, $I(X; \hat X ) = I(\mat L X; \hat X ) = I(\mat L X; \mat L\hat X )$.
\end{proof}

\begin{remark}
We may always assume that $X$ has uncorrelated components when computing distortion-rate functions. Indeed, let $\mat L$ be the orthogonal transformation that diagonalizes the covariance matrix of $X$. Since $\mat L^T \mat L = \mat I$, by \propref{prop:DrM} the MSE distortion-rate functions of $X$ and $\mat L X$ coincide.
\label{rem:uncor}
\end{remark}

The following tool will be instrumental in our analysis. 
\begin{thm}[Conditional Shannon lower bound]
\label{thm:slb}
The conditional distortion-rate function is bounded below as
 \begin{align}
\mathbb D_{r}(X|U) &\geq  \ushort {\mathbb D}_{r}(X|U) \label{eq:slb} %
\triangleq 
nN(X|U) \exp \left( - 2 r / n \right), 
\end{align}
with equality if $X = U + S$, where $S \sim \mathcal N(0, \sigma^2 \mat I)$. 
\end{thm}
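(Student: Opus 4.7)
The plan is to chain together three elementary facts: the identity $I(X;\hat X\mid U)=h(X\mid U)-h(X\mid\hat X,U)$, translation-invariance plus conditioning-reduces-entropy, and the Gaussian maximum-entropy bound on $h$ under a second-moment constraint. The conditional structure passes through transparently because the same inequalities hold when we condition on $U$ (or on $(\hat X,U)$) throughout.

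Concretely, fix any $P_{\hat X\mid XU}$ with $I(X;\hat X\mid U)\le r$ and write
\begin{align}
r \;\ge\; I(X;\hat X\mid U) \;=\; h(X\mid U) - h(X\mid \hat X, U).
\end{align}
Since given $(\hat X,U)$ the vector $X$ differs from $X-\hat X$ by a constant, $h(X\mid\hat X,U)=h(X-\hat X\mid\hat X,U)$; and conditioning reduces differential entropy, so $h(X-\hat X\mid\hat X,U)\le h(X-\hat X)$. Therefore
\begin{align}
r \;\ge\; h(X\mid U) - h(X-\hat X).
\end{align}

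Next, let $D\triangleq\mathbb E[(X-\hat X)^T(X-\hat X)]$. By the Gaussian maximum-entropy theorem applied to the $n$-dimensional vector $X-\hat X$ under the trace-of-covariance constraint $\mathbb E\|X-\hat X\|^2\le D$,
\begin{align}
h(X-\hat X) \;\le\; \tfrac{n}{2}\log\!\left(2\pi e\,\tfrac{D}{n}\right).
\end{align}
Substituting and rearranging via the definition $N(X\mid U)=\frac{1}{2\pi e}\exp(\tfrac{2}{n}h(X\mid U))$ yields
\begin{align}
\tfrac{D}{n} \;\ge\; N(X\mid U)\,\exp(-2r/n),
\end{align}
i.e., $D\ge n\,N(X\mid U)\exp(-2r/n)$. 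Taking the infimum over all admissible $P_{\hat X\mid XU}$ on the left-hand side gives $\mathbb D_r(X\mid U)\ge n\,N(X\mid U)\exp(-2r/n)$, which is exactly the claim.

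There is no real obstacle here: the only place any care is required is in the justification $h(X\mid\hat X,U)=h(X-\hat X\mid\hat X,U)\le h(X-\hat X)$, which needs the assumption that the relevant densities exist (so that differential entropy is well-defined); this is implicit in writing $N(X\mid U)$ in the first place. Equality in \eqref{eq:slb} holds iff $X-\hat X$ is Gaussian and independent of $(\hat X,U)$, which is the classical equality condition for Shannon's lower bound and recovers the Gaussian rate-distortion function when $X$ itself is Gaussian.
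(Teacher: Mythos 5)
Your proof is correct: it is the standard argument for Shannon's lower bound under mean-square error (the chain $r\ge h(X\mid U)-h(X-\hat X\mid \hat X,U)\ge h(X\mid U)-h(X-\hat X)$ followed by the Gaussian maximum-entropy bound under the constraint $\mathbb E\|X-\hat X\|^2\le D$), and the paper itself does not reprove this theorem but cites \cite{shannon1959coding}, so there is nothing to compare against. The only nit is in your closing aside: tightness of the max-entropy step requires $X-\hat X$ to be \emph{white} Gaussian (covariance $\tfrac{D}{n}\mathsf I$), not merely Gaussian, consistent with the paper's remark that equality holds when $X$ is conditionally white Gaussian given $U$; this does not affect the validity of the inequality you were asked to prove.
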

\begin{proof} \thmref{thm:slb} is a conditional version of Shannon's lower bound \cite{shannon1959coding}. 
Using Propositions \ref{prop:gauss} and \ref{prop:unconditioning}, we can write, for any $Y$ such that $I(X; Y | U) \leq r$, 
\begin{align}
\E{\|X - Y\|^2} &\geq \E{\| X - \E{X|Y, U}\| ^2} \label{eq:estineq}\\
&\geq N(X|Y, U)\\
&\geq N(X | U) \exp( - 2 r / n ). \label{eq:uncondapp}
\end{align}
The equality condition is verified by checking that $Y$ such that $X = Y + Z$, where $Z \sim \mathcal N(0, \sigma^2 \exp(- 2 r / n ) \mat I )$, attains equalities in \eqref{eq:estineq}--\eqref{eq:uncondapp}.
\end{proof}
 Shannon's lower bound is equal to the distortion-rate function of a white Gaussian vector with the same differential entropy as the original vector. Although beyond Gaussian $X$, Shannon's lower bound is rarely attained with equality \cite{gerrish1964nongaussian}, it is approached at high rates \cite{linkov1965evaluation}. The tightness of Shannon's lower bound at high rates is key to arguing that the bound in Theorem \ref{thm:main} can in fact be approached.

For convenience, we record the following result, which is an immediate corollary to \propref{prop:unconditioning}. 
\begin{prop}
\label{prop:Drc2}
Let $X \in \mathbb R^n$ be a random vector. The following inequality holds:  
\begin{align}
 \min_{{U} \in \mathbb R^n \colon I(X; U)  \leq s }   \ushort{\mathbb D}_{ r }( X | U ) &\geq \ushort{\mathbb D}_{ r + s } (X)\label{eq:Drc2}.
\end{align}
\end{prop}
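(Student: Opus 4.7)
The proof is a short chain-rule argument; the essential content is that any rate $s$ carried by the side information and any further rate $r$ carried by the reconstruction must together add up to at least $r+s$ bits of information about $X$, so the resulting reconstruction cannot distort $X$ by less than $\mathbb D_{r+s}(X)$.

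The plan is as follows. Fix any candidate $U$ with $I(X;U)\leq s$ and any $\hat X$ feasible in the definition of $\mathbb D_{r}(X\mid U)$, i.e.\ $I(X;\hat X\mid U)\leq r$. View $\hat X$ itself as a reconstruction of $X$ (note that $\hat X$ is allowed to depend arbitrarily on $U$ in Definition~\ref{defn:drM}, so no additional construction is needed). By the chain rule and nonnegativity of conditional mutual information,
\begin{equation}
 I(X;\hat X)\;\leq\;I(X;U,\hat X)\;=\;I(X;U)+I(X;\hat X\mid U)\;\leq\;s+r.
\end{equation}
Thus $\hat X$ is a feasible choice in the unconditional problem defining $\mathbb D_{r+s}(X)$, and hence
\begin{equation}
 \E{\|X-\hat X\|^{2}}\;\geq\;\mathbb D_{r+s}(X).
\end{equation}
Since this holds for every feasible pair $(U,\hat X)$, taking the infimum on the left gives \eqref{eq:Drc2}. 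I expect this step to be entirely routine; there is no genuine obstacle, only the bookkeeping of the chain rule.

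For the Gaussian equality claim, the plan is to exhibit a matching test channel. Assume without loss of generality (by Remark~\ref{rem:uncor}) that $X\sim\mathcal N(0,\mathsf\Sigma_X)$ with diagonal $\mathsf\Sigma_X$. Decompose $X=U+W_{1}$ with $U\pperp W_{1}$ both Gaussian and with covariances chosen by (vector) reverse waterfilling so that $I(X;U)=s$. Given $U$, the residual $W_{1}=X-U$ is Gaussian and independent of $U$, and its conditional distortion-rate function is attained by the standard Gaussian test channel at rate $r$, producing $\hat W_{1}$ with $I(W_{1};\hat W_{1})=r$. Set $\hat X=U+\hat W_{1}$. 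Because $U\pperp W_{1}$, we have $I(X;\hat X\mid U)=I(W_{1};\hat W_{1})\leq r$, and the resulting overall mean-square distortion equals the Gaussian value of $\mathbb D_{r+s}(X)$, matching the lower bound. The only subtlety in this direction is verifying that the reverse-waterfilling allocation at level $s$ and the subsequent allocation at level $r$ compose to the same waterfilling solution at level $r+s$; this is standard and follows from the additivity of the waterfilling distortion function under the chain-rule decomposition of $I(X;U,\hat X)$.
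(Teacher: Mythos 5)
Your proof is correct and follows essentially the same route as the paper's: the lower bound via the chain rule $I(X;U)+I(X;\hat X\mid U)=I(X;U,\hat X)\geq I(X;\hat X)$ and relaxation to the unconditional problem, and the Gaussian equality via a reverse-waterfilling decomposition $X=U+W_1$ whose two-stage water levels compose to the single level at rate $r+s$ (the paper carries out explicitly the water-level verification $\nu=\min\{\nu_1,\nu_2\}$ that you defer as standard).
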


\begin{remark}
If $X$ is Gaussian, then 
\begin{equation}
 \min_{{U} \in \mathbb R^n \colon I(X; U)  \leq s }  {\mathbb D}_{ r }( X | U ) = {\mathbb D}_{ r + s } (X), \label{eq:gauss}
\end{equation}
and the minimum is attained by a Gaussian $U$. For non-Gaussian $X$, $\geq$ holds in \eqref{eq:gauss}. 
\end{remark}

We are now equipped to prove our converse theorems.
\begin{proof}[Proof of \thmref{thm:seqslb}]
For any causal kernel $P_{\hat S^t \| S^t}$ induced by a code, denote the per-stage information rates
\begin{equation}
r_i \triangleq I(S^i; \hat S_i | \hat S^{i-1}). \label{eq:ri}
\end{equation}
Using $r_i \geq  I(S_i; \hat S_i | \hat S^{i-1})$, Shannon's lower bound (Theorem \ref{thm:slb}), \propref{prop:DrM} and \eqref{eq:epscale}, we lower-bound  the distortion at step $i$ as   
\begin{align}
\!\!\!\!\!\! \E{(S_i - \hat S_i)^T \mat W_i (S_i - \hat S_i) } \geq&~  {\mathbb D}_{ r_i,\, \mathsf W_i }( S_{i}| \hat S^{i - 1}) \label{eq:drlb1}\\
\geq&~  w_i\, \ushort d_i, \label{eq:drlb2}
\end{align}
where we denoted for brevity
\begin{align}
 \ushort d_i &\triangleq \ushort {\mathbb D}_{ r_i }( S_{i}| \hat S^{i - 1}), ~ i = 1, 2, \ldots, t, \\
 w_i &\triangleq (\det \mat W_i)^{\frac 1 n},
\end{align}
Next,  define $\ushort d_{0}$ satisfy $ a^2 \ushort d_{0} + n N(V) = n N(S_1)$. We establish the following recursion ($1 \leq i \leq t$):
\begin{align}
 \ushort d_i &= \ushort {\mathbb D}_{ r_i}( \mathsf A S_{i-1} + \vect{V}_{i-1}  | \hat S^{i - 1}) \label{eq:step2a0}\\
 &\geq \ushort {\mathbb D}_{  r_i  } ( \mathsf A S_{i-1} | \hat S^{i-1}) + \ushort {\mathbb D}_{  r_i }(V)  \label{eq:step2a} \\
&\geq \ushort {\mathbb D}_{  r_{i-1} + r_i}(  \mathsf A S_{i-1}  | \hat S^{i-2}) + \ushort {\mathbb D}_{  r_i}(V) \label{eq:step2aa}\\
&= \left( a^2 \ushort d_{i-1} +n N(V) \right) \exp(-2r_i / n)\label{eq:recur}
\end{align}
where \eqref{eq:step2a} is by the conditional EPI (\thmref{thm:epi}), \eqref{eq:step2aa} is due to  \eqref{eq:Drc2}.  Note that \eqref{eq:step2a0}--\eqref{eq:recur} holds for an arbitrary encoded sequence $\hat S_1, \ldots, \hat S_{i-1}$, including the optimal one.

Rewriting \eqref{eq:recur} as
\begin{align}
2 r_i / n &\geq  \log ( a^2 \ushort d_{i-1} + n N(V)) - \log \ushort d_{i}, \label{eq:recurr}
\end{align}
we deduce 
\begin{align}
&~\!\!\!\!\!\!\!\!\!\! \frac 1 n \sum_{i = 1}^{t} r_i \geq \frac 1 n \sum_{i = t_\epsilon}^{t} r_i \geq \label{eq:sumlbeps}\\
&~\!\!\!\!\!\!\!\!\!\!  \frac {1} 2 \log \frac{a^2 \ushort d_{t_{\epsilon} - 1} + n N(V)} {a^2 \ushort d_{t} + n N(V)}   + \frac 1 2 \sum_{i = t_{\epsilon}}^{t} \log \left( a^2 + \frac {N(V)}{\ushort d_i / n}\right) \!\!\!\!\!\!\!\label{eq:sumr},
\end{align}
where $t_{\epsilon}$ is defined for any $\epsilon > 0$ as the smallest number such that for all $i \geq t_{\epsilon}, ~ w_i \geq  w - \epsilon$. 
Assumption \eqref{eq:liminfass} ensures that $t_{\epsilon} < \infty$. The distortion constraint in \eqref{eq:causalRd} and the bound \eqref{eq:drlb2} imply 
\begin{align}
 \frac {w - \epsilon} t \textstyle{\sum_{i = t_{\epsilon}}^{t} } \ushort d_i &\leq 
\frac 1 t \textstyle{\sum_{i = 1}^{t}}  w_i \, \ushort d_i \leq d \label{consrt}.
\end{align}
In particular, \eqref{consrt} implies that $\ushort d_{t} \leq d \frac{t}{w - \epsilon}$, which together with $a^2 \ushort d_{t_{\epsilon} - 1} + n N(V) \geq n \min\{N(S_1), N(V)\}$  means that the first term in \eqref{eq:sumr} normalized by $t$ is bounded below by a quantity that vanishes as $t \to \infty$. 

Since the function $x \mapsto \log \left( a^2 + \frac {n N(V)}{x}\right)$ is convex and decreasing, by Jensen's inequality and \eqref{consrt} the sum in \eqref{eq:sumr} is bounded below as 
\begin{align}
&~ \sum_{i = t_{\epsilon}}^{t} \log \left( a^2 + \frac {N(V)}{\ushort d_i / n}\right) \notag\\
\geq&~ (t - t_{\epsilon})   \log \left( a^2 + \frac  {(w - \epsilon) N(V)}{d / n }  \frac{t - t_{\epsilon}}{t} \right), \label{eq:jensent}
\end{align}
Diving both sides of \eqref{eq:jensent} by $t$ and taking taking a $\lim_{t \to \infty}$ followed by a $\lim_{\epsilon \to 0}$, we obtain \eqref{eq:slbseq}.
\end{proof}

\begin{proof}[Proof of Theorem \ref{thm:seqslb+}]
We start by making two observations. 
First, putting $S_i^\prime \triangleq \mat J^{-1 }S_{i}$ and $V_i^\prime \triangleq \mat J^{-1} V_i$,  we may write
\begin{equation}
 S_{i+1}^\prime =  {\mathsf A^\prime} S_i^\prime + V_{i}^\prime, \label{eq:Stprime}
\end{equation}
Causal rate-distortion functions of $S^\infty$ and $S^{\prime \infty}$ satisfy
\begin{equation}
 \mathbb R_{S^{\infty}, \{\mat W_i\}}(d) = \mathbb R_{S^{\prime\, \infty}, \{\mat J^{T}  \mat W_i \mat J\}}(d), \label{eq:rdj}
\end{equation}
where we indicated the weight matrices in the subscript.

Second,  if $0 \preceq \mat \Pi^T \mat \Pi \preceq \mat I$,  and $\mat \Pi$ commutes with $\mat L$, then
\begin{equation}
 \mat \Pi^T\, \mat L^T \mat L\, \mat \Pi \preceq \mat L^T \mat L. \label{eq:commute}
\end{equation}

Due to \eqref{eq:rdj} we may focus on evaluating the rate-distortion function for $S^{\prime\, \infty}$.
Since $\left( \mat \Pi_{\ell} \mat \Pi_{\ell}^T \right)^2 \preceq \mat \Pi_{\ell} \mat \Pi_{\ell}^T \preceq \mat I$ and $\mat \Pi_{\ell} \mat \Pi_{\ell}^T$ commutes with $\mat L_i$, we may apply   \eqref{eq:commute} and Theorem \ref{thm:slb} to obtain    
\begin{align}
&~  \E{(S_i^\prime - \hat S_i^\prime)^T  \mat J^{T}  \mat W_i \mat J  (S_i^\prime - \hat S_i^\prime) } \label{eq:drca}\\
\geq&~ \E{(S_i^\prime - \hat S_i^\prime)^T  \mat \Pi_{\ell} \mat \Pi_{\ell}^T \mat J^{T}  \mat W_i \mat J  \mat \Pi_{\ell} \mat \Pi_{\ell}^T (S_i^\prime - \hat S_i^\prime) }\\
\geq&~  w_i^\prime \, \ushort d_i^\prime, \label{eq:drcb}
\end{align}
where
\begin{align}
\ushort d_i^\prime &\triangleq \ushort {  \mathbb D}_{ r_i }( \mat \Pi_{\ell}^T S_{i}^\prime | \hat S^{i - 1}), ~ i = 1, 2, \ldots, t.
\end{align}

Since $\mat \Pi_{\ell} \mat \Pi_{\ell}^T$ commutes with ${\mathsf A}^\prime$ and $\mat \Pi_{\ell}^T \mat \Pi_{\ell} \mat \Pi_{\ell}^T = \mat \Pi_{\ell}^T$, \begin{equation}
 \mat \Pi_{\ell}^T \mat A^\prime = \mat \Pi_{\ell}^T \mat A^\prime \mat \Pi_{\ell} \mat \Pi_{\ell}^T \label{eq:proj}.
\end{equation}
 Using \eqref{eq:proj}, \thmref{thm:epi} and \propref{prop:Drc2}, we establish 
\begin{align}
  \ushort d_i^\prime &=  \ushort {\mathbb D}_{ r_i}( \mat \Pi_{\ell}^T \mat A^\prime S_{i-1}^\prime + \mat \Pi_{\ell}^T \vect{V}_{i-1}^\prime  | \hat S^{i - 1}) \label{eq:step2b0}\\
  &\geq \ushort {\mathbb D}_{ r_i}( \mat \Pi_{\ell}^T \mat A^\prime S_{i-1}^\prime  | \hat S^{i - 1})  + \ushort {\mathbb D}_{ r_i}(  \mat \Pi_{\ell}^T \vect{V}_{i-1}^\prime )  \\
  &= \ushort {\mathbb D}_{ r_i}( \mat \Pi_{\ell}^T \mat A^\prime \mat \Pi_{\ell} \mat \Pi_{\ell}^T S_{i-1}^\prime  | \hat S^{i - 1})  + \ushort {\mathbb D}_{ r_i}(  \mat \Pi_{\ell}^T \vect{V}_{i-1}^\prime )  \\
 &=  a^{\prime\,2} \ushort {\mathbb D}_{  r_i  } ( \mat \Pi_{\ell}^T S_{i-1}^\prime | \hat S^{i-1}) + \ushort {\mathbb D}_{  r_i }( \mat \Pi_{\ell}^T \vect{V}_{i-1}^\prime )  \\
&\geq \left( a^{\prime\,2} \ushort d_{i-1}^\prime + m N(\mat \Pi_{\ell}^T V^\prime) \right) \exp\left( -2r_i/n \right)   \label{eq:recurb}
\end{align}
The rest of the proof follows that of \thmref{thm:seqslb}.
\end{proof}

\begin{proof}[Proof of \thmref{thm:seqslb++}]
It is easy to see (along the lines of \eqref{eq:sumlbeps}) that if $\mat W_i \to \mat L^T \mat L$, we may put $\mat W_i \equiv \mat L^T \mat L$ without affecting the (causal) rate-distortion function. Similarly, if $V_i = \mat K_i V_i^\prime$ and $\mat K_i \to \mat K$, we may put $V_i = \mat K V_i^\prime$. We will therefore focus on bounding the distortion with weight matrix $\mat L^T \mat L$ and with $V_i = \mat K V_i^\prime$.

Adopting the convention $S_1 \equiv V_0$, we rewrite \eqref{eq:St} as, 
\begin{align}
S_i = %
  \sum_{j = 0}^{i-1} \mat A^{i- j-1} V_{j}
 \label{eq:Siassum}
\end{align}
Fixing causal reproduction vector $\hat S^i$, for $0 \leq j \leq i - 1$, consider the random variable
\begin{align}
\tilde V_j &\triangleq V_j - \E{V_j | \hat S^i} %
= V_j - \E{V_j | \hat S_{j+1}^i}, \label{eq:causalcut}
\end{align}
where \eqref{eq:causalcut} holds because $V_j$ is independent of $\hat S^{j}$. Note that different  $\tilde V_j$'s are uncorrelated. 
Indeed, to verify that $V_j$ and $V_{j+\ell}$ are uncorrelated, note that since $V_{j+\ell} \pperp \hat S_{j + 1}^{j+\ell}$ and $V_{j+\ell} \pperp V_j$, we have 
$\E{\tilde V_{j+\ell} | V_j, \hat S_{j+1}^i} = \mathbf 0$, and thus 
\begin{align}
 \E{ \tilde V_i \tilde V_{j+\ell}^T} =  \E{\tilde V_i\, \E{\tilde V_{j+\ell}^T | V_i, \hat S_{j+1}^i}} = \mat 0. \label{eq:Vjtildeuncorr}
\end{align}

Using \eqref{eq:Siassum} and \eqref{eq:Vjtildeuncorr}, we write
\begin{align}
  &~ \E{\left(S_i - \E{ S_i | \hat S^{i}} \right)^T  \mat L^T \mat L  \left(S_i - \E{ S_i | \hat S^{i}} \right) } \notag\\
  =&~ \textstyle{\sum_{j = 1}^{i}} \E{ \tilde V_{j-1}^T  \mat A^{i - j\, T}  \mat L^T \mat L \mat A^{i - j}  \tilde V_{j-1}  } \label{eq:uncorapp}\\
 \geq&~  \textstyle{\sum_{j = 1}^{i}} \mathbb D_{\sum_{\ell = 0}^{i-j} r_{i - \ell} } ( \mat L \mat A^{i - j} V_{j-1}) \label{eq:Drig}\\
 \geq&~  w N(V^\prime)^{\frac k m} \textstyle{\sum_{j = 1}^{i} a^{2(i - j)}} \exp\left( -\frac{2}{n} \textstyle{\sum_{\ell = j}^i} r_\ell  \right)   \label{eq:scelim} \\
 \triangleq&~ w \ushort d_i^{\prime\prime} ,
\end{align}
where \eqref{eq:uncorapp} uses \eqref{eq:Siassum} and \eqref{eq:Vjtildeuncorr};  \eqref{eq:Drig} leverages \propref{prop:DrM} to minimize each term of the sum over $P_{\hat S_{j}^i | V_{j-1}}$ subject to the constraint 
\vspace{-.9em}
\begin{align}
 I(V_{j-1}; \hat S_{j}^i) \leq \sum_{\ell = j}^i r_\ell, \label{eq:risum}
\end{align}
where $r_\ell$ is the per-stage rate, as defined before in \eqref{eq:ri}; \eqref{eq:scelim} is due to \propref{prop:hlx}. To verify that constraint \eqref{eq:ri} implies \eqref{eq:risum}, we apply the independence of $V_{j-1}$ and $ \hat S^{j-1}$ and the chain rule of mutual information to write
\begin{align}
 & I(V_{j-1}; \hat S_{j}^i) =  I(V_{j-1}; \hat S_{j}^i | \hat S^{j-1})
  \leq  I(S^j; \hat S_{j}^i | \hat S^{j-1})\\
 &= \sum_{\ell = j}^{i} I(S^{j}; \hat S_{\ell} | \hat S^{\ell - 1} ) 
  \leq \sum_{\ell = j}^{i} I(S^\ell; \hat S_{\ell} | \hat S^{\ell - 1} ) 
  = \sum_{\ell = j}^i r_\ell.\notag
\end{align}

Finally, observe using \eqref{eq:scelim} that $\ushort d_{i+1}^{\prime\prime}$ and $\ushort d_i^{\prime\prime}$ are tied in a recursive relationship akin to that in \eqref{eq:recur}:
\begin{align}
\ushort d_{i+1}^{\prime\prime} &= \exp\left(-2 r_{i + 1} / n\right) \left(a^2  \ushort d_i^{\prime\prime} +  N(V^\prime)^{\frac k m} \right).
\end{align}
The rest of the proof follows along the lines of \eqref{eq:recurr}--\eqref{eq:jensent}. 
\end{proof}

\begin{proof}[Proof of \thmref{thm:main}]
If $\mat A$ is rank-deficient, the right side of \eqref{eq:main} is $-\infty$, and there is nothing to prove. Assume $\det \mat A \neq 0$. Further, the case $\rank \mathsf B < n$ implies $\det \mat M = 0$ and is covered by \thmref{thm:mainu}. Assume $\rank \mat B = n$.
According to \corref{cor:link}, \thmref{thm:seqslb} and \eqref{eq:controlledratecost}, it suffices to lower bound $\mathbb R_{S^\infty}(d)$ with weight matrices $\mat W_i$ in \eqref{eq:Wimat}.  Since $\mathsf S > 0$ (e.g. \cite{bertsekas1995dynamic}), it follows that $ \mat A^T \mat M \mat A > 0$ and thus \eqref{eq:liminfass} is satisfied with $w = \det (\mat A^T \mat M \mat A)$.  Therefore, \thmref{thm:seqslb} applies, and the result of \thmref{thm:main} is immediate. 
\end{proof}

\begin{proof}
[Proof of \thmref{thm:mainu}]
According to \corref{cor:link} and \eqref{eq:controlledratecost}, it suffices to lower-bound $\mathbb R_{S^\infty}(d)$ with weight matrices $\mat W_i$ in \eqref{eq:Wimat}. 
Consider first the case $\rank \mat B = n$. Let $\mathsf \Lambda_i$ be a diagonal matrix such that $0 \preceq \mat \Lambda_i \preceq \mathsf J^{T} \mathsf M_i \mathsf J$.
The weight matrices $\mathsf J^{T} \mat A^T \mathsf M_i  \mat A \mathsf J = \mat A^{\prime\, T} \mat M_i^\prime \mat A^\prime \succeq \mat A^{\prime\, T} \mat \Lambda_i \mat A^\prime$, and thus the assumption of \thmref{thm:seqslb+} is satisfied with $\mat V_i = \mat \Lambda_i^{\frac 1 2} \mat A^\prime$, and \eqref{eq:mainpi} follows. %

If $\rank \mathsf B < n$, $\mathsf M$ is singular,  and the bound in \eqref{eq:mainpi} reduces to simply 
\begin{equation}
 \mathbb R(b) \geq \ell \log a^\prime. \label{eq:lbnob}
\end{equation}
To show \eqref{eq:lbnob}, fix some $\epsilon > 0$. Without loss of generality, we assume $\rank \mathsf B = m < n$, and we augment~$\mathsf B$ as follows:%
\begin{equation}
\mathsf B_\epsilon \triangleq 
 \begin{bmatrix}
 \mathsf B & \epsilon \tilde {\mathsf B} 
 \end{bmatrix},
 \label{eq:Beps}
\end{equation}
where $(n - m) \times n$ matrix $\tilde{\mathsf B}$ is chosen so that the columns of $\mathsf B_\epsilon$ span $\mathbb R^n$. 
We also augment the $m \times m$ matrix $\mathsf R$ in~\eqref{eq:lqr}:
\begin{equation}
\tilde{\mathsf R} \triangleq   
\begin{bmatrix}
 \mathsf R & \mathsf 0\\
  \mathsf 0 & \mathsf I_{n - m}
 \end{bmatrix}.
 \label{eq:Reps}
\end{equation}
Consider the augmented system parameterized by $\epsilon$:
\begin{align}
\vect{X}_{i+1} &=  \mathsf A \vect{X}_i + \mathsf B_\epsilon \tilde{U}_i + \vect{V}_i, \label{eq:system1aug}
\end{align}
where control inputs $\tilde{U}_i$ are $n$-dimensional.  
The augmented system in \eqref{eq:system1aug} achieves the same or smaller quadratic cost as the system in \eqref{eq:system1}, because we can always let 
$
 \tilde{U}_i = 
 \begin{bmatrix}
 U_i & 
 \mathbf 0
  \end{bmatrix}^T
$
to equalize the costs. Therefore, 
\begin{align}
\mathbb R(b) &\geq \sup_{\epsilon > 0}{\mathbb R}_{\epsilon}(b)  \label{eq:rb2}, 
\end{align}
where ${\mathbb R}_{\epsilon}(b)$ denotes the rate-cost function for the system in \eqref{eq:system1aug} with parameter $\epsilon$ in \eqref{eq:Beps}, \eqref{eq:Reps}. In particular, since \eqref{eq:lbnob} holds for the augmented system it must also hold for the original system. 
\end{proof}

\begin{proof}[Proof of \thmref{thm:mainm}]
 According to \propref{prop:DrM}, the causal rate-distortion function of the uncontrolled process $\{S_i\}$ with weight matrices $\mat W_i = \mat A^T \mat M_i \mat A$ is equal to that of $\{\mat A S_i\}$ with weight matrices $\{\mat M_i\}$. Putting $S_i^{\prime\prime} \triangleq \mat A S_i$ and noticing that $S_{i+1}^{\prime\prime} = \mat A S_i^{\prime\prime} + \mat A V_i$, we apply  \thmref{thm:seqslb++} to conclude
\begin{equation}
 \mathbb R_{S^{\prime \prime \infty}}(d)  \geq \frac m 2 \log \left( a^{2} + \frac {\mu N(\mat A V)^{\frac n m}}{d/m}\right),
\end{equation}
and \thmref{thm:mainm} follows via \corref{cor:link} and \eqref{eq:controlledratecost}.
\end{proof}

\begin{proof}[Proof of \thmref{thm:mainpo}]
We assume $\rank \mat B = \rank \mat C = n$. The more general case is considered in \thmref{thm:mainpou}.

According to \eqref{eq:kalman} and \eqref{eq:covinno}, $\{\hat X_i\}$ is Gauss-Markov process, whose additive noise $\mathsf K_{i} \tilde Y_i$ has covariance  matrix
$\mat N_i \triangleq \mathsf K_{i} \left( \mathsf C  \mathsf P_{i|i-1}  \mathsf C^T + \mathsf \Sigma_W \right) \mat K_i^T$. By \thmref{thm:seqslb} and \eqref{eq:controlledratecost},
\begin{equation}
 \mathbb R_{\hat X^\infty\| \mathcal D U^{\infty}}(d) \geq \log |\det \mat A| + \frac n 2 \log \left(1  + \frac {\left(\det \mat N \mat M\right)^{\frac 1 n} }{d/n}\right), \notag
\end{equation}
and \thmref{thm:mainpo} follows immediately via \corref{cor:link}.  
\end{proof}

\begin{proof}[Proof of \thmref{thm:mainpou}]
The proof is similar to that of \thmref{thm:mainu} and uses \thmref{thm:seqslb+} to lower-bound $\mathbb R_{\hat X^\infty\| \mathcal D U^{\infty}}(d)$.  
\end{proof}

\begin{proof}[Proof of \thmref{thm:mainpom}]
The proof is similar to that of \thmref{thm:mainm}, and uses \thmref{thm:seqslb++} to lower-bound $\mathbb R_{\hat X^\infty\| \mathcal D U^{\infty}}(d)$. 
\end{proof}

\section{Achievability theorems: tools and proofs}
\label{sec:a}

In this section, we will prove Theorems \ref{thm:maina} and \ref{thm:mainpoa}.  
 
 Our achievability scheme employs lattice quantization. 
A  lattice $\mathcal C$ in $\mathbb R^n$ is a discrete set of points that is closed under reflection and addition.
The nearest-neighbor quantizer is the mapping $\mathsf q_{\mathcal C} \colon \mathbb R^n \mapsto \mathcal C$ defined by
\begin{equation}
\mathsf q_{\mathcal C} (x) \triangleq \argmin_{c \in \mathcal C} \|x - c\| \label{eq:q}.
\end{equation}
{\it Covering efficiency} of lattice $\cC$ is measured by 
\begin{align}
\rho_{\cC} &\triangleq \left( \frac { {B}_{\cC} }{V_{\cC} } \right)^{\frac 1 n} \label{eq:rho}, 
\end{align}
where $V_{\mathcal C}$ the volume of the Voronoi cells of lattice $\mathcal C$:
\begin{align}
 V_{\mathcal C} &\triangleq \mathrm{Vol} \left( \left\{ x \in \mathbb R^n \colon \mathsf q_{\mathcal C} (x) = c \right\} \right),
 \end{align}
where arbitrary $c \in \mathcal C$, and ${B}_{\cC}$ is the volume of a ball whose radius is equal to that of the Voronoi cells of $\mathcal C$. The radius of  ${B}_{\cC}$ is called {\it covering radius} of lattice $\cC$.
 By definition, $
 \rho_{\cC} \geq 1$,
and the closer $\rho_{\cC}$ is to $1$ the more sphere-like the Voronoi cells of $\cC$ are and the better lattice $\cC$ is for covering.

\begin{proof}[Proof of \thmref{thm:slba}]
The proof analyses a DPCM scheme. First, we describe how the codebook is generated, then we describe the operation of the encoder and the decoder, and then we proceed to the analysis of the scheme. 

{\it Codebook design.}  To maximize covering efficiency, we use the best known $n$-dimensional lattice quantizer $\mathsf q = \mathsf q_{\mathcal C^n}$ scaled so that its covering radius is $\leq \sqrt d$. 

{\it Encoder.} 
Upon observing $S_i$, the encoder computes the state innovation $\tilde S_i$ recursively using the formula 
\begin{equation}
\tilde S_i \triangleq S_i - \mat A \hat S_{i - 1}, \label{eq:Stilde}
\end{equation}
where $\hat S_i$ is the decoder's state estimate at time $i$ (put $\hat S_{0} \triangleq 0$). The encoder transmits the index of 
\begin{equation}
Q_i \triangleq  \mathsf q(\mat W_i^{\frac 1 2} \tilde S_i). 
\end{equation}

{\it Decoder.} 
The decoder recovers the lattice cell identified by the encoder, and forms its state estimate as
\begin{align}
 \hat S_{i} &= \mat A \hat S_{i-1} + \hat{\tilde S}_{i}, \\
 \hat{\tilde S}_i &\triangleq \mat W_i^{-\frac 1 2} Q_i.
\end{align}

{\it Analysis.} 
The distortion at step $i$ is given by 
\begin{align}
\left(  S_i - \hat S_i \right)^T
 \mat W_i 
\left(S_i -  \hat S_i \right) \notag
=&~ 
\left(  \tilde S_i - \hat{\tilde S}_i  \right)^T
\mat W_i
 \left(  \tilde S_i - \hat{\tilde S}_i  \right) 
\\
 =& ~
\left\|  \mat W_i^{\frac 1 2} \tilde S_i -  Q_i \right\|^2
 \leq  d.  \label{eq:amseu}
\end{align}
It remains to upper-bound the entropy of $Q^t$. Since $H(Q^t) \leq \sum_{i = 1}^t H(Q_i)$, it suffices to bound the unconditional entropy of $Q_i$.
First, we establish that $\mat W_i^{\frac 1 2} \tilde S_i$ has a regular density.
Using the assumption that $V_i$ has a $(c_0, c_1)$-regular density, it's easy to see that $\mat W_i^{\frac 1 2} V_i$ has $\left( w_i^{-1}c_0, w_i^{-1} c_1\right)$-regular density, where $w_i$ is the minimum eigenvalue of $\mat W_i$. 
Furthermore,  similar to \eqref{eq:amseu},
\begin{equation}
 \left(  S_i - \hat S_i \right)^T
\mat A^T \mat W_i \mat A
\left( S_i -  \hat S_i \right) 
\leq a_i d, \label{eq:amseua}
\end{equation}  
 where
 $a_i$ is the following operator norm of $\mathsf A$: 
\begin{equation}
a_i \triangleq \sup_{z \neq \mathbf 0}  \frac{ z ^T {\mathsf A}^T
 \mat W_i
\mathsf A  z  }{ z^T \mat W_i z}. \label{eq:ainorm}
\end{equation}
From \eqref{eq:Stilde} and \eqref{eq:St},
\begin{equation}
 \tilde S_i = \mathsf A(S_{i-1} - \hat S_{i-1}) + V_{i-1}, \label{eq:sinno}
\end{equation}
and it follows via \cite[Prop. 3]{polyanskiy2015wasserstein} that $\mat W_i^{\frac 1 2} \tilde S_i$ has 
$(w_i^{-1}(c_0 + a_i^{\frac 1 2} d^{\frac 1 2}  c_1), w_i^{-1} c_1)$-regular density. 

Combining \eqref{eq:amseua} and \eqref{eq:sinno} yields
\begin{align}
 \Var{ \mat W_i^{\frac 1 2} \tilde S_i} &\leq a_i d + v_i, 
\end{align}
where we denoted for brevity
\begin{align}
v_i \triangleq  \tr\left(\mathsf \Sigma_V \mat W_i \right), 
\end{align}

Now, \cite[Th. 8]{kostina2016lowd} implies that the entropy of $Q_i$ satisfies:
\begin{align}
 &~ H\left( Q_i \right) \leq \min_{ \tilde d \leq d }\Bigg\{ \frac n 2 \log \frac{N(\mat W_i^{\frac 1 2} \tilde S_i )}{{\tilde d}/n} 
 \notag\\
 &
 +  2 \frac{{\tilde d}^{\frac 1 2}}{w_i} \log e 
 \cdot \left( c_1 \left( a_i d + v_i \right)^{\frac 1 2} + c_0 + c_1 \left( 1 + a_i^{\frac 1 2} \right){\tilde d}^{\frac 1 2} \right) \Bigg\}  \notag\\
 &+ \alpha_n + n \log \rho_{\mathcal C_n}, \label{eq:Hqws}
 \end{align}
 where $\alpha_n$ and $n \log \rho_{\mathcal C_n}$ are of order $\bigo{\log n}$.  

To estimate the entropy power of $\mat W_i^{\frac 1 2} \tilde S_i$, we use \eqref{eq:amseua} and \eqref{eq:sinno} to bound the Wasserstein distance between $\mat W_i^{\frac 1 2} \tilde S_i$ and $\mat W_i^{\frac 1 2} V_{i-1}$, so that \cite[Prop. 1]{polyanskiy2015wasserstein} applies to yield:
\begin{align}
 h( \mat W_i^{\frac 1 2} \tilde S_i ) &\leq h(\mat W_i^{\frac 1 2} V_{i-1}) + \log e \frac{ \left( a_i d_i\right)^{\frac 1 2}}{w_i}  \label{eq:hqws}
 \\
 &\phantom{=}
 \cdot  \left(\frac {c_1}{2} v_i^{\frac 1 2} + \frac {c_1} 2 \left( a_i d + v_i\right)^{\frac 1 2} + c_0\right).   \notag
\end{align}
Combining \eqref{eq:Hqws} and \eqref{eq:hqws}, we conclude that
\begin{align}
\label{eq:boundHQi}
 &~ H\left( Q_i \right) \leq \\
 &~ \min_{ \tilde d \leq d } \Bigg\{ \frac n 2 \log \frac{N( \mat W_i^{\frac 1 2} V_{i-1})}{\tilde d/n} 
 +  \beta_i(\tilde d ) \Bigg\} + \alpha_n + n \log \rho_{\mathcal C_n} , \notag
 \end{align}
where  
$\beta_i(d) = \bigo{d^{\frac 1 2} }$ is given by
\begin{align}
\beta_i(d) \triangleq&~  
\frac{d^{\frac 1 2}}{w_i} \log e \Bigg( \frac 1 2 c_1a_i^{\frac 1 2}  v_i^{\frac 1 2} + c_0\left( 2 + a_i^{\frac 1 2}\right) \\
&+ c_1 \left( 2 + \frac{a_i^{\frac 1 2}  }{2} \right) \left( a_i d + v_i \right)^{\frac 1 2} + 2 c_1 \left( 1 + a_i^{\frac 1 2} \right) d^{\frac 1 2}\Bigg). \notag
\end{align}
Recalling \eqref{eq:epscale} and using the resulting bound \eqref{eq:boundHQi} to bound $\lim_{t \to \infty} \frac 1 t \sum_{i = 1}^t H(Q_i)$, we obtain the statement of \thmref{thm:slba}. 
\end{proof}

\begin{proof}[Proof of Theorem \ref{thm:maina}] Due to \corref{cor:link} and \eqref{eq:controlledhcost}, it suffices to bound the entropy-distortion function of the process \eqref{eq:St}. Such a bound is provided in \thmref{thm:slba}.
\end{proof}

\begin{proof}[Proof of Theorem \ref{thm:mainpoa}]
 Due to \corref{cor:link}, it suffices to bound the conditional entropy-distortion function of the Kalman filter estimates process in \eqref{eq:kalman}. Such a bound follows from \eqref{eq:controlledhcost} and \thmref{thm:slba}.
\end{proof}

\section{Conclusion}
We studied the fundamental tradeoff between the communication requirements and the attainable quadratic cost in fully and partially observed linear stochastic control systems. We introduced the rate-cost function in Definition~\ref{defn:rb}, and showed sharp lower bounds to it in Theorems \ref{thm:main}, \ref{thm:mainu}, \ref{thm:mainm} (fully observed system) and Theorems \ref{thm:mainpo}, \ref{thm:mainpou}, \ref{thm:mainpom} (partially observed system). The achievability results in \thmref{thm:maina} (fully observed system) and \thmref{thm:mainpoa} (partially observed system) show that the converse can be approached, in the high rate / low cost regime, by a simple variable-rate lattice-based scheme in which only the quantized value of the innovation is transmitted.  Via the separation principle, the same conclusions hold for causal compression of Markov sources: a converse, which may be viewed as a causal counterpart of Shannon's lower bound, is stated in \thmref{thm:seqslb}, and a matching achievability in \thmref{thm:slba}.

Extending the analysis of the partially observed case to non-Gaussian noises would be of interest. 
It also remains an open question whether the converse bound in Theorem \ref{thm:main} can be approached by fixed-rate quantization, or over noisy channels.  Finally, it would be interesting to see whether using non-lattice quantizers can help to narrow down the gap in \vlong{Figures \ref{fig:rdg}, \ref{fig:rdl} and \ref{fig:rdpo}}{\figref{fig:rdl}}.

\section{Acknowledgement}
The authors acknowledge many stimulating discussions with Dr. Anatoly Khina and his helpful comments on the earlier versions of the manuscript. The authors are also grateful to Ayush Pandey, who  generated the \vlong{plots in Figures \ref{fig:rdg}, \ref{fig:rdl} and \ref{fig:rdpo}}{plot in \figref{fig:rdl}}.

\bibliographystyle{IEEEtran}
\bibliography{../../Bibliography/rateDistortion,../../Bibliography/control,../../Bibliography/vk}

\begin{IEEEbiography}
    [{\includegraphics[width=1in,height=1.25in,clip,keepaspectratio]{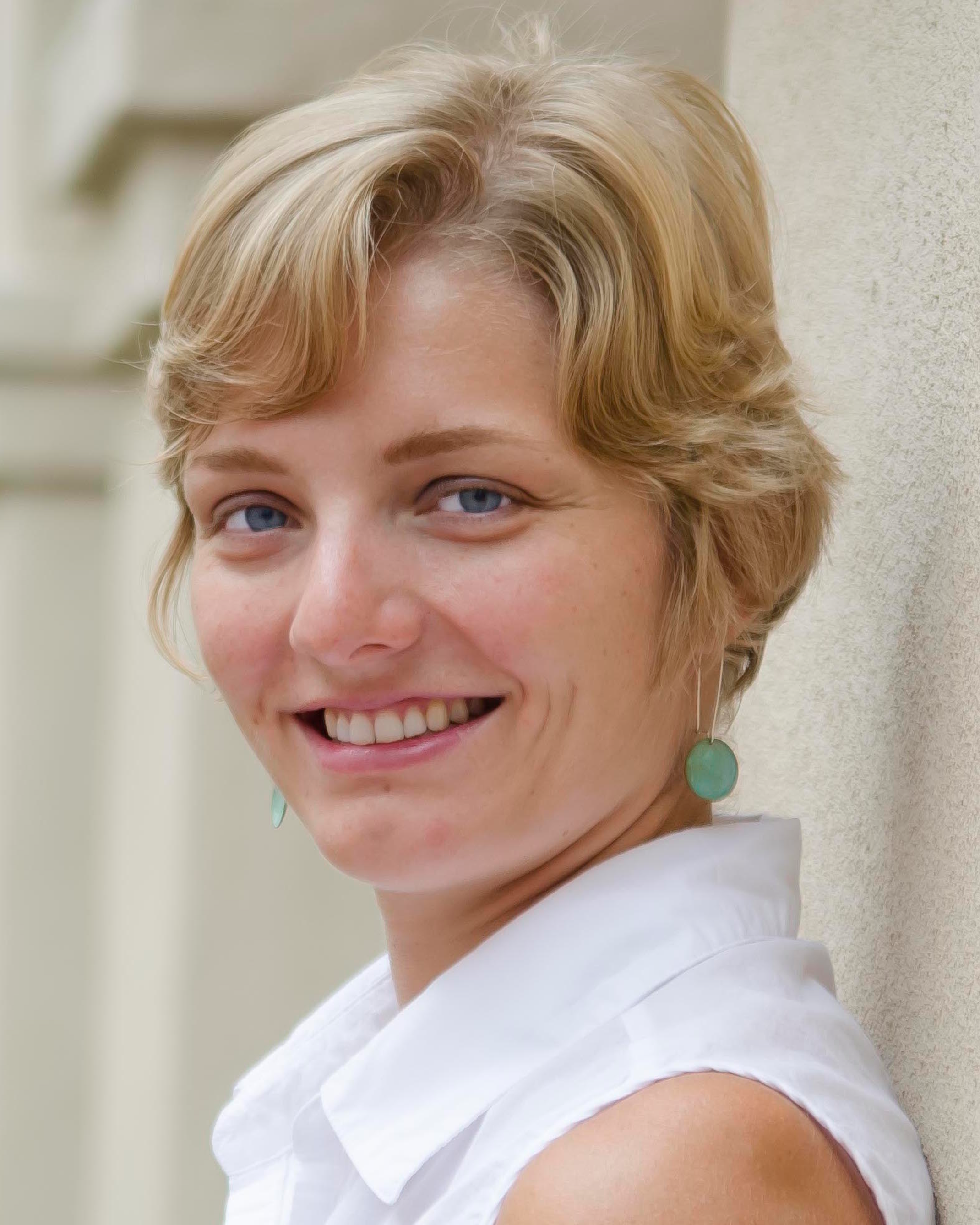}}]{Victoria Kostina}
joined Caltech as an Assistant Professor of Electrical Engineering in the fall of 2014. She holds a Bachelor's degree from Moscow institute of Physics and Technology (2004), where she was affiliated with the Institute for Information Transmission Problems of the Russian Academy of Sciences, a Master's degree from University of Ottawa (2006), and a PhD from Princeton University (2013). She received the Natural Sciences and Engineering Research Council of Canada postgraduate scholarship (2009--2012), the Princeton Electrical Engineering Best Dissertation Award (2013), the Simons-Berkeley research fellowship (2015) and the NSF CAREER award (2017).  Kostina's research spans information theory, coding, control and communications. 
\end{IEEEbiography}

\begin{IEEEbiography}
    [{\includegraphics[width=1in,height=1.25in,clip,keepaspectratio]{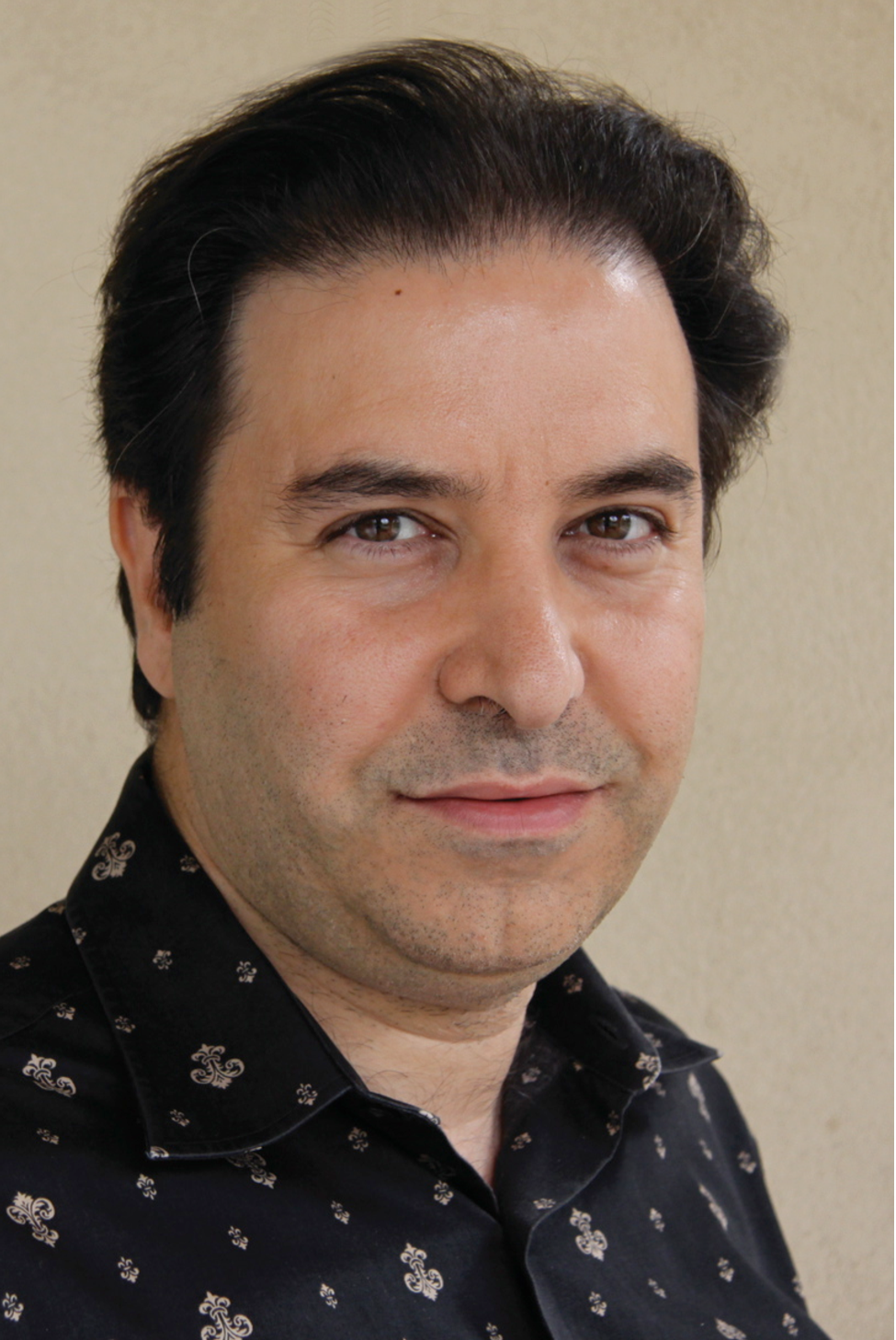}}]{Babak Hassibi}
 was born in Tehran, Iran, in 1967. He received the B.S. degree from the University of Tehran in 1989, and the M.S. and Ph.D. degrees from Stanford University in 1993 and 1996, respectively, all in electrical engineering. 

He has been with the California Institute of Technology since January 2001, where he is currently the Mose and Lilian S. Bohn Professor of Electrical Engineering. From 2013-2016 he was the Gordon M. Binder/Amgen Professor of Electrical Engineering and from 2008-2015 he was Executive Officer of Electrical Engineering, as well as Associate Director of Information Science and Technology. From October 1996 to October 1998 he was a research associate at the Information Systems Laboratory, Stanford University, and from November 1998 to December 2000 he was a Member of the Technical Staff in the Mathematical Sciences Research Center at Bell Laboratories, Murray Hill, NJ. He has also held short-term appointments at Ricoh California Research Center, the Indian Institute of Science, and Linkoping University, Sweden. His research interests include communications and information theory, control and network science, and signal processing and machine learning. He is the coauthor of the books (both with A.H.~Sayed and T.~Kailath) {\em Indefinite Quadratic Estimation and Control: A Unified Approach to H$^2$ and H$^{\infty}$ Theories} (New York: SIAM, 1999) and {\em Linear Estimation} (Englewood Cliffs, NJ: Prentice Hall, 2000). He is a recipient of an Alborz Foundation Fellowship, the 1999 O. Hugo Schuck best paper award of the American Automatic Control Council (with H.~Hindi and S.P.~Boyd), the 2002 National ScienceFoundation Career Award, the 2002 Okawa Foundation Research Grant for Information and Telecommunications, the 2003 David and Lucille Packard Fellowship for Science and Engineering,  the 2003 Presidential Early Career Award for Scientists and Engineers (PECASE), and the 2009 Al-Marai Award for Innovative Research in Communications, and was a participant in the 2004 National Academy of Engineering ``Frontiers in Engineering''program. 

He has been a Guest Editor for the IEEE Transactions on Information Theory special issue on ``space-time transmission, reception, coding and signal processing'' was an Associate Editor for Communications of the IEEE Transactions on Information Theory during 2004-2006, and is currently an Editor for the Journal ``Foundations and Trends in Information and Communication'' and for the IEEE Transactions on Network Science and Engineering. He is an IEEE Information Theory Society Distinguished Lecturer for 2016-2017.
\end{IEEEbiography}

\appendix

This appendix summarizes the tools used in the proofs of \secref{sec:a}. The first is a tool to bound the difference between the differential entropies of two random vectors whose distributions are close to each other. 
\begin{prop}[{\!\cite[Prop. 1]{polyanskiy2015wasserstein}}]
Let $X$ and $Y$ be random vectors with finite second moments. If the density of $X$ is $(c_0, ~c_1)$-regular, then 
\begin{align}
 h(Y) - h(X) 
 &\leq \log e \bigg(\frac {c_1}{2} \left(\E{\| X\|^2}\right)^{\frac 1 2} \notag\\
 &+ \frac {c_1} 2 \left( \E{\| Y\|^2} \right)^{\frac 1 2} + c_0 \bigg)  W(X, Y),
\end{align}
where $W(X, Y)$ is the Wasserstein distance between the distributions of $X$ and $Y$: 
\begin{equation}
 W(X, Y) \triangleq \inf \left( \E{\| X - Y\|^2}\right)^{\frac 1 2},
\end{equation}
where the infimum is over all joint distributions $P_{XY}$ whose marginals are $P_X$ and $P_Y$. 
\label{aprop:pwu}
\end{prop}
The next result helps us establish that the random vectors encountered at each step of the control system operation have regular densities. 
\begin{prop}[{\!\cite[Prop. 3]{polyanskiy2015wasserstein}}]
 If the density of $Z$ is $(c_0, ~c_1)$-regular and $B \pperp Z$, $\|B\| \leq b$ a.s., then that of $B + Z$ is $(c_0 + c_1 b,~ c_1)$-regular.
 \label{aprop:reg}
\end{prop}

 The next result gives an upper bound to the output entropy of lattice quantizers. 
\begin{thm}[Corollary to {\!\cite[Th. 8]{kostina2016lowd}}]
\label{thm:Hqu}
  Suppose that $f_{X}$ is $(c_0, c_1)$-regular. There exists a lattice quantizer $\mathsf q = \mathsf q_{\cC^n}$ such that 
\begin{equation}
 \sup_{x \in \mathbb R^n} \|x - \mathsf q(x)\|^2 \leq d, \label{aeq:covradius}
\end{equation}
and 
\begin{align}
  H\left( q(X) \right) &\leq \min_{\tilde d \leq d} \bigg(
\frac n 2 \log \frac{ N(X)}{\tilde d / n} + \alpha_n + n \log \rho_{\mathcal C_n} \label{aeq:hub} \\
&~+ 2  \tilde d^{\frac1 2} \log e ( c_1 \sqrt{\Var{X}} + c_0 + c_1 \tilde d^{\frac 1 2}) \bigg) , \notag
\end{align}
where $\rho_{\cC_n}$ is the lattice covering efficiency defined in \eqref{eq:rho}, 
\begin{equation}
\alpha_n \triangleq \frac n 2 \log \frac{2 e  }{n} + \log \Gamma \left( \frac n 2 + 1\right),\label{aeq:alphan}
\end{equation}
and $\Gamma(\cdot)$ is the Gamma function.
\end{thm}

The leading term in \eqref{aeq:hub} is Shannon's lower bound (the functional inverse of \eqref{eq:slb}). The contribution of the remaining terms becomes negligible if $n$ is large and $d$ is small.  
Indeed, by Stirling's approximation, as $n \to \infty$, 
\begin{equation}
\alpha_n = \frac 1 2 \log n + \bigo{1}.  
\end{equation}
On the other hand, Rogers \citeapx[Theorem 5.9]{rogers1964packing} showed that for each $n \geq 3$, there exists an $n$-dimensional lattice $\cC_n$  with covering efficiency
\begin{equation}
n \log \rho_{\cC_n} \leq \log_2 \sqrt{2 \pi e} \left( \log n + \log \log n + c\right), \label{aeq:rogers}
\end{equation}
where $c$ is a constant.
Therefore, the terms $n \log \rho_{\mathcal C_n}$ and $\alpha_n$ are logarithmic in $n$, so in high dimension their contribution becomes negligible compared to the first term in \eqref{aeq:hub}. 

In low dimension, the contribution of these terms can be computed as follows. The thinnest lattice covering is known in dimensions 1 to 23 is Voronoi's principal lattice of the first type \citeapx{conway2013sphere} $\left(A_n^*\right)$, which has covering efficiency
\begin{equation}
 \rho_{A_n^*} = \frac{\pi^{\frac 1 2}(n + 1)^{\frac 1 {2n}} }{\left(\Gamma\left( \frac n 2 + 1\right)\right)^\frac{1}{n}}  \sqrt{ \frac {n(n + 2)}{12 (n + 1)} }.
\end{equation}
$A_n^*$ is proven to be the thinnest lattice covering possible in dimensions $n = 1, 2, \ldots, 5$. For $A_n^*$-based lattice quantizer, we can compute the constant appearing in \eqref{aeq:hub} as
\begin{equation}
 \log \rho_{A_n^*} + \frac {\alpha_n} n  = \frac 1 2 \log { \frac {2 \pi e(n + 2)}{12 (n + 1)^{1 - {\frac 1 n}}} }.
\end{equation}

\bibliographystyleapx{IEEEtran}
\bibliographyapx{../../Bibliography/rateDistortion,../../Bibliography/control,../../Bibliography/vk}

\end{document}